\newtheorem{theorem}{Theorem}
\newtheorem{lemma}[theorem]{Lemma}
\newtheorem{example}[theorem]{Example}
\newtheorem{proposition}[theorem]{Proposition}
\newtheorem{defn}{Definition}
\newtheorem{assumption}{Assumption}
\newtheorem{remark}{Remark}
\newcommand{\E}{\mathbb{E}}
\newcommand{\FD}{\mathrm{FD}}
\newcommand{\TD}{\mathrm{TD}}
\newcommand{\setS}{\mathcal{S}}
\newcommand{\cc}{\mathcal{C}}
\newcommand{\R}{\mathbb{R}}
\newcommand{\id}{\mathrm{Id}}
\newcommand{\Lp}{\mathcal{L}}
\newcommand{\rk}{\mathrm{rank}}
\newcommand{\hatx}{\hat{x}}
\newcommand{\Proj}{\mathcal{P}}
\newcommand{\cg}{\mathcal{G}}
\newcommand{\tr}{\mathrm{trace}}
\newcommand{\xstar}{x^\star}
\def\thm@space@setup{%
  \thm@preskip=0cm plus 1cm minus 2cm
  \thm@postskip=\thm@preskip 
}
\DeclareMathOperator*{\argmax}{argmax} 
\DeclareMathOperator*{\argmin}{argmin} 
\title{Model Selection over Partially Ordered Sets} 
\author[$a$]{Armeen Taeb}
\author[$b$]{Peter B\"uhlmann}
\author[$c$]{Venkat Chandrasekaran}
\affil[$a$]{Department of Statistics, University of Washington}
\affil[$b$]{Seminar for Statistics, ETH Z\"urich}
\affil[c]{Departments of Computing and Mathematical Sciences and of Electrical Engineering, California Institute of Technology}
\date{}
\begin{document}
\maketitle
\begin{abstract}
In problems such as variable selection and graph estimation, models are characterized by Boolean logical structure such as presence or absence of a variable or an edge.  Consequently, false positive error or false negative error can be specified as the number of variables/edges that are incorrectly included or excluded in an estimated model.  However, there are several other problems such as ranking, clustering, and causal inference in which the associated model classes do not admit transparent notions of false positive and false negative errors due to the lack of an underlying Boolean logical structure.  In this paper, we present a generic approach to endow a collection of models with partial order structure, which leads to a hierarchical organization of model classes as well as natural analogs of false positive and false negative errors.  We describe model selection procedures that provide false positive error control in our general setting and we illustrate their utility with numerical experiments.
\medbreak
\noindent Keywords: combinatorics $|$ greedy algorithms $|$ multiple testing $|$ stability
\end{abstract}
\section{Introduction}
{I}n data-driven approaches to scientific discovery, one is commonly faced with the problem of model selection.  Popular examples include variable selection 
(which covariates predict a response?) and graph estimation (which pairs of variables have nonzero correlation or partial correlation?).  As exemplified by these two problems, a common feature of most model selection problems in the literature is that the collection of models is organized according to some type of Boolean logical structure, such as the presence versus absence of a variable or an edge.  A consequence of such structure is that model complexity can be conveniently specified as the number of attributes (variables or edges) in a model, while false positive error or false negative error corresponds to the number of attributes that are incorrectly included or excluded in the model.

In many contemporary applications, models represent a far richer range of phenomena that are not conveniently characterized via Boolean logical structure. As a first example, suppose we are given observations of covariate-response pairs and we wish to order the covariates based on how well they predict a response; the collection of models is given by the set of rankings of the covariates.  Second, consider a clustering problem in which we are given observations of a collection of variables and the goal is to group them according to some measure of affinity, with the number of groups and the number of variables assigned to each group not known a priori; here the model class is given by the collection of all possible partitions of the set of variables. Third, suppose we wish to identify causal relations underlying a collection of variables; the model class is the set of completed partially directed acyclic graphs. Finally, consider the blind source separation problem in which we are given a signal expressed as an additive combination of source signals and our objective is to identify the constituent sources, without prior information about the number of sources or their content; here the model class is the collection of all possible linearly independent subsets of vectors. 

In these preceding examples, we lack a systematic definition of model complexity, false positive error, and false negative error due to the absence of Boolean logical structure in each collection of models. In particular, in the first three examples, valid models are characterized by structural properties such as transitivity, set partitioning, and graph acyclicity, respectively; these properties are global in nature and are not concisely modeled via separable and local characteristics such as an attribute (a variable or edge) being included in a model independently of other attributes.  In the fourth example of blind source separation, false positive and false negative errors should not be defined merely via the inclusion or exclusion of true source vectors in an estimated set but should instead consider the degree of alignment between the estimated and true sources, which again speaks to the lack of a natural Boolean logical structure underlying the associated model class.

As a concrete illustration of the inappropriateness of Boolean logical structure for clustering, consider three items $a, b, c$, with the null model given by the three clusters $\{a\}, \{b\}, \{c\}$, the true model by the two clusters $\{a,b\}, \{c\}$, and the estimated model by the single cluster $\{a,b,c\}$.  An incorrect perspective grounded in Boolean logical structure suggests a false positive error of two, with the `false discoveries' being that $c$ is in the same cluster as $a$ and as $b$.  However, accounting for set partition structure yields the more accurate false positive error value of one as $a$ and $b$ are in the same cluster in the true and estimated models; hence, including $c$ in the same cluster as $\{a,b\}$ should only incur one false discovery.  

While the preceding four problems have been studied extensively, the associated methods do not systematically control false positive error as this quantity is not formally defined.  Selection procedures that yield models with small false positive error play an important role in data-driven methods for gathering evidence, rooted in the empirical philosophy and statistical testing foundations of falsification of theories and hypotheses \cite{Neyman1928ONTU,Fisher,Poper}.

\subsection{Our Contributions}
We begin in Section~\ref{sec:framework} by describing how collections of models may be endowed with the structure of a partially ordered set (poset).  Posets are relations that satisfy reflexivity, transitivity, and antisymmetry, and they facilitate a hierarchical organization of a set of models that leads to a natural definition of model complexity.  Building on this framework, we develop an axiomatic approach to defining functions over poset element pairs for evaluating similarity.  This yields generalizations of well-known measures such as family-wise error and false discovery rate to an array of model selection problems in the context of ranking, causal inference, multiple change-point estimation, clustering, multi-sample testing, and blind source separation.  In Section~\ref{sec:false_discovery_control}, we describe two generic model selection procedures that search over poset elements in a greedy fashion and that provide false discovery control in discrete model posets. 
The first method is based on subsampling and model averaging and it builds on the idea of stability selection \cite{Meins2010Stability,Shah2013Stability} for the variable selection problem, while the second method considers a sequence of hypothesis tests between models of growing complexity.  With both these methods, the combinatorial properties of a model poset play a prominent role in determining computational and statistical efficiency.  Proofs of the theorems of Section~\ref{sec:false_discovery_control} are provided in Section~\ref{sec:proofs}.  In Section~\ref{sec:experiments} we provide numerical illustration via experiments on synthetic and real data. {The code for implementing our methods is available at \url{https://github.com/armeentaeb/model-selection-over-posets}}.


\subsection{Related Work}
\label{sec:related}
Classic approaches to model selection such as the AIC and BIC assess and penalize model complexity by counting the number of attributes included in a model \cite{AIC,BIC}.  More generally, such complexity measures facilitate a hierarchical organization of model classes, and this perspective is prevalent throughout much of the model selection literature \cite{Goeman2008MultipleTO,Meinshausen2008HierarchicalTO,Yekutieli2008HierarchicalFD,Rosenbaum2008TestingHI,Foster2008investingAP,BH}. However, these complexity measures rely on a Boolean logical structure underlying a collection of models, and are therefore not well-suited to model classes that are not characterized in this manner.  The poset formalism presented in this paper is sufficiently flexible to facilitate model selection over model classes that are more complex than those characterized by Boolean logical structure (such as the illustration presented previously with clustering, see also Example~\ref{ex:clustering}), while being sufficiently structured to permit precise definitions of model complexity as well as false positive and false negative errors.

\section{Poset Framework for Model Selection}
\label{sec:framework}
We begin by describing how collections of models arising in various applications may be organized as posets.  Next, we present approaches to endow poset-structured models with suitable notions of true and false discoveries.

\subsection{Model Classes as Posets}
\label{sec:order_theoretic_formulation}

We begin with some basics of posets.  A \emph{poset} $(\Lp, \preceq)$ is a collection $\Lp$ {of elements} and a relation $\preceq$ that is reflexive ($x \preceq x, ~ \forall x \in \Lp$), transitive ($x \preceq y, y \preceq z \Rightarrow x \preceq z, ~ \forall x,y,z \in \Lp$), and anti-symmetric ($x \preceq y, y \preceq x \Rightarrow x = y, ~ \forall x,y \in \Lp$).  An element $y \in \Lp$ \emph{covers} $x \in \Lp$ if $x \preceq y$, $x \neq y$, and there is no $z \in \Lp \backslash \{x,y\}$ with $x \preceq z \preceq y$; we call such $(x,y)$ a \emph{covering pair}.  A \emph{path} from $x_1 \in \Lp$ to $x_k \in \Lp$ is a sequence $(x_1, \dots, x_k)$ with $x_2,\dots,x_{k-1} \in \Lp$ such that $x_i$ covers $x_{i-1}$ for each $i=2,\dots,k$.  Throughout this paper, we focus on posets in which there is a \emph{least element}, i.e., an element $x_{\text{least}} \in \Lp$ such that $x_{\text{least}} \preceq y$ for all $y \in \Lp$; such least elements are necessarily unique.  Finally, a poset is \emph{graded} if there exists a function $\rk(\cdot)$ mapping poset elements to the nonnegative integers such that the rank of the least element is $0$ and $\rk(y) = \rk(x) + 1$ for $y \in \Lp$ that covers $x \in \Lp$.  In graded posets with least elements, each path from the least element to any $x \in \Lp$ has length equal to $\rk(x)$.  Posets are depicted visually using Hasse diagrams in which a directed arrow is drawn from $x \in \Lp$ to any $y \in \Lp$ that covers $x$.

Posets offer an excellent framework to formulate model selection problems as model classes in many applications possess rich partial order structures.  In particular, the poset-theoretic quantities in the preceding paragraph have natural counterparts in the context of model selection -- the least element corresponds to the `null' model that represents no discoveries, the relation $\preceq$ specifies a notion of containment between simpler and more complex models, and the rank function serves as a measure of model complexity that respects the underlying containment relation.  We present several concrete illustrations next; Figure \ref{fig:1} presents Hasse diagrams associated with several of these examples.  

\captionsetup[figure]{font=footnotesize}
\begin{figure*}[h!]
    \centering
    \subfloat[{\footnotesize{Variable selection}}]{\includegraphics[width=.17\columnwidth]{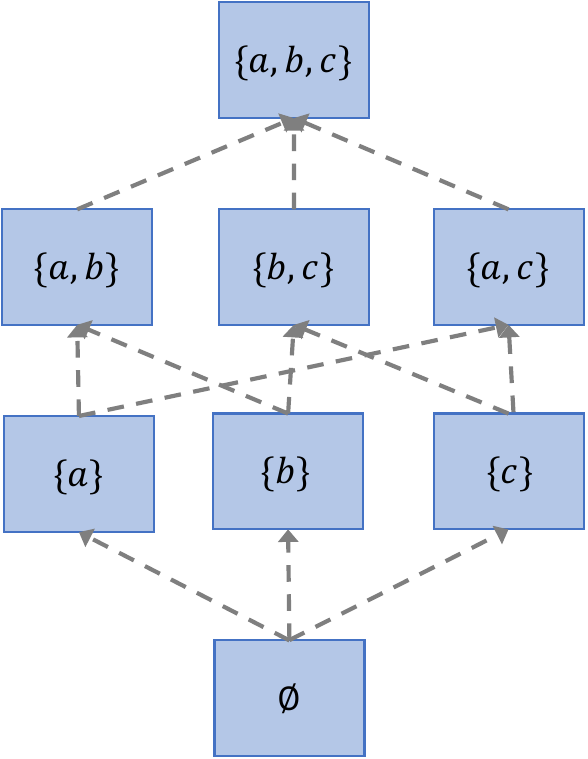}}~
    \subfloat[{\footnotesize{Clustering}}]{\includegraphics[width=.4\columnwidth]{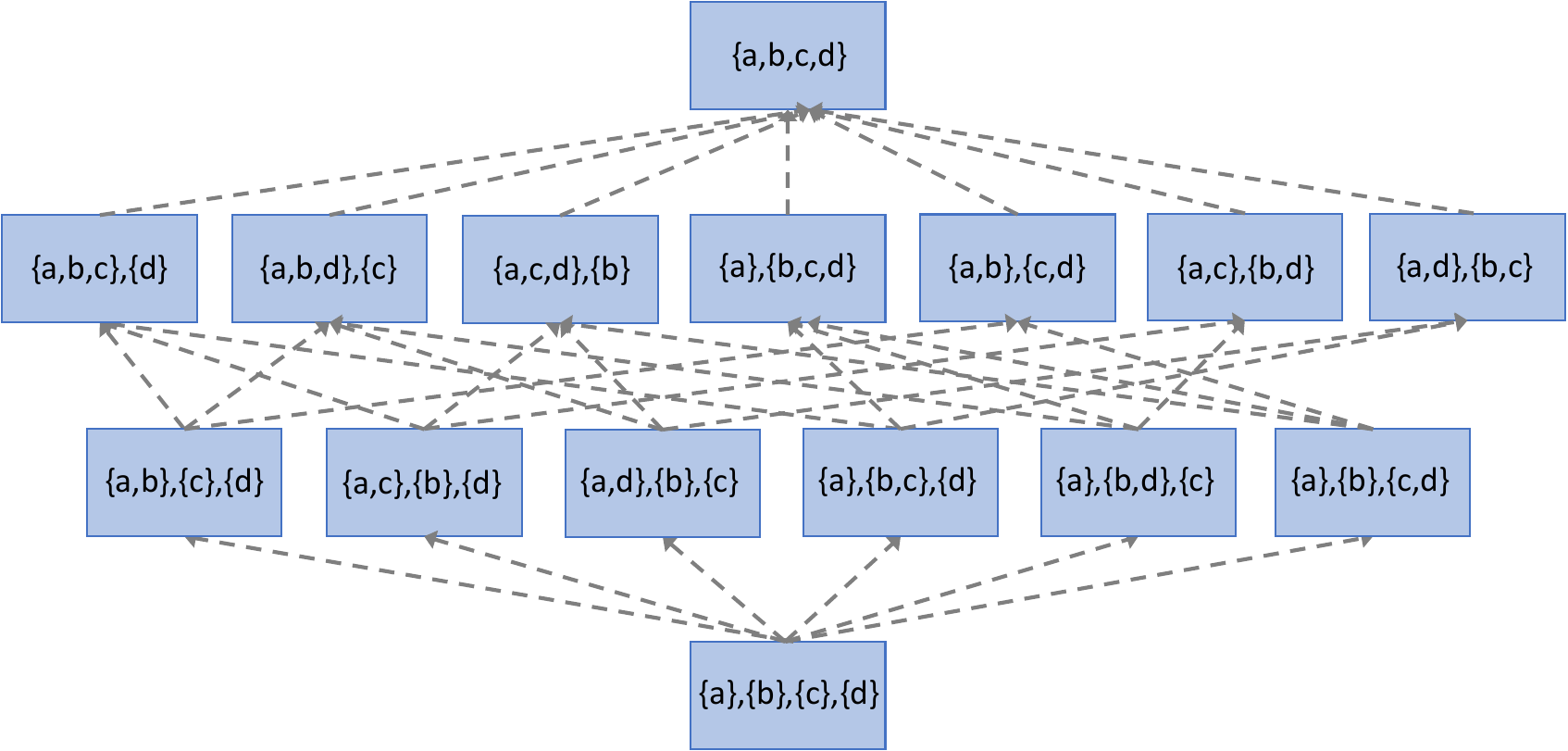}}~
    \subfloat[{\footnotesize{Multisample testing}}]{\includegraphics[width=.4\columnwidth]{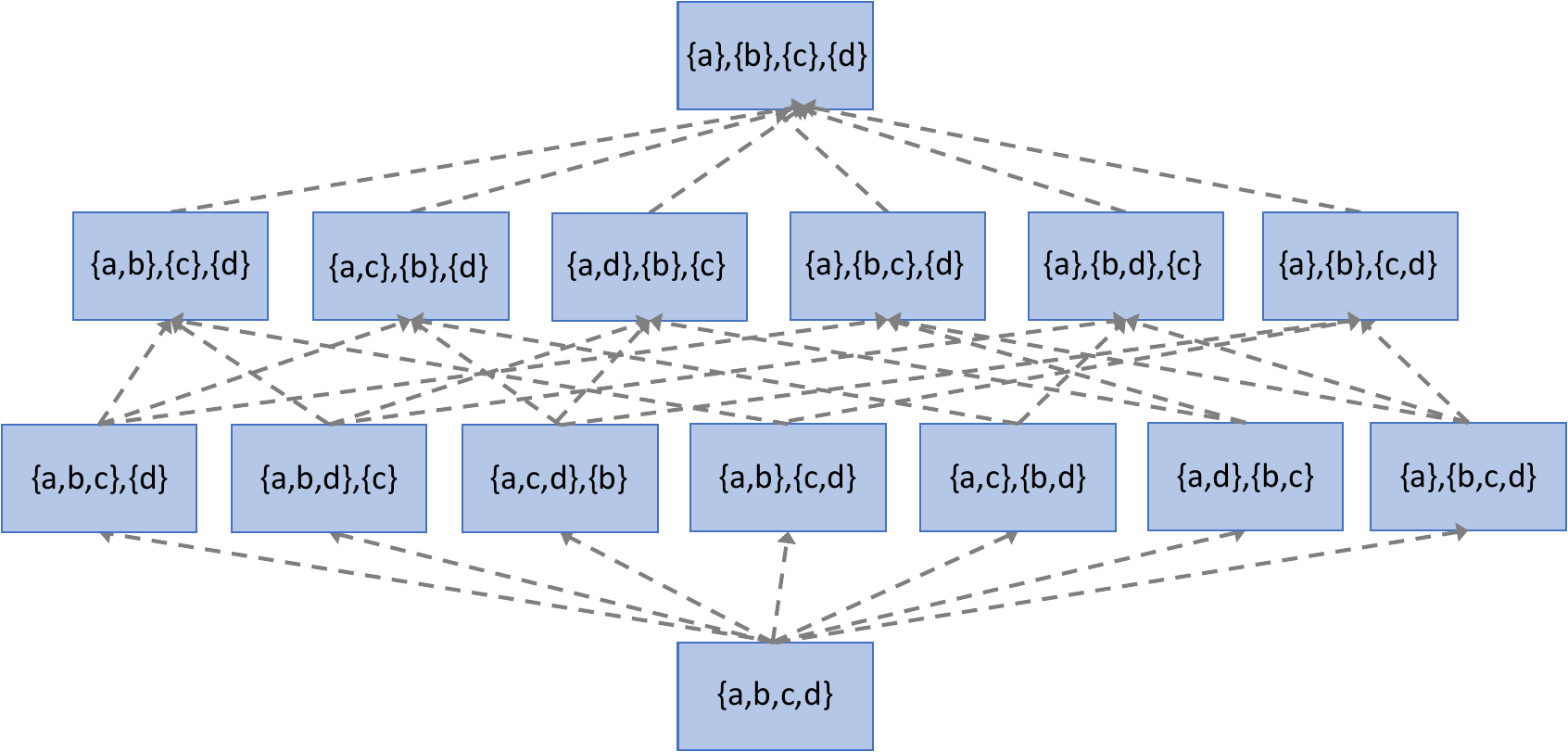}}
\quad
\subfloat[\footnotesize{Causal inference}]{\includegraphics[width=.32\columnwidth]{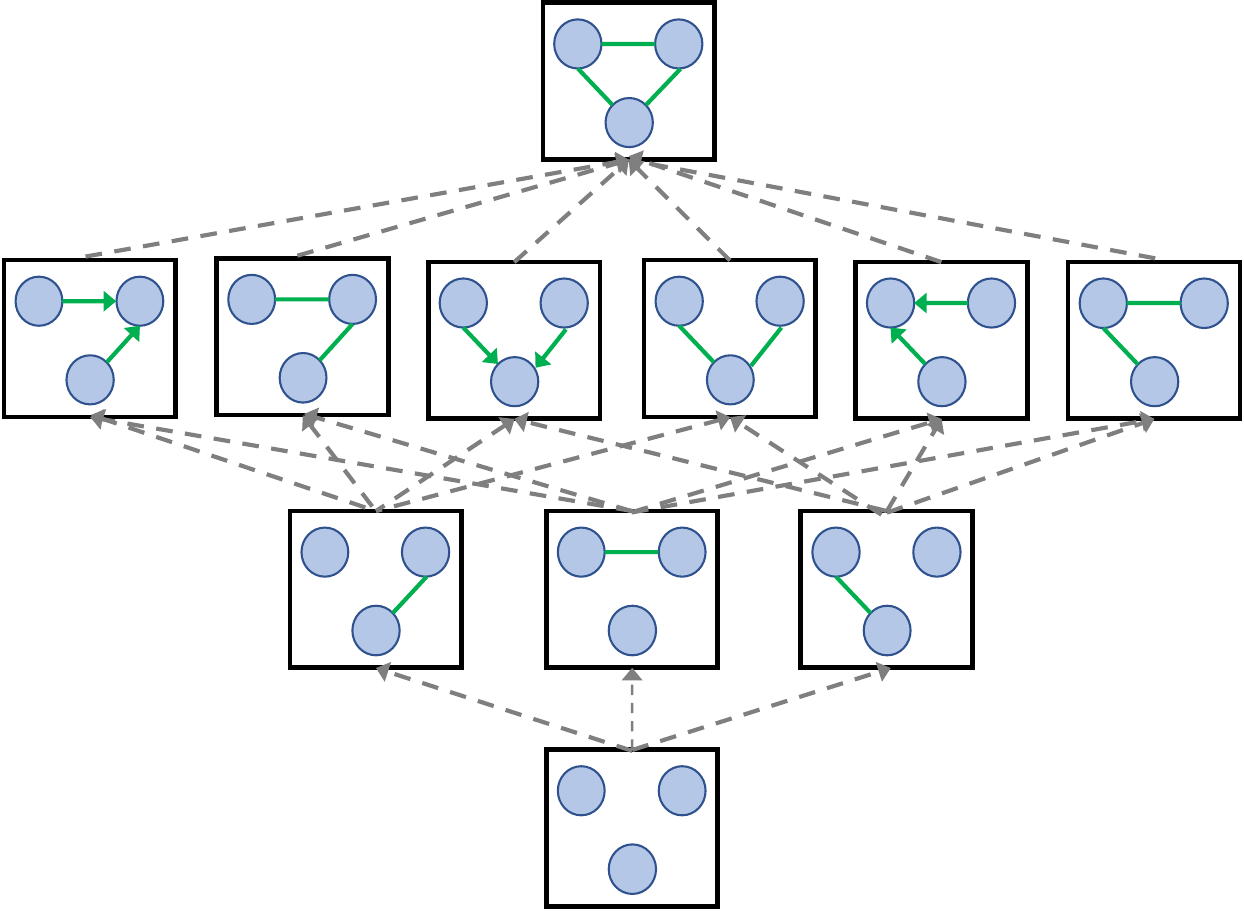}}
~~
\subfloat[\footnotesize{Partial ranking}]{\includegraphics[width=.32\columnwidth]{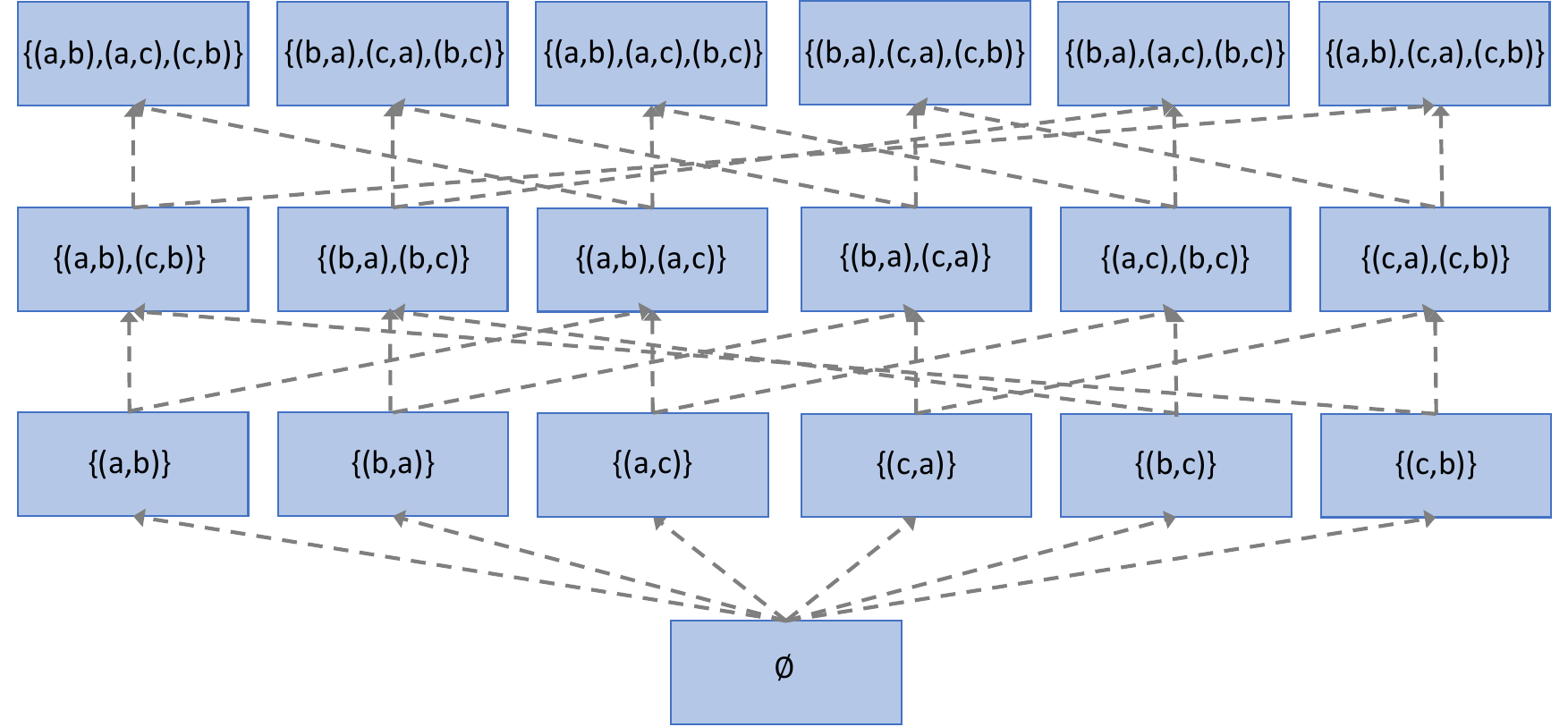}}
~~
\subfloat[\footnotesize{Total ranking}]{\includegraphics[width=.32\columnwidth]{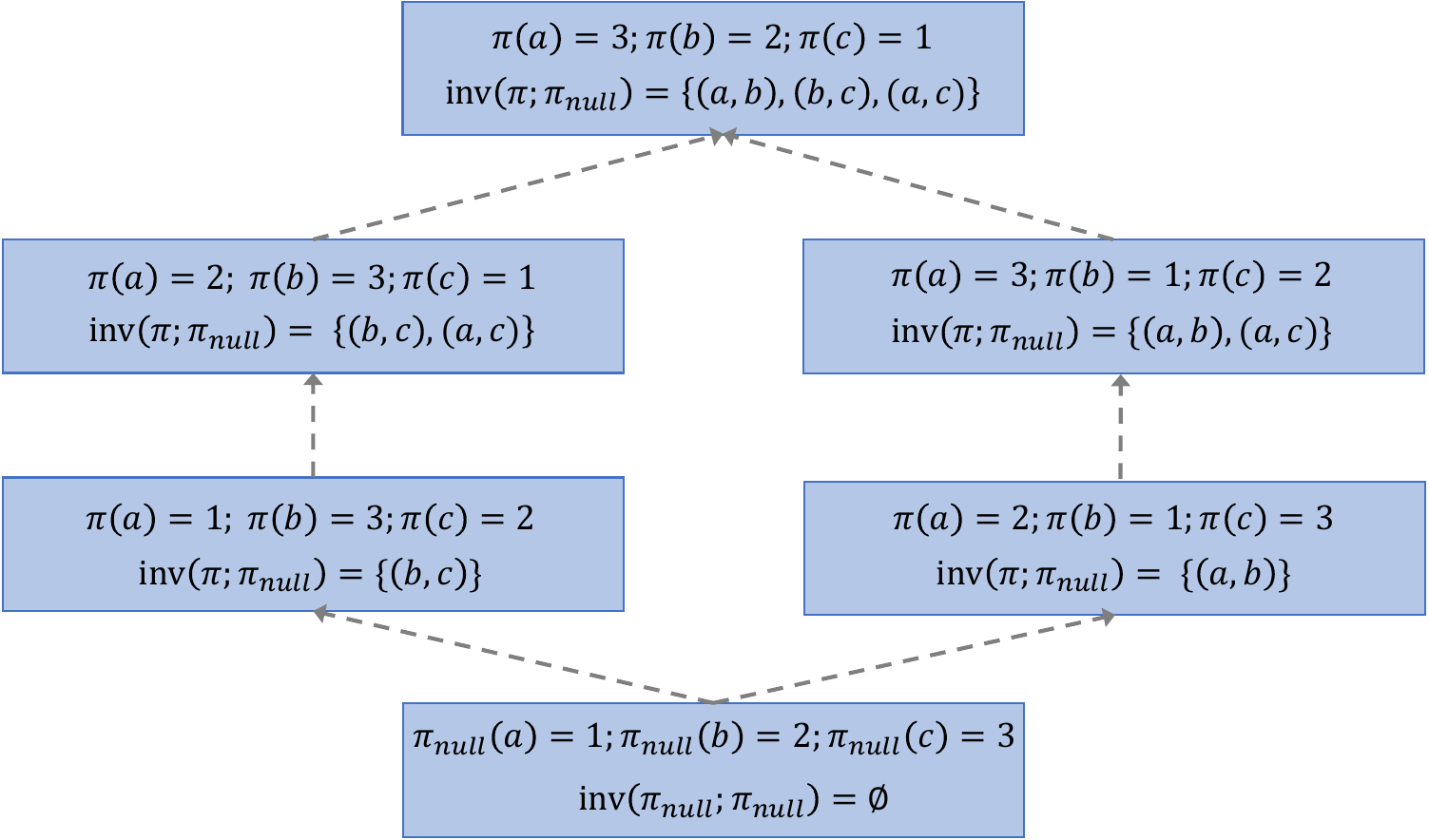}}
\caption{\small{Hasse diagrams for a) variable selection with $3$ variables (Example~\ref{ex:variable-selection}); b) clustering $4$ variables (Example~\ref{ex:clustering}); c) multisample testing with $4$ samples (Example~\ref{ex:multisample-testing}); d) causal inference with $3$ variables (Example~\ref{ex:causal-learning}); e) partial ranking of $3$ items (Example~\ref{ex:partial-ranking}); and  f) total ranking of $3$ items (Example~\ref{ex:complete-ranking})}.}
    \label{fig:1}
\end{figure*}

\begin{example}[Variable selection]\label{ex:variable-selection}
As a warm-up, consider the variable selection problem of selecting which of $p$ variables influence a response.  The poset here is the collection of all subsets of $\{1,\dots,p\}$ ordered by set inclusion, the least element is given by the empty set, and the rank of a subset is its cardinality. This poset is called the \emph{Boolean poset} \cite{stanley_2011}.
\end{example}

\begin{example}[Clustering]\label{ex:clustering}
Suppose we wish to group a collection of $p$ variables based on a given notion of similarity.  The poset here is the collection of all partitions of $\{1,\dots,p\}$ ordered by refinement, the least element is given by $p$ groups each consisting of one variable, and the rank of a partition is equal to $p$ minus the number of groups.  Thus, higher-rank elements correspond to models specified by a small number of clusters. This poset is called the \emph{partition poset} \cite{stanley_2011}.
\end{example}

\begin{example}[Multisample testing]\label{ex:multisample-testing}
As a generalization of the classic two-sample testing problem, consider the task of grouping $p$ samples with the objective that samples in a group come from the same distribution.  Although this problem is closely related to the preceding clustering problem, it is more natural for the underlying poset here to be the reverse of the partition poset that is formed by reversing the order relation of the partition poset, i.e., the poset is the collection of all partitions of $\{1,\dots,p\}$ ordered by coarsening.  With this reverse ordering, the least element corresponds to all $p$ samples belonging to the same group (i.e., coming from the same distribution), which generalizes the usual null hypothesis in two-sample testing.  The rank of a partition is equal to the number of groups minus one.  Thus, higher-rank elements correspond to the $p$ samples arising from many distinct distributions.
\end{example}

\begin{example}[Causal structure learning]\label{ex:causal-learning}
Causal associations among a collection of variables are often characterized by a directed acyclic graph (DAG), namely a graph with directed edges and no (directed) cycles, in which the nodes index the variables.  Causal structure learning entails inferring this DAG from observations of the variables.  The structure of a DAG specifies a causal model via conditional independence relations among the variables, with denser DAGs encoding fewer conditional independencies in comparison with sparser DAGs.  (See \cite{Drton2016StructureLI} for details on how the structure of a DAG encodes conditional independence relations; here we describe only those aspects that pertain to a poset formulation to organize the collection of all causal models based on graph structure.)  Distinct DAGs can specify the same set of conditional independence relations, and these are called Markov equivalent DAGs.  We introduce some terminology to characterize Markov equivalent DAGs.  The \emph{skeleton} of a DAG is the undirected graph obtained by making all the edges undirected.  A \emph{$v$-structure} is a set of three nodes $x,y,z$ such that there are directed edges from $x$ to $z$ and from $y$ to $z$, and there is no edge between $x$ and $y$.  Two DAGs are Markov equivalent if and only if they have the same skeleton and the same collection of $v$-structures \cite{verma1990equivalence}.  A Markov equivalence class of DAGs can be described by a completed partially directed acyclic graph (CPDAG), which is a graph consisting of both directed and undirected edges.  A CPDAG has a directed edge from a node $x$ to a node $y$ if and only if this directed edge is present in every DAG in the associated Markov equivalence class.  A CPDAG has an undirected edge between nodes $x$ and $y$ if the corresponding Markov equivalence class contains a DAG with a directed edge from $x$ to $y$ and a DAG with a directed edge from $y$ to $x$.  One can check that the total number of edges in a CPDAG (directed plus undirected) is equal to the number of edges in any DAG in the associated Markov equivalence class.  {The collection of CPDAGs on $p$ variables may be viewed as a poset ordered by inclusion of conditional dependencies -- CPDAGs $\mathcal{C}^{(1)}, \mathcal{C}^{(2)}$ satisfy $\mathcal{C}^{(1)} \preceq \mathcal{C}^{(2)}$ if and only if all the conditional independencies encoded by $\mathcal{C}^{(2)}$ are also encoded by $\mathcal{C}^{(1)}$, or equivalently that all the conditional dependencies encoded by $\mathcal{C}^{(1)}$ are also encoded by $\mathcal{C}^{(2)}$}.  The least element is given by the CPDAG with no edges, and the rank function is equal to the number of edges. (See Appendix~\ref{sec:graded_cpdag} for a proof that this poset is graded).  Higher-rank elements in this poset correspond to causal models exhibiting more conditional dependence relations. {In some causal inference contexts, it may be more natural to view the fully connected CPDAG as the null model. Our framework can accommodate this perspective by reversing the preceding model poset.  Specifically, in this reversed poset, the partial order is given by inclusion of conditional independencies, the least element is the fully connected CPDAG, and the rank function is $p(p-1)/2$ minus the number of edges. Higher-rank elements in this poset correspond to causal models exhibiting more conditional independence relations.}

\end{example}

\begin{example}[Multiple changepoint estimation]\label{ex:multiple-changepoint}
Consider the problem of detecting changepoints in a multivariate time series.  Specifically, we observe $p$ signals each for time instances $t=0,\dots,T-1$, each signal consists of at most one change (e.g., a change in the distribution or dynamics underlying the signal observations), and the objective is to identify these changes.  We denote changepoints via vectors $x = (x_1,\dots,x_p) \in \{0,\dots,T\}^p$, with $x_i$ denoting the time index when a change occurs in the $i$'th signal and $x_i = T$ corresponding to no change occurring.  The poset here is the set $\{0,1,\dots,T\}^p$ ordered such that $x \preceq y$ if and only if $x_j \geq y_j$ for all $j=1,\dots, p$, the least element is $(T,\dots,T)$, and the rank of an element is $p \cdot T$ minus the sum of the coordinates.  Higher-rank elements correspond to changepoint estimates in which the changes occur early.  This poset is the reverse of the (bounded) \emph{integer poset} \cite{stanley_2011} with the product order.
\end{example}

\begin{example}[Partial ranking]\label{ex:partial-ranking}
We seek a ranking of a finite set of items given noisy observations (e.g., pairwise comparisons), and we allow some pairs of items to be declared as incomparable.  Such a \emph{partial ranking} of the elements of a finite set $S$ corresponds to a \emph{strict partial order} on $S$, i.e., a relation $\mathcal{R}$ that is irreflexive $((a,a) \notin \mathcal{R}, ~ \forall a \in S)$, asymmetric $((a,b) \in \mathcal{R} \Rightarrow (b,a) \notin \mathcal{R}, ~ \forall a,b \in S)$, and transitive; if an element of $S$ does not appear in $\mathcal{R}$, then that element is incomparable to any of the other elements of $S$ in the associated partial ranking.  The poset here is the collection of strict partial orders on $S$ ordered by inclusion, the least element is the empty set, and the rank of a partial ranking is the cardinality of the associated relation.  Thus, higher-rank elements correspond to partial rankings that compare many of the covariates.
\end{example}

\begin{example}[Total ranking]\label{ex:complete-ranking}
We again wish to rank a finite collection of items but now we seek a total ranking that provides an ordered list of all the items.  The setting is that we are given a total ranking that represents our current state of knowledge (i.e., a `null model') as well as a new set of noisy observations, and the goal is to identify a total ranking that represents an update of the null model to reflect the new information.  Each \emph{total ranking} of the elements of a finite set $S$ corresponds to a one-to-one function from $S$ to the integers $\{1,\dots,|S|\}$.  Let $\pi_{\mathrm{null}}$ be the function that describes the null ranking.  A convenient way to compare total rankings and to define a poset structure over them is via the notion of an \emph{inversion set}.  For any total ranking specified by a function $\pi$, the associated inversion set (with respect to the null ranking $\pi_{\mathrm{null}}$) is defined as $\mathrm{inv}(\pi; \pi_{\mathrm{null}}) := \{(x,y) \in S \times S ~|~ \pi_{\mathrm{null}}(x) < \pi_{\mathrm{null}}(y), ~ \pi(x) > \pi(y)\}$.  The poset here (with respect to a given null ranking $\pi_{\mathrm{null}}$) is the collection of total rankings on $S$ ordered by inclusion of the associated inversion sets, the least element is the null ranking $\pi_{\mathrm{null}}$, and the rank of a total ranking is the cardinality of the associated inversion set; this rank function is also equal to the Kendall tau distance between a total ranking and $\pi_{\mathrm{null}}$.  Thus, higher-rank elements are given by total rankings that depart significantly from the null ranking $\pi_{\mathrm{null}}$.  This poset is called the \emph{permutation poset} \cite{stanley_2011}.

\end{example}

\begin{example}[Subspace estimation]\label{ex:subspace-estimation}
The task is to estimate a subspace in $\R^p$ given noisy observations of points in the subspace.  The poset is the collection of subspaces in $\R^p$ ordered by inclusion, the least element is the subspace $\{0\}$, and the rank of a subspace is its dimension.  This poset is called the \emph{subspace poset}.
\end{example}

\begin{example}[Blind source separation]\label{ex:blind-source-separation}
We are given a signal in $\R^p$ that is expressed as a linear combination of some unknown source signals and the goal is to estimate these sources.  The poset here is the collection of linearly independent subsets of unit-norm vectors in $\R^p$ ordered by inclusion, the least element is the empty set, and the rank of a linearly independent subset is equal to the cardinality of the subset.
\end{example}

{With respect to formalizing the notion of false positive and false negative errors}, Example~\ref{ex:variable-selection} is prominently considered in the literature, while Examples~\ref{ex:multisample-testing} and \ref{ex:multiple-changepoint} are multivariate generalizations of previously studied cases \cite{Hotelling1931TheGO,Lorden1971PROCEDURESFR}.  Finally, Example~\ref{ex:subspace-estimation} was studied in \cite{Taeb2020FD}, although that treatment proceeded from a geometric perspective rather than the order-theoretic approach presented in this paper.  With the exception of Example~\ref{ex:variable-selection}, none of the other examples permit a natural formulation within the traditional multiple testing paradigm due to the lack of a Boolean logical structure underlying the associated model classes.  Moreover, Examples~\ref{ex:subspace-estimation}-\ref{ex:blind-source-separation} are model classes consisting of infinitely many elements.   Nonetheless, we describe in the sequel how the poset formalism enables a systematic and unified framework for formulating model selection in all of the examples above.

\subsection{Evaluating True and False Discoveries}
\label{section:true_false_discoveries}
To assess the extent to which an estimated model signifies discoveries about the true model, we describe next a general approach to quantify the similarity between poset elements in a manner that respects partial order structure.

\begin{defn}[similarity valuation] Let $(\Lp,\preceq,\rk(\cdot))$ be a graded poset.  A function $\rho: \Lp\times\Lp \to \mathbb{R}$ that is symmetric, i.e., $\rho(x,y) = \rho(y,x)$ for all $x,y \in \Lp$, is called a \emph{similarity valuation} over $\Lp$ if:
\begin{itemize}
    \item $0 \leq \rho(x,y) \leq \min\{\rk(x),\rk(y)\}$ for all $x,y \in \Lp$,
    \item $\rho(x,y) \leq \rho(z,y)$ for all $x \preceq z$,
    \item $\rho(x,y) = \rk(x)$ if and only if $x \preceq y$.
\end{itemize}
\label{defn:rho_prop}
\end{defn}

\begin{remark}
    The term `valuation' is often used in the order-theory literature \cite{stanley_2011} to denote functions on posets that respect the underlying partial order structure, and we use it in our context for the same reason.
\end{remark}
In the sequel, we describe similarity valuations for the various model posets discussed previously.  The conditions above make similarity valuations well-suited for quantifying the amount of discovery in an estimated model with respect to a true model.  The first condition states that the amount of discovery must be bounded above by the complexities of the true and estimated models (which are specified by the rank function).  The second condition requires similarity valuations to respect partial order structure so that more complex models do not yield less discovery than less complex ones.  The final condition expresses the desirable property that the amount of discovery contained in an estimated model is equal to the complexity of that model if and only if it is `contained in' the true model.  With these properties, we obtain the following analogs of true and false discoveries and of related quantities such as false discovery proportion.

\begin{defn}[true and false discoveries] Let $(\Lp,\preceq,\rk(\cdot))$ be a graded poset and let $\rho$ be a similarity valuation on $\Lp$.  Letting $x^\star \in \Lp$ be a true model and $\hat{x} \in \Lp$ be an estimate, the \emph{true discovery}, the \emph{false discovery}, and the \emph{false discovery proportion} are, respectively, defined as follows:
\begin{equation*}
\begin{aligned}
&\TD(\hat{x}, x^\star) := \rho(\hat{x},x^\star), \\ &\FD(\hat{x},x^\star) := \rk(\hat{x})-\rho(\hat{x},x^\star) = \rk(\hat{x}) - \TD(\hat{x}, x^\star), \\ &\mathrm{FDP}(\hat{x}, x^\star) := \frac{\rk(\hat{x})-\rho(\hat{x},x^\star)}{\rk(\hat{x})} = \frac{\FD(\hat{x},x^\star)}{\rk(\hat{x})}.
\end{aligned}
\end{equation*}
\end{defn}
With these definitions, we articulate our model selection objective more precisely:

\textbf{Goal}: \emph{identify the largest rank model subject to control in expectation or in probability on false discovery (proportion).}

This objective is akin to seeking the largest amount of discovery subject to control on false discovery (rate).  The data available to carry out model selection vary across our examples; in Section~\ref{sec:false_discovery_control} we describe methods to obtain false discovery control guarantees in various settings.

To carry out this program, a central question is the choice of a suitable similarity valuation for a graded model poset.  Indeed, it is unclear whether there always exists a similarity valuation for any graded model poset $(\Lp, \preceq, \rk(\cdot))$.  To address this question, consider the following function for $x,y \in \Lp$:
\begin{equation}
    \rho_{\text{meet}}(x,y) := \max_{z \preceq x, z \preceq y} ~ \rk(z). \label{eq:rho_meet}
\end{equation}
\begin{remark}
    In order theory, a poset $(\Lp, \preceq)$ is said to possess a \emph{meet} if for each $x,y \in \Lp$ there exists a $z \in \Lp$ satisfying $(i)$ $z \preceq x, z \preceq y$ and $(ii)$ for any $w \in \Lp$ with $w \preceq x, w \preceq y$, we have $w \preceq z$; such a $z$ is called the meet of $x,y$ and posets that possess a meet are called \emph{meet semi-lattices}.  Except for the poset in Example~\ref{ex:causal-learning} on causal structure learning, the posets in the other examples are meet semi-lattices (see Appendix~\ref{sec:meet_semi}).  The subscript `meet' in \eqref{eq:rho_meet} signifies that $\rho_{\text{meet}}$ is the rank of the meet for meet semi-lattices, although $\rho_{\text{meet}}$ is well-defined even if $(\Lp, \preceq)$ is not a meet semi-lattice.
\end{remark}
One can check that $\rho_{\text{meet}}$ is a similarity valuation on any graded poset $(\Lp, \preceq, \mathrm{rank}(\cdot))$; see Appendix \ref{sec:meet_rho} for a proof. For Example~\ref{ex:variable-selection} on variable selection, $\rho_{\text{meet}}$ has the desirable property that it reduces to the number of common variables in two models; thus, the general model selection goal formulated above reduces to the usual problem of maximizing the number of selected variables subject to control on the number of selected variables that are null.  Next, we describe the model selection problems we obtain in Examples~\ref{ex:clustering}-\ref{ex:partial-ranking} with $\rho_{\text{meet}}$ as the choice of similarity valuation.

In Example~\ref{ex:clustering} on clustering, the value of $\rho_{\text{meet}}$ for two partitions of $p$ variables is equal to $p$ minus the number of groups in the coarsest common refinement of the partitions.  The model selection problem is that of partitioning the variables into the smallest number of groups subject to control on the additional number of groups in the coarsest common refinement of the estimated and true partitions compared to the number of groups in the estimated partition.

Recall that the poset in Example~\ref{ex:multisample-testing} on multisample testing is the reverse of the poset in Example~\ref{ex:clustering}; thus, many of the notions from the preceding paragraph are appropriately `reversed' in Example~\ref{ex:multisample-testing}.  In particular, the value of $\rho_{\text{meet}}$ in Example~\ref{ex:multisample-testing} for two partitions of $p$ samples is equal to the number of groups in the finest common coarsening of the partitions.  The model selection problem entails partitioning the samples into the largest number of groups subject to control on the additional number of groups in the estimated partition compared to the number of groups in the finest common coarsening of the estimated and true partitions.

In Example~\ref{ex:causal-learning} on causal structure learning, the value of $\rho_{\text{meet}}$ for two CPDAGs $\mathcal{C}^{(1)}, \mathcal{C}^{(2)}$ is equal to the maximum number of edges in a CPDAG that encodes all the conditional independencies of $\mathcal{C}^{(1)}$ and of $\mathcal{C}^{(2)}$.  The model selection task is then to identify the CPDAG with the largest number of edges subject to control on the additional number of edges in the estimated CPDAG compared to the densest CPDAG that encodes all the conditional {independence} relationships in both the true and estimated CPDAGs.

In Example~\ref{ex:multiple-changepoint} on multiple changepoint estimation, suppose $x, y \in \{0,\dots,T\}^p$ are vectors of time indices specifying changepoints in $p$ signals.  We have that $\rho_{\text{meet}}(x,y) \allowbreak = p \cdot T - \allowbreak \sum_{i=1}^p \max\{x_i,y_i\}$.  The model selection problem entails identifying changes as quickly as possible subject to control on early detection of changes (i.e., declaring changes before they occur); this is a multivariate generalization of the classic quickest change detection problem \cite{Lorden1971PROCEDURESFR}.

In Example~\ref{ex:partial-ranking} on partial ranking, the value of $\rho_{\text{meet}}$ for two partial rankings is equal to the cardinality of the intersection of the associated relations, i.e., the number of common comparisons in the two partial rankings.  The associated model selection problem is that of identifying a partial ranking with the largest number of comparisons (i.e., the associated relation must have large cardinality) subject to control on the number of comparisons in the estimated partial ranking that are not in the true partial ranking.

In Examples~\ref{ex:variable-selection}-\ref{ex:partial-ranking}, the function $\rho_{\text{meet}}$ of \eqref{eq:rho_meet} provides a convenient way to assess the amount of discovery in an estimated model with respect to a true model, thereby yielding natural formulations for model selection.  However, in Examples~\ref{ex:complete-ranking}-\ref{ex:blind-source-separation}, $\rho_{\text{meet}}$ has some undesirable features.

Consider first the setup in Example~\ref{ex:complete-ranking} {on total ranking} for the set $S = \{a,b,c\}$ with the null model given by the ranking $\pi_{\text{null}}(a) = 1, \pi_{\text{null}}(b) = 2, \pi_{\text{null}}(b) = 3$, the true model given by the ranking $\pi^\star(a) = 3, \pi^\star(b) = 1, \pi^\star(c) = 2$ (Hasse diagram shown in Figure~\ref{fig:1}), and the estimated ranking given by $\hat{\pi}(a) = 2, \hat{\pi}(b) = 3, \hat{\pi}(c) = 1$.  In this case, one can see from Figure~\ref{fig:1} that $\rho_{\text{meet}}(\hat{\pi},\pi^\star) = 0$, which suggests that no discovery is made.  On the other hand, the inversion sets of these rankings are given by $\mathrm{inv}(\pi^\star; \pi_{\text{null}}) = \{(a,b), (a,c)\}$ and $\mathrm{inv}(\hat{\pi}; \pi_{\text{null}}) = \{(a,c), (b,c)\}$, and the element $(a,c)$ is common to both inversion sets as the fact that item $c$ is ranked higher than item $a$ in the true model has been discovered in the estimated model; this reasoning suggests that a positive quantity would be a more appropriate value for the similarity valuation between $\hat{\pi}$ and $\pi^\star$.  The key issue is that $\mathrm{inv}(\pi^\star; \pi_{\text{null}}) \cap \mathrm{inv}(\hat{\pi}; \pi_{\text{null}})$ is not an inversion set of any total ranking, but this intersection still carries valuable information about true discoveries made in $\hat{\pi}$ about $\pi^\star$.  However, the similarity valuation $\rho_{\text{meet}}$ only considers subsets of $\mathrm{inv}(\pi^\star; \pi_{\text{null}}) \cap \mathrm{inv}(\hat{\pi}; \pi_{\text{null}})$ that correspond to inversion sets of total rankings as the maximization in \eqref{eq:rho_meet} is constrained to be over poset elements.  Motivated by this discussion, we employ the following similarity valuation in Example~\ref{ex:complete-ranking} for total rankings $\pi, \tilde{\pi}$ (with respect to a null model $\pi_{\text{null}}$):
\begin{equation}
\rho_{\text{total-ranking}}(\pi,\tilde{\pi}) = |\mathrm{inv}(\pi; \pi_{\text{null}}) \cap \mathrm{inv}(\tilde{\pi}; \pi_{\text{null}})|.
\label{eqn:inv_set_ranking}
\end{equation}
With this similarity valuation, the model selection problem reduces to identifying a total ranking with the largest inversion set (with respect to $\pi_{\text{null}}$) subject to control on the number of comparisons in the inversion set of the estimated total ranking that are not in the inversion set of the true total ranking.

Next, in Example~\ref{ex:subspace-estimation}, $\rho_{\text{meet}}(\hat{x}, x^\star)$ is equal to the dimension of the intersection of the subspaces $\hat{x}, x^\star$.  When these subspaces have small dimensions, for example, $\rho_{\text{meet}}$ generically equals zero regardless of the angle between the subspaces; in words, $\rho_{\text{meet}}$ does not consider the smooth structure underlying the collection of subspaces.  As discussed in \cite{Taeb2020FD}, a more suitable measure of similarity is the sum of the squares of the cosines of the principal angles between the subspaces, which is expressed as follows using projection matrices onto subspaces $\mathcal{U}, \tilde{\mathcal{U}}$:
\begin{equation}
\rho_{\text{subspace}}(\mathcal{U},\tilde{\mathcal{U}}) = \mathrm{trace}(\mathcal{P}_{\mathcal{U}} \mathcal{P}_{\tilde{\mathcal{U}}}).
\label{eqn:subspace}
\end{equation}
The model selection task is to identify the largest-dimensional subspace subject to control on the sum of the squares of the cosines of the principal angles between the estimated subspace and the orthogonal complement of the true subspace.

Finally, $\rho_{\text{meet}}$ is inadequate as a similarity valuation in Example~\ref{ex:blind-source-separation} for the same reasons as in Example~\ref{ex:subspace-estimation} due to the underlying smooth structure, and we propose here a more appropriate alternative.  Given $B \in \R^{p \times k}, ~ \tilde{B} \in \R^{p \times \ell}$ (these matrices have unit-norm and linearly independent columns representing source signals), suppose without loss of generality that $k \leq \ell$ (due to the symmetry of similarity valuations) and let $\text{Perm}(\ell)$ be the collection of bijections on $\{1, \dots, \ell\}$.  With this notation, consider the following similarity valuation:
\begin{equation}
    \rho_{\text{source-separation}}(B,\tilde{B}) = \max_{\sigma \in \text{Perm}(\ell)} \sum_{i=1}^k (B^T \tilde{B})_{i,\sigma(i)}^2.
    \label{eqn:blind_source_rho}
\end{equation}

This valuation is better suited to quantify the degree of alignment between two collections of vectors in source separation than $\rho_{\text{meet}}$.  Model selection entails identifying the largest collection of source vectors subject to control on the difference in the number of estimated source vectors and the alignment between the true and estimated source vectors as evaluated by $\rho_{\text{source-separation}}$.

Table~\ref{summary_table} summarizes our discussion of the various model posets and their associated similarity valuations.  In conclusion, while $\rho_{\text{meet}}$ is a similarity valuation for any model poset, it is not always the most natural choice, and identifying a suitable similarity valuation that captures the essential features of an application is key to properly formulating a model selection problem.  This situation is not unlike the selection of an appropriate loss function in point estimation -- while there exist many candidates that are mathematically valid, the utility of an estimation procedure in the context of a problem domain depends critically on a well-chosen loss.

\begin{table*}
\centering
\scalebox{0.75}{
\begin{tabular}{ |c|c|c|c|c|c|c| } 
\hline
\textbf{\makecell{problem \\domain}} & \textbf{models} & \textbf{\makecell{least element \\(i.e. global null)}} &\textbf{\makecell{partial\\order}}& \textbf{\makecell{rank \\(i.e. model \\complexity)}}&\textbf{\makecell{similarity valuation\\(i.e. true discoveries)}}  \\
\hline
\emph{\makecell{variable \\selection}}& \makecell{subsets of \\ $\{1,\dots,p\}$}& $\emptyset$& \makecell{inclusion of \\ subsets}&  \makecell{cardinality of \\subset}& \makecell{subsets $x,\tilde{x}$; \\$\rho(x,\tilde{x}) = |x\cap \tilde{x}|$}\\
\hline
\emph{clustering} &\makecell{partitions of \\$\{1,\dots,p\}$} & $\{1\},\{2\},\dots,\{p\}$& \makecell{refinement \\ of partition}& $p - \# \text{groups}$& \makecell{partitions $x,\tilde{x}$;\\ $\rho(x,\tilde{x}) = p - \#$ groups in\\ coarsest common refinement}\\
\hline
\emph{\makecell{multisample \\testing}} &\makecell{partitions of \\$\{1,\dots,p\}$} & $\{1,2,\dots,p\}$ & \makecell{coarsening \\ of partition}&  $\# \text{groups}$ - 1 & \makecell{partitions $x,\tilde{x}$; \\ $\rho(x,\tilde{x}) = \# \text{ groups in}$\\ finest common coarsening} \\
\hline
\emph{\makecell{causal\\ structure \\ learning}}& \makecell{completed partially \\ directed acyclic graphs \\ (CPDAG) \\ on a set of variables} &\makecell{CPDAG \\ with no \\ edges} &\makecell{inclusion \\ of conditional \\ dependencies \\ encoded by \\ CPDAGs} & $\# \text{edges}$& \makecell{$\text{CPDAGs } \mathcal{C},\tilde{\mathcal{C}}$; \\ $\rho(\mathcal{C},\tilde{\mathcal{C}}) = \#\text{edges in }$\\
densest CPDAG encoding\\ conditional {independencies} \\of both $\mathcal{C},\tilde{\mathcal{C}}$}\\
\hline
\emph{\makecell{multiple \\changepoint}} & \makecell{elements of \\ $\{0,\dots,T\}^p$} & $(T,T,\dots,T)$ & \makecell{entrywise\\ reverse \\ ordering} & \makecell{$p\cdot T$ minus \\sum of entries} & \makecell{changepoint vectors $x,\tilde{x}$;\\ $\rho(x,\tilde{x}) = p\cdot T-\sum_i \max\{x_i,\tilde{x}_i\}$}\\
\hline
\emph{\makecell{partial \\ranking}} & \makecell{relations specified by \\ strict partial orders\\ on a set of items} & $\emptyset$ &\makecell{inclusion of \\ sets specifying\\relations} & \makecell{cardinality of \\ set specifying \\ relation} & \makecell{sets $\mathcal{R},\tilde{\mathcal{R}}$ \\ specifying relations; \\ $\rho(\mathcal{R},\tilde{\mathcal{R}}) = |\mathcal{R} \cap \tilde{\mathcal{R}}|$}  \\
\hline
\emph{\makecell{total\\ ranking}} & \makecell{total orders \\ on a set of items}&\makecell{base ranking\\ $\pi_\text{null}$}&\makecell{inclusion of \\inversion sets \\ w.r.t. $\pi_{\text{null}}$} &  \makecell{cardinality of \\ inversion set \\ w.r.t. $\pi_{\text{null}}$}& \makecell{total orders $\pi,\tilde{\pi}$; \\$\rho(\pi,\tilde{\pi}) = |\mathrm{inv}(\pi;{\pi}_\text{null}) \cap \mathrm{inv}(\tilde{\pi};{\pi}_\text{null})|$} \\
\hline
\emph{\makecell{subspace\\ estimation}} & subspaces in $\mathbb{R}^p$ & $\{0\}$ & \makecell{inclusion of \\subspaces} & \makecell{dimension of \\ subspace}& \makecell{subspaces $\mathcal{U},\tilde{\mathcal{U}}$;\\$\rho(\mathcal{U},\tilde{\mathcal{U}}) = \mathrm{trace}(\Proj_\mathcal{U}\Proj_{\tilde{\mathcal{U}}})$} \\
\hline
\emph{\makecell{blind source\\ separation}} & \makecell{linearly independent\\ subsets of $\mathbb{R}^p$} & $\emptyset$ & \makecell{inclusion of \\subsets}& \makecell{cardinality of \\subset}& \makecell{subsets given by columns of \\ $B \in \mathbb{R}^{p \times k},\tilde{B} \in \mathbb{R}^{p \times \ell}, k\leq \ell$;\\ $\rho(B,\tilde{B})= \max_{\sigma \in \text{Perm}(\ell)} \sum_{i=1}^k (B^T \tilde{B})_{i,\sigma(i)}^2$} \\
\hline
\end{tabular}}
\caption{Problem classes and associated characterization of model selection via posets.}
\label{summary_table}
\end{table*}
\section{False Discovery Control over Posets}
\label{sec:false_discovery_control}
In this section, we turn our attention to the task of identifying models of large rank that provide false discovery control.  We begin in Section~\ref{sec:greedy_approaches} with a general greedy strategy for poset search that facilitates the design of model selection procedures, and we specialize this framework to specific approaches in Sections~\ref{sec:stable} and \ref{sec:testing}.  Some of the discussion in Section~\ref{sec:greedy_approaches} is relevant for all of the posets in Examples~\ref{ex:variable-selection}-\ref{ex:blind-source-separation}, while the methodology presented in Sections~\ref{sec:stable}-\ref{sec:testing} is applicable to general discrete posets with integer-valued similarity valuations such as in Examples~\ref{ex:variable-selection}-\ref{ex:complete-ranking}.  Along the way, we remark on some of the challenges that arise in the two continuous cases of Examples~\ref{ex:subspace-estimation}-\ref{ex:blind-source-separation}.

\subsection{Greedy Approaches to Model Selection}
\label{sec:greedy_approaches}
To make progress on the problem of identifying large rank models that provide control on false discovery, we begin by noting that the false discovery $\FD(\hat{x},x^\star)$ in an estimated model $\hat{x}$ with respect to a true model $x^\star$ may be expressed as the following telescoping sum for any path $(x_0, x_1, \dots, x_{k-1}, x_k)$ with $x_0$ being the least element $x_{\text{least}}$ and $x_k = \hat{x}$:
\begin{equation}
     \FD(\hat{x},x^\star) = \sum_{i=1}^{k} 1-[\rho(x_i,\xstar)-\rho(x_{i-1},\xstar)].
     \label{eqn:telescoping_sum}
 \end{equation}
The term $1-[\rho(x_i,\xstar)-\rho(x_{i-1},\xstar)]$ may be interpreted as the ``additional false discovery'' incurred by the model $x_i$ relative to the model $x_{i-1}$.  The above decomposition of false discovery in terms of a path from the least element to an estimated model suggests a natural approach for model selection.  In particular, we observe that a sufficient condition for $\FD(\hat{x},x^\star)$ to be small is for each term in the above sum to be small.  Thus, we will greedily grow a path starting from the least element $x_0 = x_{\text{least}}$ by adding one element $x_i$ at a time such that each $(x_{i-1},x_i)$ is a covering pair and each $1-[\rho(x_i,\xstar)-\rho(x_{i-1},\xstar)]$ is small.  We continue this process until we can no longer guarantee that $1-[\rho(x_i,\xstar)-\rho(x_{i-1},\xstar)]$ is small.

For such a procedure to be fruitful, we require some data-driven method to bound $1-[\rho(x_i,\xstar)-\rho(x_{i-1},\xstar)]$ as the true model $\xstar$ is not known.  Our objective, therefore, is to design a data-dependent function $\Psi:\{(a,b) ~|~ b \text{ covers } a \text{ in } \Lp\} \to [0,1]$ that takes as input covering pairs and outputs a number in the interval $[0,1]$, and further satisfies the property that $\Psi(u,v)$ being small is a sufficient condition for $1-[\rho(v,\xstar)-\rho(u,\xstar)]$ to be small (in expectation or in probability).  Given such a function, we grow a path using the greedy strategy outlined above by identifying at each step a covering pair that minimizes $\Psi$.  Algorithm~\ref{alg:poset_stability_discrete} provides the details.  In Sections~\ref{sec:stable} and \ref{sec:testing}, we present two approaches for designing suitable functions $\Psi$: one based on a notion of stability and the other based on testing.  Proofs that both these methods control for false discoveries are presented in Section~\ref{sec:proofs}.

\begin{algorithm}
\caption{Greedy sequential algorithm for model selection} 
\begin{algorithmic}[1]
\State {\bf Input}: poset $\Lp$, threshold $\alpha \in [0,1]$; data-dependent function $\Psi:\{(a,b) ~|~ b \text{ covers } a \text{ in } \Lp\} \to [0,1]$
\State {\bf Greedy selection}: Set $u= x_{\text{least}}$ and perform:
\begin{itemize}
    \item [(a)] find $v_\text{opt} \in \argmin_{\{(u,v) ~|~ v \text{ covers } u \text{ in } \Lp\}} {\Psi}(u,v)$. 
    \item [(b)] if ${\Psi}(u,v_\text{opt}) \leq \alpha$, set $u = v_\text{opt}$ and repeat steps (2a-2b). Otherwise, stop.
\end{itemize}
\State {\bf Output}: return $\hat{x} = u$
\end{algorithmic}
\label{alg:poset_stability_discrete}
\end{algorithm}

In designing a suitable function $\Psi$ so that $1-(\rho(v,x^\star)-\rho(u,x^\star))$ is small (in expectation or in probability) whenever $\Psi(u,v)$ is small, we note that the examples presented in Section~\ref{sec:framework} exhibit an important invariance.  Specifically, in each example there are distinct covering pairs $(u,v)$ and $(u',v')$ such that $1-[\rho(v,x^\star)-\rho(u,x^\star)] = 1-[\rho(v',x^\star)-\rho(u',x^\star)]$ for every true model $x^\star$.  Accordingly, it is natural that the function $\Psi$ also satisfies the property that $\Psi(u,v) = \Psi(u',v')$; stated differently, one need only specify $\Psi$ for a `minimal' set of covering pairs.  We present next a definition that formalizes this notion precisely.

\begin{defn}[Minimal covering pairs] Consider a graded poset $(\Lp, \preceq, \rk(\cdot))$ endowed with a similarity valuation $\rho$.  A subset $\mathcal{S} \subset \{(a,b) ~|~ b \text{ covers } a \text{ in } \Lp\}$ of covering pairs in $\Lp$ is called \emph{minimal} if the following two properties hold:
\begin{itemize}
\item For each covering pair $(u', v') \notin \mathcal{S}$, there exists $(u,v) \in \mathcal{S}$ with $\rk(v) \leq \rk(v')$ such that $\rho(v,z)-\rho(u,z) = \rho(v',z)-\rho(u',z)$ for all $z \in \Lp$.
\item For distinct covering pairs $(u,v), (u',v') \in \mathcal{S}$, there exists some $z \in \Lp$ such that $\rho(v,z)-\rho(u,z) \neq \rho(v',z)-\rho(u',z)$.
\end{itemize}
\label{defn:set_S}
\end{defn}
In words, a minimal set of covering pairs $\mathcal{S}$ for a graded poset $\Lp$ is an inclusion-minimal collection of smallest rank covering pairs for which it suffices to consider the values of $\Psi$.  For Example~\ref{ex:variable-selection} on variable selection with the similarity valuation $\rho_{\text{meet}}$, a minimal set of covering pairs is given by $\mathcal{S} = \{(\emptyset, \{i\}) ~|~ i = 1,\dots,p\}$ and this minimal set is unique.  In general, however, such sets are not unique; see Appendix~\ref{sec:setS} where we derive minimal sets of covering pairs for several examples.  Minimal sets of covering pairs are significant methodologically from both computational and statistical perspectives.  In particular, several of our bounds for discrete posets depend on the cardinality $|\mathcal{S}|$ and these also involve computations that scale in number of operations with $|\mathcal{S}|$.  Therefore, identifying a minimal set of covering pairs that is small in cardinality is central to the success of our proposed methods.  In the remainder of this section, we assume that a minimal set of covering pairs $\setS$ for a given model poset $\Lp$ is available.

\subsection{Model Selection via Stability}
\label{sec:stable}

Our first method for designing a suitable function $\Psi$ to employ in Algorithm~\ref{alg:poset_stability_discrete} is based on subsampling and corresponding model averaging.  We assume that we have access to a base procedure $\hat{x}_{\mathrm{base}}$ that provides model estimates from data as well as a dataset $\mathcal{D}$ consisting of observations drawn from a probability distribution parameterized by the true model $x^\star$, and our approach is to aggregate the model estimates provided by $\hat{x}_{\mathrm{base}}$ on subsamples of $\mathcal{D}$.  The requirements on the quality of the procedure $\hat{x}_{\mathrm{base}}$ are quite mild, and we prove bounds in the sequel on the false discovery associated with the aggregated model.  In particular, the aggregation method ensures that the averaged model is `stable' in the sense that it contains discoveries that are supported by a large fraction of the subsamples.  Our method generalizes the stability selection method for variable selection \cite{Meins2010Stability,Shah2013Stability} and subspace stability selection for subspace estimation \cite{Taeb2020FD}.  We demonstrate the broad applicability of this methodology in Section~\ref{sec:experiments} by applying it to several examples from Section~\ref{sec:framework}.

Formally, fix a positive even integer $B$ and obtain $B/2$ complementary partitions of the dataset $\mathcal{D}$, each of which partitions $\mathcal{D}$ into two subsamples of equal size.  Let this collection of subsamples be denoted $\{\mathcal{D}^{(\ell)}\}_{\ell=1}^B$, and let $\hat{x}_{\mathrm{base}}(\mathcal{D}^{(\ell)})$ denote the model estimate obtained by applying the base procedure to the subsample $\mathcal{D}^{(\ell)}$.  For any covering pair $(u,v)$ of a model poset $\Lp$, we define:
\begin{equation}
    \Psi_{\mathrm{stable}}(u,v) := 1 - \frac{1}{B}\sum_{\ell=1}^B \frac{\rho(v,\hat{x}_{\mathrm{base}}(\mathcal{D}^{(\ell)}))-\rho(u,\hat{x}_{\mathrm{base}}(\mathcal{D}^{(\ell)}))}{c_{\Lp}(u,v)},
    \label{eqn:psi_stable}
\end{equation}
where $c_{\Lp}(u,v) := \max_{z \in \Lp}\rho(v,z)-\rho(u,z)$.  Appealing to properties of similarity valuations, we have that $\rho(v,\hat{x}_{\mathrm{base}}(\mathcal{D}^{(\ell)}))-\rho(u,\hat{x}_{\mathrm{base}}(\mathcal{D}^{(\ell)})) \geq 0$ and $c_{\Lp}(u,v) \geq 1$.  The term $\rho(v,\hat{x}_{\mathrm{base}}(\mathcal{D}^{(\ell)}))-\rho(u,\hat{x}_{\mathrm{base}}(\mathcal{D}^{(\ell)}))$ measures the additional discovery about $\hat{x}_{\mathrm{base}}(\mathcal{D}^{(\ell)})$ in the model $v$ relative to the model $u$, while the quantity $c_{\Lp}(u,v)$ serves as normalization to ensure that $\Psi_{\mathrm{stable}}(u,v) \in [0,1]$.  In particular, $\Psi_{\mathrm{stable}}(u,v)$ being small implies that the additional discovery represented by the model $v$ over the model $u$ is supported by a large fraction of the subsamples $\{\mathcal{D}^{(\ell)}\}_{\ell=1}^B$.  Consequently, when $\Psi_{\mathrm{stable}}$ is employed in the context of Algorithm~\ref{alg:poset_stability_discrete} in which we greedily grow a path, each `step' in the path corresponds to a discovery that is supported by a large fraction of the subsamples.  We provide theoretical support for this approach in Theorem~\ref{thm:main_stability} in the sequel and the proof proceeds by showing that $\Psi_{\mathrm{stable}}(u,v)$ being small implies that $\mathbb{E}[1-(\rho(u,x^\star)-\rho(v,x^\star))]$ is small; we combine this observation with the telescoping sum formula \eqref{eqn:telescoping_sum} to obtain a bound on the expected false discovery of the model estimated by Algorithm~\ref{alg:poset_stability_discrete}.

When Algorithm~\ref{alg:poset_stability_discrete} with $\Psi = \Psi_{\mathrm{stable}}$ is specialized to Example~\ref{ex:variable-selection} and Example~\ref{ex:subspace-estimation}, we obtain the stability selection procedure of \cite{Meins2010Stability,Shah2013Stability} and the subspace stability selection method of \cite{Taeb2020FD}.  For variable selection in particular, Algorithm~\ref{alg:poset_stability_discrete} with $\Psi = \Psi_{\mathrm{stable}}$ outputs the subset of variables that appear in at least a $1 - \alpha$ fraction of the models estimated by the base procedure when applied to the subsamples $\{\mathcal{D}^{(\ell)}\}_{\ell=1}^B$.  More generally, Algorithm~\ref{alg:poset_stability_discrete} with $\Psi = \Psi_{\mathrm{stable}}$ also provides a procedure for model selection in Examples~\ref{ex:clustering}-\ref{ex:complete-ranking} corresponding to discrete model posets.

\begin{theorem}[false discovery control for Algorithm~\ref{alg:poset_stability_discrete} with $\Psi = \Psi_\mathrm{stable}$] Let $(\Lp, \preceq, \rk(\cdot))$ be a graded discrete model poset with integer-valued similarity valuation $\rho$ and let $\mathcal{S}$ be an associated set of minimal covering pairs.  Let $\hat{x}_{\mathrm{base}}$ be a base estimator.  Suppose the dataset $\mathcal{D}$ employed in the computation of $\Psi_{\text{stable}}$ consists of i.i.d. observations from a distribution parametrized by the true model $x^\star \in \Lp$, and suppose $\hat{x}_{\mathrm{sub}}$ is an estimator obtained by applying $\hat{x}_{\mathrm{base}}$ to a subsample of $\mathcal{D}$ of size $|\mathcal{D}| / 2$.  Fix $\alpha \in (0,1/2)$ and a positive, even integer $B$.  The output $\hat{x}_{\text{stable}}$ from Algorithm~\ref{alg:poset_stability_discrete} with $\Psi = \Psi_{\mathrm{stable}}$ satisfies the following false discovery bound
\begin{equation}
\E[\FD(\hat{x}_{\mathrm{stable}},\xstar)] \leq \hspace{-0.1in}\sum_{\substack{(u,v)\in\setS \cap \mathcal{T}_\text{null}}} \hspace{-0.2in}\frac{\E[\rho(v,\hatx_\mathrm{sub})-\rho(u,\hatx_\mathrm{sub})]^2}{(1-2\alpha)c_{\Lp}(u,v)^2}.
\label{eqn:bound_stab}
\end{equation}
Here the set $\mathcal{T}_\text{null}:=\{(u,v) \text{ covering pair in } \Lp ~|~ \rho(v,x^\star) = \rho(u,x^\star)\}$ consists of all covering pairs $(u,v)$ for which there is no additional discovery in the model $v$ over the model $u$ with respect to the true model $x^\star$.
\label{thm:main_stability}
\end{theorem}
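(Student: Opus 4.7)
The plan is to combine the telescoping decomposition of false discovery from \eqref{eqn:telescoping_sum} with a complementary-pairs Markov-type tail bound, in the spirit of the stability-selection analysis of \cite{Meins2010Stability}. Let $(x_0, x_1, \ldots, x_k)$ with $x_0 = x_\text{least}$ and $x_k = \hat{x}_\text{stable}$ denote the greedy path produced by Algorithm~\ref{alg:poset_stability_discrete}, so that $\Psi_\text{stable}(x_{i-1}, x_i) \leq \alpha$ for every $i$. Expanding $\FD$ via \eqref{eqn:telescoping_sum} and using that $\rho$ is integer-valued and monotone along $\preceq$, any covering pair $(u, v) \notin \mathcal{T}_\text{null}$ contributes a non-positive term $1 - (\rho(v, x^\star) - \rho(u, x^\star)) \leq 0$, while each null covering pair contributes exactly $1$. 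This yields the deterministic bound $\FD(\hat{x}_\text{stable}, x^\star) \leq |\{i : (x_{i-1}, x_i) \in \mathcal{T}_\text{null}\}|$, reducing the task to bounding the expected number of null covering pairs along the path.

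Next I would pass from path-edge counts to a sum indexed by $\mathcal{S} \cap \mathcal{T}_\text{null}$ by combining two observations. First, Definition~\ref{defn:set_S} assigns to each covering pair $(u', v')$ a representative $(u, v) \in \mathcal{S}$ with identical $\rho$-difference profile, so $c_\Lp(u', v') = c_\Lp(u, v)$, $\Psi_\text{stable}(u', v') = \Psi_\text{stable}(u, v)$, and $(u', v') \in \mathcal{T}_\text{null} \Leftrightarrow (u, v) \in \mathcal{T}_\text{null}$. Second -- and I expect this to be the main structural obstacle -- each equivalence class contains at most one path edge. Indeed, if $(x_{i-1}, x_i)$ and $(x_{j-1}, x_j)$ with $i < j$ belonged to the same class, evaluating the common $\rho$-difference profile at $z = x_i$ would give, using property (iii) of Definition~\ref{defn:rho_prop} together with the symmetry of $\rho$, the value $\rho(x_i, x_i) - \rho(x_{i-1}, x_i) = \rk(x_i) - \rk(x_{i-1}) = 1$ for the first edge, whereas $x_i \preceq x_{j-1} \preceq x_j$ forces $\rho(x_j, x_i) - \rho(x_{j-1}, x_i) = \rk(x_i) - \rk(x_i) = 0$ for the second, a contradiction. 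These observations together imply
\[
\E[\FD(\hat{x}_\text{stable}, x^\star)] \;\leq\; \sum_{(u,v) \in \mathcal{S} \cap \mathcal{T}_\text{null}} \Pr[\Psi_\text{stable}(u, v) \leq \alpha].
\]

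The final step is a per-pair tail bound obtained from a complementary-pair Markov argument. Defining $X_\ell := [\rho(v, \hat{x}_\text{base}(\mathcal{D}^{(\ell)})) - \rho(u, \hat{x}_\text{base}(\mathcal{D}^{(\ell)}))]/c_\Lp(u, v) \in [0, 1]$, the two members of each complementary pair $(X_{2\ell-1}, X_{2\ell})$ are functions of disjoint halves of the i.i.d.\ sample $\mathcal{D}$ and are therefore independent with common mean $p_\text{sub}(u, v) := \E[\rho(v, \hat{x}_\text{sub}) - \rho(u, \hat{x}_\text{sub})]/c_\Lp(u, v)$. The elementary inequality $(1 - X_{2\ell-1})(1 - X_{2\ell}) \geq 0$ rearranges to $X_{2\ell-1} X_{2\ell} \geq X_{2\ell-1} + X_{2\ell} - 1$, and averaging over $\ell$ yields $\frac{2}{B}\sum_{\ell=1}^{B/2} X_{2\ell-1} X_{2\ell} \geq 2 \bar{X} - 1$, where $\bar{X} := \frac{1}{B}\sum_\ell X_\ell = 1 - \Psi_\text{stable}(u, v)$. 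On the event $\{\Psi_\text{stable}(u, v) \leq \alpha\}$ the pairwise-product average thus exceeds $1 - 2\alpha$, and Markov's inequality combined with the within-pair identity $\E[X_{2\ell-1} X_{2\ell}] = p_\text{sub}(u, v)^2$ gives $\Pr[\Psi_\text{stable}(u, v) \leq \alpha] \leq p_\text{sub}(u, v)^2/(1 - 2\alpha)$. Summing over $\mathcal{S} \cap \mathcal{T}_\text{null}$ then yields \eqref{eqn:bound_stab}.
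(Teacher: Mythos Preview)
Your proof is correct and follows essentially the same route as the paper's: you inline the content of the paper's two auxiliary lemmas (the telescoping/integrality bound $\FD \le |\{i:(x_{i-1},x_i)\in\mathcal{T}_{\text{null}}\}|$ and the injectivity of the map from path edges to $\mathcal{S}$ via evaluation at $z=x_i$), then apply the identical complementary-pairs inequality $ab\ge a+b-1$ followed by Markov. The only cosmetic difference is that the paper packages the first two steps as separate lemmas and writes the indicator bound before passing to $\mathcal{S}$, whereas you go directly to $\sum_{(u,v)\in\mathcal{S}\cap\mathcal{T}_{\text{null}}}\Pr[\Psi_{\text{stable}}(u,v)\le\alpha]$.
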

In the bound \eqref{eqn:bound_stab}, the numerator $\E[\rho(v,\hatx_{\mathrm{sub}})-\rho(u,\hatx_{\mathrm{sub}})]^2$ of each summand characterizes the quality of the base estimator on subsamples; base estimators for which this term is small, when employed in the computation of $\Psi_{\text{stable}}$ in Algorithm~\ref{alg:poset_stability_discrete}, yield models $\hat{x}_{\text{stable}}$ with small false discovery.

\begin{remark} When specialized to Example~\ref{ex:variable-selection} on variable selection with similarity valuation $\rho_{\text{meet}}$, we recover Theorem 1 of \cite{Shah2013Stability}. Specifically, in \eqref{eqn:bound_stab}, we have that $c_{\Lp}(u,v)= 1$ for any covering pair $(u,v)$ and $\sum_{(u,v)\in\setS\cap\mathcal{T}_\text{null}} \E[\rho(v,\hatx_{\mathrm{sub}})-\rho(u,\hatx_{\mathrm{sub}})]^2 = \sum_{\mathrm{null }~ i } \mathbb{P}[\text{variable }i \text{ selected by }\hat{x}_{\mathrm{sub}}]^2$. 
\end{remark}

Theorem~\ref{thm:main_stability} is general in its applicability to all the discrete posets in Section~\ref{sec:framework}, and it provides an intuitive bound on expected false discovery.  Nonetheless, it requires a characterization of the quality of the base estimator $\hat{x}_{\mathrm{base}}$ employed on subsamples.  When such a characterization is unavailable, the false discovery bound \eqref{eqn:bound_stab} may not be easily computable in practice.  To address this shortcoming and obtain easily computable bounds on false discovery, we consider natural assumptions on the estimator $\hat{x}_{\mathrm{sub}}$ corresponding to the base estimator $\hat{x}_{\mathrm{base}}$ applied to subsamples; these assumptions generalize those developed in \cite{Meins2010Stability,Taeb2020FD} for stability-based methods for variable selection and subspace estimation.  To formulate these assumptions, we introduce some notation. Let $\rk(\Lp) := \max_{u \in \Lp} \rk(u)$ be the largest rank of an element in $\Lp$ and let $\setS_k := \{(u,v) \in \setS~|~ \rk(v) = k\}$ for each  $k \in [\rk(\Lp)]$. 

\begin{assumption}[better than random guessing]For each $k \in [\rk(\Lp)]$ with $\setS_k \neq \emptyset$, we have that
\begin{equation*}
\begin{aligned}
    &\sum_{\substack{(u,v)\in \mathcal{S}_k \cap \mathcal{T}_\mathrm{null}}} \frac{1}{|\mathcal{S}_k \cap \mathcal{T}_\mathrm{null}|} \cdot \frac{\mathbb{E}[\rho(v,\hatx_{\mathrm{sub}})-\rho(u,\hatx_{\mathrm{sub}})]}{c_{\Lp}(u,v)} \\ &\leq \sum_{\substack{(u,v)\in \mathcal{S}_k\setminus \mathcal{T}_\mathrm{null}}} \frac{1}{|\mathcal{S}_k \setminus \mathcal{T}_\mathrm{null}|} \frac{\mathbb{E}[\rho(v,\hatx_{\mathrm{sub}})-\rho(u,\hatx_{\mathrm{sub}})]}{c_{\Lp}(u,v)}.
    \end{aligned}
\end{equation*}
\label{assump:better_than_rg}
\end{assumption}

\begin{assumption}[invariance in mean] For each $k \in [\rk(\Lp)]$ with $\setS_k \neq \emptyset$, we have that $\frac{\mathbb{E}[\rho(v,\hatx_{\mathrm{sub}})-\rho(u,\hatx_{\mathrm{sub}})]}{c_{\Lp}(u,v)}$ is the same for each $(u,v)\in \mathcal{S}_k \cap \mathcal{T}_\mathrm{null}$.
\label{assump:exchangeability}
\end{assumption}

In words, Assumption~\ref{assump:better_than_rg} states that the average normalized difference in similarity valuation of the estimator $\hatx_{\mathrm{sub}}$ is smaller over `null' covering pairs than over non-null covering pairs. Assumption~\ref{assump:exchangeability} states that the expected value of the normalized difference in similarity of $\hatx_{\mathrm{sub}}$ is the same for each `null' covering pair.  For the case of variable selection (Example~\ref{ex:variable-selection}), Assumption~\ref{assump:better_than_rg} reduces precisely
to the `better than random guessing' assumption employed by \cite{Meins2010Stability}, namely that the expected number of true positives divided by the expected number of false positives selected by the estimator $\hatx_{\mathrm{sub}}$ is larger than the same ratio for an estimator that selects variables at random. As a second condition, \cite{Meins2010Stability} required that the random variables in the collection $\{\mathbb{I}[i \in \hatx_{\mathrm{sub}}]: i \text{ null}\}$ are exchangeable. Our Assumption~\ref{assump:exchangeability} when specialized to variable selection reduces to the requirement that each of the random variables in the collection $\{\mathbb{I}[i \in \hatx_{\mathrm{sub}}]: i \text{ null}\}$ has the same mean. As a second illustration, consider the case of total ranking (Example~\ref{ex:complete-ranking}) involving items $a_1,\dots,a_p$, with the least element $\pi_\text{null}$ given by $\pi_\text{null}(a_i) = i, ~ i=1,\dots,p$, the true total ranking by $\pi^\star$, and the estimator on subsamples by $\hat{\pi}_{\mathrm{sub}}$. Fix any $k \in \{1,\dots,p-1\}$. 
 Assumption 1 states that the expected number of pairs $(a_i,a_j) \in  \text{inv}(\hat{\pi}_\mathrm{sub};\pi_\text{null}) \cap \text{inv}(\pi^\star;\pi_\text{null})$ with $j-i = k$ divided by the expected number of pairs $(a_i,a_j) \in \text{inv}(\hat{\pi}_\mathrm{sub};\pi_\text{null}) \setminus \text{inv}(\pi^\star;\pi_\text{null})$ with $j-i = k$ is larger than the same ratio for an estimator that outputs a total ranking at random. Assumption 2 states that the probability that $(a_i,a_j) \in \text{inv}(\hat{\pi}_\mathrm{sub};\pi_\text{null})$ is the same for all pairs $(a_i,a_j)$ with $j-i = k$ and $(a_i,a_j) \not\in \text{inv}(\pi^\star;\pi_\text{null})$. See Appendix~\ref{sec:assumptions_total_ranking} for a formal derivation.

\begin{theorem}[refined false discovery control for Algorithm~\ref{alg:poset_stability_discrete} with $\Psi = \Psi_{\mathrm{stable}}$] Consider the setup of Theorem~\ref{thm:main_stability}, and suppose additionally that Assumptions \ref{assump:better_than_rg} and \ref{assump:exchangeability} are satisfied. The output $\hat{x}_{\mathrm{stable}}$ from Algorithm~\ref{alg:poset_stability_discrete} with $\Psi = \Psi_{\mathrm{stable}}$ satisfies the false discovery bound:
\begin{equation}
\mathbb{E}[\FD(\hatx_{\mathrm{stable}},\xstar)] \leq \sum_{k \in[\rk(\Lp)], \setS_k \neq \emptyset}\frac{q_k^2}{|\setS_k|(1-2\alpha)},
\label{eqn:bound_refined}
\end{equation}
where $q_k = \sum_{(u,v) \in \setS_k} \E[\rho(v,\hatx_{\mathrm{sub}})-\rho(u,\hatx_{\mathrm{sub}})]/c_{\Lp}(u,v)$. 
\label{prop:refined_stability}
\end{theorem}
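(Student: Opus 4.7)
The plan is to start from the bound \eqref{eqn:bound_stab} of Theorem~\ref{thm:main_stability} and regroup its summands by the rank $k = \rk(v)$ of each covering pair $(u,v)$. For each rank $k$ with $\setS_k \cap \mathcal{T}_\mathrm{null} \neq \emptyset$, Assumption~\ref{assump:exchangeability} makes the normalized mean
$$r_k := \frac{\E[\rho(v,\hatx_{\mathrm{sub}}) - \rho(u,\hatx_{\mathrm{sub}})]}{c_{\Lp}(u,v)}$$
a constant across $(u,v) \in \setS_k \cap \mathcal{T}_\mathrm{null}$, so every squared summand in \eqref{eqn:bound_stab} at rank $k$ equals $r_k^2$ and the level contributes $|\setS_k \cap \mathcal{T}_\mathrm{null}| \cdot r_k^2 / (1-2\alpha)$ to the bound.

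Next, I would bound $r_k$ in terms of the global quantity $q_k$ using Assumption~\ref{assump:better_than_rg}. By Assumption~\ref{assump:exchangeability}, the average of the normalized means over $\setS_k \cap \mathcal{T}_\mathrm{null}$ equals $r_k$ itself, and Assumption~\ref{assump:better_than_rg} then upper-bounds this by the analogous average over $\setS_k \setminus \mathcal{T}_\mathrm{null}$. Weighting both inequalities by the respective set sizes and adding gives $r_k \cdot (|\setS_k \cap \mathcal{T}_\mathrm{null}| + |\setS_k \setminus \mathcal{T}_\mathrm{null}|) \leq q_k$, i.e.\ $r_k \leq q_k / |\setS_k|$. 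Substituting into the regrouped form of \eqref{eqn:bound_stab} and using $|\setS_k \cap \mathcal{T}_\mathrm{null}| \leq |\setS_k|$ then yields
$$\E[\FD(\hatx_{\mathrm{stable}},\xstar)] \leq \frac{1}{1-2\alpha}\sum_{k:\setS_k \neq \emptyset} |\setS_k \cap \mathcal{T}_\mathrm{null}| \cdot r_k^2 \leq \sum_{k:\setS_k \neq \emptyset}\frac{q_k^2}{|\setS_k|(1-2\alpha)},$$
which is precisely \eqref{eqn:bound_refined}.

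The only subtlety I foresee is a brief case check on ranks $k$ where $\setS_k \cap \mathcal{T}_\mathrm{null}$ or $\setS_k \setminus \mathcal{T}_\mathrm{null}$ is empty, since Assumption~\ref{assump:better_than_rg} as written implicitly requires both to be non-empty. If the former is empty, the level contributes nothing to \eqref{eqn:bound_stab} and can be dropped; if the latter is empty, then $q_k = r_k |\setS_k|$ by Assumption~\ref{assump:exchangeability} alone, so the key inequality $r_k \leq q_k/|\setS_k|$ holds trivially as an equality. No deeper technical obstacle arises beyond this bookkeeping; the main content of the proof is the packaging of Assumptions~\ref{assump:better_than_rg} and~\ref{assump:exchangeability} that converts the per-pair bound of Theorem~\ref{thm:main_stability} into a per-rank bound.
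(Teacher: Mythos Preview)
Your proposal is correct and follows essentially the same argument as the paper's proof: both start from the bound of Theorem~\ref{thm:main_stability}, stratify by rank, use Assumption~\ref{assump:exchangeability} to identify a common value $r_k$ on $\setS_k \cap \mathcal{T}_\mathrm{null}$, and then use Assumption~\ref{assump:better_than_rg} to conclude $r_k \leq q_k/|\setS_k|$. Your explicit handling of the degenerate cases where $\setS_k \cap \mathcal{T}_\mathrm{null}$ or $\setS_k \setminus \mathcal{T}_\mathrm{null}$ is empty is a nice addition that the paper's proof leaves implicit.
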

 The quantities in the bound \eqref{eqn:bound_refined} may be readily computed in practice. In particular, each $\setS_k$ and $c_\Lp(\cdot,\cdot)$ depends only on the model poset $\Lp$ and each $q_k$ can be approximated as $q_k \approx \frac{1}{B}\sum_{\ell=1}^B \sum_{(u,v)\in\setS_k}\frac{\rho(v,\hat{x}_{\mathrm{base}}(\mathcal{D}^{(\ell)}))-\rho(u,\hat{x}_{\mathrm{base}}(\mathcal{D}^{(\ell)}))}{c_{\Lp}(u,v)}$.  We give characterizations of the sets $\setS_k$ and $c_{\Lp}(\cdot,\cdot)$ for posets corresponding to total ranking, partial ranking, clustering, and causal structure learning in Appendix~\ref{sec:setS}.
\begin{remark}Specializing Theorem~\ref{prop:refined_stability} to the case of variable selection, we arrive at the bound in Theorem 1 of \cite{Meins2010Stability}. Specifically, note that for the Boolean poset with the similarity valuation $\rho_\text{meet}$, $\setS_k = \emptyset$ for $k \geq 2$, $|\setS_1| = \# \text{ variables}$, and $q_1 = \sum_{i} \mathbb{P}[\text{variable }i \text{ selected by }\hatx_{\mathrm{sub}}]$ is the average number of variables selected by the estimator $\hatx_{\mathrm{sub}}$.
\end{remark}

Turning our attention to Examples~\ref{ex:subspace-estimation}-\ref{ex:blind-source-separation}, the situation is considerably more complicated with continuous model posets.  A result for these two cases under the same setup as in Theorem~\ref{thm:main_stability} yields the following bound for $\alpha \in (0,1/2)$ (see Appendix~\ref{sec:proof_general_stability}):
\begin{equation}
\begin{aligned}
\E[\FD(\hatx_{\mathrm{stable}},\xstar)] ~ \leq ~ &\frac{2\alpha+2\sqrt{\alpha}}{1-\alpha}\mathbb{E}[\rk(\hatx_{\mathrm{sub}})] + \mathbb{E}[\sqrt{\FD(\hatx_{\mathrm{sub}},\xstar)}]^2.
\end{aligned}
\label{eqn:general_bound_overview}
\end{equation}
The first term in the bound is a function of the average number of discoveries made by the estimator $\hatx_{\mathrm{sub}}$, and this term is smaller for $\alpha \approx 0$.  The second term in the bound concerns the quality of the estimator $\hatx_{\mathrm{sub}}$.  Specifically, note that Jensen's inequality implies $\E[\sqrt{\FD(\hatx_{\mathrm{sub}}, x^\star)}]^2 \leq \E[\FD(\hatx_{\mathrm{sub}}, x^\star)]$, so that the improvement provided by the estimator $\hatx_{\mathrm{stable}}$ based on subsampling and model averaging over the estimator $\hatx_{\mathrm{sub}}$ that simply employs the base estimator on subsamples is characterized by $\text{var}(\FD(\hatx_{\mathrm{sub}}, x^\star))$.  Thus, the key remaining task as before is to characterize the properties of the estimator $\hatx_{\mathrm{sub}}$.  However, the difficulty with the continuous examples is that conditions akin to Assumptions~\ref{assump:better_than_rg}-\ref{assump:exchangeability} are substantially more challenging to formulate and analyze at an appropriate level of generality.  (One such effort under a limited setting for the case of subspace estimation is described in \cite{Taeb2020FD}.)  It is of interest to develop such a general framework for continuous model posets, and we leave this as a topic for future research.

\subsection{Model Selection via Testing}
\label{sec:testing}
Our second approach to designing a suitable function $\Psi$ to employ in Algorithm~\ref{alg:poset_stability_discrete} is based on testing the following null hypothesis for each (minimal) covering pair $(u,v)$ of a discrete model poset $\Lp$:
\begin{equation}
\begin{aligned}
    H_0^{u,v}&: \rho(v,x^\star)=\rho(u,x^\star), \\ \Psi_{\mathrm{test}}(u,v) &:= \text{p-value corresponding to } H_0^{u,v}.
    \end{aligned}
        \label{eqn:hypothesis}
\end{equation}
The null hypothesis $H_0^{u,v}$ in \eqref{eqn:hypothesis} states that there is no additional discovery about $x^\star$ in the model $v$ relative to the model $u$, and small values of $\Psi_\mathrm{test}(u,v)$ provide evidence for rejecting this null hypothesis and accepting the alternative that $\rho(v,x^\star) > \rho(u,x^\star)$.  When $\Psi_{\mathrm{test}}$ is employed in the context of Algorithm~\ref{alg:poset_stability_discrete} in which we greedily grow a path, each `step' in the path corresponds to a discovery for which we have the `strongest evidence' using the above test.  Our next result provides theoretical support for this method.

\begin{theorem}[false discovery control for Algorithm~\ref{alg:poset_stability_discrete} with $\Psi = \Psi_{\mathrm{test}}$] Let $(\Lp, \preceq, \rk(\cdot))$ be a graded discrete model poset with integer-valued similarity valuation $\rho$ and let $\mathcal{S}$ be an associated set of minimal covering pairs. The output $\hat{x}_{\mathrm{test}}$ of Algorithm~\ref{alg:poset_stability_discrete} with $\Psi = \Psi_{\text{test}}$ satisfies the false discovery bound $\mathbb{P}\left(\FD(\hat{x}_\mathrm{test},x^\star)>0\right) \leq \alpha|\mathcal{S}|$.
\label{thm:testing_general}
\end{theorem}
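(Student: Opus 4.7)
The plan is to reduce the event $\FD(\hatx_{\mathrm{test}},\xstar) > 0$ to the existence of a covering pair along the algorithm's greedy path whose $\rho(\cdot,\xstar)$-increment vanishes, and then to bound this event by union-bounding p-values over the minimal set $\setS$.

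Let $(x_0,x_1,\dots,x_k) = (x_{\mathrm{least}},\dots,\hatx_{\mathrm{test}})$ denote the path produced by the algorithm, and call a covering pair $(x_{i-1},x_i)$ along it \emph{null} if $\rho(x_i,\xstar) = \rho(x_{i-1},\xstar)$. The key lemma I would prove is: $\FD(\hatx_{\mathrm{test}},\xstar) > 0$ if and only if the path contains at least one null covering pair. The third axiom of a similarity valuation gives $\FD(\hatx_{\mathrm{test}},\xstar) > 0 \iff \hatx_{\mathrm{test}} \not\preceq \xstar$. For the forward direction, let $i^\star$ be the smallest index with $x_{i^\star} \not\preceq \xstar$; then $x_{i^\star - 1} \preceq \xstar$ gives $\rho(x_{i^\star - 1},\xstar) = i^\star - 1$, while $x_{i^\star}\not\preceq\xstar$, together with integer-valuedness of $\rho$, the upper bound $\rho(\cdot,\xstar) \leq \rk(\cdot)$, and monotonicity in the first argument, forces $\rho(x_{i^\star},\xstar) = i^\star - 1$, so the $i^\star$-th covering pair is null. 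For the converse, letting $m^\star$ be the smallest index of a null pair and repeating the forward argument shows $x_{m^\star}\not\preceq\xstar$, which by transitivity propagates to $\hatx_{\mathrm{test}} = x_k \not\preceq \xstar$.

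Next, I would invoke the invariance structure underlying the minimal set $\setS$. Since $\Psi_{\mathrm{test}}(u,v)$ is the p-value of $H_0^{u,v}:\rho(v,\xstar) = \rho(u,\xstar)$, and since two covering pairs in the same equivalence class share the increment function $z \mapsto \rho(v,z) - \rho(u,z)$ over all $z \in \Lp$ (hence the same null hypothesis, same test statistic, and same p-value, as emphasized in the discussion preceding Definition~\ref{defn:set_S}), $\Psi_{\mathrm{test}}$ is constant on equivalence classes, and whether a pair is null (evaluated at $z = \xstar$) is determined by its class. Therefore any null covering pair in the algorithm's path has a null representative $(u,v) \in \setS$ satisfying $\Psi_{\mathrm{test}}(u,v) \le \alpha$, since the algorithm only admits pairs whose $\Psi_{\mathrm{test}}$ is at most $\alpha$.

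The final step is a straightforward union bound: for each null $(u,v) \in \setS$, the hypothesis $H_0^{u,v}$ is true, so validity of the p-value yields $\mathbb{P}(\Psi_{\mathrm{test}}(u,v) \le \alpha) \le \alpha$; summing over at most $|\setS|$ null representatives gives $\mathbb{P}(\FD(\hatx_{\mathrm{test}},\xstar) > 0) \le |\setS|\alpha$. The main obstacle is the first equivalence, which crucially uses integer-valuedness of $\rho$ combined with all three similarity-valuation axioms; once it is in place, the rest is a textbook union bound, made efficient by $\setS$ collapsing all equivalent covering pairs into a single shared test.
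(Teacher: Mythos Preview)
Your proposal is correct and follows essentially the same route as the paper's proof: establish that $\FD(\hatx_{\mathrm{test}},\xstar)>0$ forces a null covering pair on the greedy path, pass to a representative in $\setS$ via the invariance defining minimal covering pairs, and finish with a union bound over valid p-values. The only minor difference is in how you obtain the key implication: the paper bounds the telescoping sum \eqref{eqn:telescoping_sum} termwise by indicators (its Lemma~\ref{lemma:discrete_integer}), whereas you argue directly from the third similarity-valuation axiom by locating the first index $i^\star$ with $x_{i^\star}\not\preceq\xstar$; your version additionally yields the converse (unneeded here, but correct).
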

The multiplicity factor involving the cardinality of the set of minimal covering pairs $\mathcal{S}$ is akin to a Bonferroni-type correction, and it highlights the significance of identifying a set of minimal covering pairs of small cardinality.  We emphasize that although Algorithm~\ref{alg:poset_stability_discrete} with $\Psi = \Psi_{\text{test}}$ proceeds via sequential hypothesis testing, the procedure is applicable to general model classes with no underlying Boolean logical structure; in particular, it is the graded poset structure underlying our framework that facilitates such methodology.

As an illustration of the multiplicity factor $|\mathcal{S}|$ for different settings, we have that $|\setS| = p(p-1)$ for partial ranking; $|\mathcal{S}| = \sum_{k=1}^{p-1}\big(\begin{smallmatrix}p\\k+1 \end{smallmatrix}\big)\sum_{\ell=1}^k \big(\begin{smallmatrix}k+1\\\ell\end{smallmatrix}\big)$ for clustering; and $|\mathcal{S}| = \frac{p(p-1)}{2}$ for total ranking.  See Appendix~\ref{sec:setS} for further details.

The graded poset structure of a model class can also yield more powerful model selection procedures than those obtained by the greedy procedure of Algorithm~\ref{alg:poset_stability_discrete}.  We give one such illustration next in which a collection of model estimates that each exhibits zero false discovery (with high probability) can be `combined' to derive a more complex model that also exhibits zero false discovery.  Formally, a poset $(\Lp, \preceq)$ is said to possess a \emph{join} if for each $x,y \in \Lp$ there exists a $z \in \Lp$ satisfying $(i)$ $z \succeq x, z \succeq y$ and $(ii)$ for any $w \in \Lp$ with $w \succeq x, w \succeq y$, we have $w \succeq z$; such a $z$ is called the join of $x,y$ and posets that possess a join are called \emph{join semi-lattices} (these are dual to the notion of a meet defined in Section~\ref{sec:framework}).  Except for the posets in Examples~\ref{ex:causal-learning}, ~\ref{ex:partial-ranking}, and ~\ref{ex:blind-source-separation}, the posets in the other examples are join semi-lattices (see Appendix~\ref{sec:meet_semi}).  For a model class that is a join semi-lattice, suppose we are provided estimates $\hatx^{(1)}, \dots, \hatx^{(m)}$ of a true model $x^\star$ such that $\FD(\hatx^{(j)}, x^\star) = 0, ~ j=1,\dots,m$ (for example, by appealing to greedy methods such as Algorithm~\ref{alg:poset_stability_discrete} or its variants).  Appealing to the properties of a similarity valuation, we can conclude that the join $\hatx_{\text{join}}$ of $\hatx^{(1)},\dots,\hatx^{(m)}$ satisfies $\FD(\hatx_{\text{join}}, x^\star) = 0$; in general, $\rk(\hatx_{\text{join}})$ is larger than $\rk(\hatx^{(1)}),\dots,\rk(\hatx^{(m)})$, and therefore, this procedure is one way to obtain a more powerful model by combining less powerful ones while still retaining control on the amount of false discovery.  The following result formalizes matters.
\begin{proposition}[using joins to obtain more powerful models] Let $(\Lp, \preceq, \rk(\cdot))$ be a graded discrete model poset that is a join semi-lattice with integer-valued similarity valuation $\rho$ and let $\mathcal{S}$ be an associated set of minimal covering pairs.  Consider a collection of estimates $\hat{x}^{(1)},\dots,\hat{x}^{(m)}$ of a true model $x^\star$ and let $\hatx_{\mathrm{join}}$ denote the join of $\hat{x}^{(1)},\dots,\hat{x}^{(m)}$.  Suppose for each $\hat{x}^{(j)}, ~ j=1,\dots,m$ there is a path from the least element of $\Lp$ to $\hat{x}^{(j)}$ such that every covering pair $(u,v)$ along the path satisfies $\Psi_\mathrm{test}(u,v) \leq \alpha$.  Then we have the false discovery bound $\mathbb{P}(\FD(\hatx_{\mathrm{join}},x^\star) > 0) \leq \alpha|\mathcal{S}|$.
\label{prop:join}
\end{proposition}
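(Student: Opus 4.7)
The plan is to reduce to the argument underlying Theorem~\ref{thm:testing_general} and then invoke the join semi-lattice structure as a single additional step. I would start by defining the ``good event''
\[
E := \bigcap_{(u,v) \in \mathcal{S} \cap \mathcal{T}_\mathrm{null}} \{\Psi_\mathrm{test}(u,v) > \alpha\}.
\]
Because $\Psi_\mathrm{test}(u,v)$ is a valid p-value for $H_0^{u,v} : \rho(v,x^\star) = \rho(u,x^\star)$, a union bound over $\mathcal{S} \cap \mathcal{T}_\mathrm{null}$ gives $\mathbb{P}(E^c) \le \alpha |\mathcal{S}|$. This is essentially the same event that powers Theorem~\ref{thm:testing_general}, so my next step is to show that on $E$, every covering pair $(x_{i-1}, x_i)$ in $\Lp$ with $\Psi_\mathrm{test}(x_{i-1}, x_i) \le \alpha$ must be ``informative,'' i.e.\ satisfies $\rho(x_i, x^\star) - \rho(x_{i-1}, x^\star) \ge 1$. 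This relies on Definition~\ref{defn:set_S}: any non-minimal covering pair has the same discovery increments as some $(u,v) \in \mathcal{S}$, so the associated hypothesis coincides with $H_0^{u,v}$, and the good event $E$ forbids any null $(u,v) \in \mathcal{S}$ from being rejected.

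With that observation in hand, I would apply the telescoping identity \eqref{eqn:telescoping_sum} along each of the $m$ assumed paths to $\hat{x}^{(j)}$. Since every step along the path satisfies $\Psi_\mathrm{test} \le \alpha$, and (on $E$) therefore corresponds to a non-null covering pair, every summand in \eqref{eqn:telescoping_sum} is zero, yielding $\FD(\hat{x}^{(j)}, x^\star) = 0$ for all $j$. The third property in Definition~\ref{defn:rho_prop} then forces $\hat{x}^{(j)} \preceq x^\star$ for each $j = 1, \dots, m$.

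Now the join structure does its job: since $x^\star$ is an upper bound of $\hat{x}^{(1)}, \dots, \hat{x}^{(m)}$ and $\hat{x}_\mathrm{join}$ is by definition the least upper bound, we conclude $\hat{x}_\mathrm{join} \preceq x^\star$. Applying Definition~\ref{defn:rho_prop} once more gives $\rho(\hat{x}_\mathrm{join}, x^\star) = \rk(\hat{x}_\mathrm{join})$, whence $\FD(\hat{x}_\mathrm{join}, x^\star) = 0$ on $E$. Combining yields the desired $\mathbb{P}(\FD(\hat{x}_\mathrm{join}, x^\star) > 0) \le \mathbb{P}(E^c) \le \alpha |\mathcal{S}|$.

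The main obstacle I would expect is the bookkeeping in the first step: making precise the claim that a single union bound over minimal covering pairs in $\mathcal{S}$ suffices to control rejections along arbitrary covering pairs in $\Lp$ that may appear in the $m$ assumed paths. This is exactly where Definition~\ref{defn:set_S} carries the weight. The nontrivial (and appealing) feature of the bound is that no additional multiplicity factor in $m$ appears: conditional on $E$, \emph{every} path of the assumed form is simultaneously correct, so arbitrarily many path-based estimators can be joined together without worsening the guarantee.
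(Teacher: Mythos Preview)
Your proposal is correct and follows essentially the same approach as the paper's proof: both reduce to a single union bound over $\mathcal{S}\cap\mathcal{T}_{\mathrm{null}}$, use the minimal-covering-pair property to transfer control to arbitrary covering pairs along the $m$ paths, conclude $\hat{x}^{(j)}\preceq x^\star$ for each $j$, and then invoke the least-upper-bound property of the join. The only cosmetic difference is that the paper argues by contrapositive ($\FD(\hat{x}_{\mathrm{join}},x^\star)>0\Rightarrow\FD(\hat{x}^{(j)},x^\star)>0$ for some $j$, via Lemma~\ref{lemma:discrete_integer}) while you argue forward on the good event $E$; your remark that each telescoping summand ``is zero'' should technically read ``is $\le 0$'' (integer-valuedness only gives $\rho(x_i,x^\star)-\rho(x_{i-1},x^\star)\ge 1$), but since $\FD\ge 0$ the conclusion $\FD(\hat{x}^{(j)},x^\star)=0$ follows immediately.
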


\section{Experiments}
\label{sec:experiments}
We describe the results of numerical experiments on synthetic and real data in this section.  We employ Algorithm~\ref{alg:poset_stability_discrete} with with both $\Psi = \Psi_{\text{stable}}$ and $\Psi = \Psi_{\text{test}}$.  For the testing-based approach, the manner in which p-values are obtained is described in the context of each application and we set $\alpha$ equal to $0.05 / |\mathcal{S}|$ for a given set $\mathcal{S}$ of minimal covering pairs.  For the stability-based approach, we consider $B=100$ subsamples obtained by partitioning a given dataset $50$ times into subsamples of equal size and we set $\alpha = 0.3$.

To obtain a desired level of expected false discovery with the stability-based approach, we appeal to Theorem~\ref{prop:refined_stability} as follows.  In the bound \eqref{eqn:bound_refined}, each $q_k$ can be derived by averaging over subsamples (as explained in the discussion after the statement of Theorem~\ref{prop:refined_stability}) and all the other quantities are known.  The values of these $q_k$'s in turn depend on the model estimates returned by the base procedure $\hatx_{\mathrm{base}}$ employed on the subsamples; in particular, if the estimate is the least element then each $q_k$ equals zero, and as $\hatx_{\mathrm{base}}$ returns models of increasing complexity, the value of each $q_k$ generally increases.  Building on this observation, we tune parameters in $\hatx_{\mathrm{base}}$ to return increasingly more complex models until the bound \eqref{eqn:bound_refined} is at the desired level.  For causal structure learning we employ Greedy Equivalence Search as our base procedure with tuning via the regularization parameter that controls model complexity \cite{Chickering2002OptimalSI}.  For clustering, we employ $k$-means \cite{Lloyd1982LeastSQ} as the base procedure with tuning via the number of clusters.  For our illustrations with ranking problems (both partial and total) in which we are provided with pairwise comparison data, our base procedure first employs the maximum-likelihood estimator associated to the Bradley-Terry model \cite{Bradley1952RankAO}, which returns a vector of positive weights $\hat{w}$ of dimension equal to the number of items.  Using this $\hat{w}$ we associate numerical values to covering pairs; each covering pair corresponds to increasing the complexity of a model by including a pair of items $(i,j)$ to the inversion set (in total ranking) or to the relation specifying a strict partial order (in partial ranking), and the value we assign is the difference $\hat{w}_j - \hat{w}_i$.  Our base procedure then constructs a path starting from the least element by greedily adding covering pairs of largest value at each step, provided these values are larger than a regularization parameter $\lambda > 0$; smaller values of $\lambda$ yield model estimates of larger complexity, while larger values yield estimates of smaller complexity.

\begin{figure*}[t]
\centering
\includegraphics[scale = 0.525]{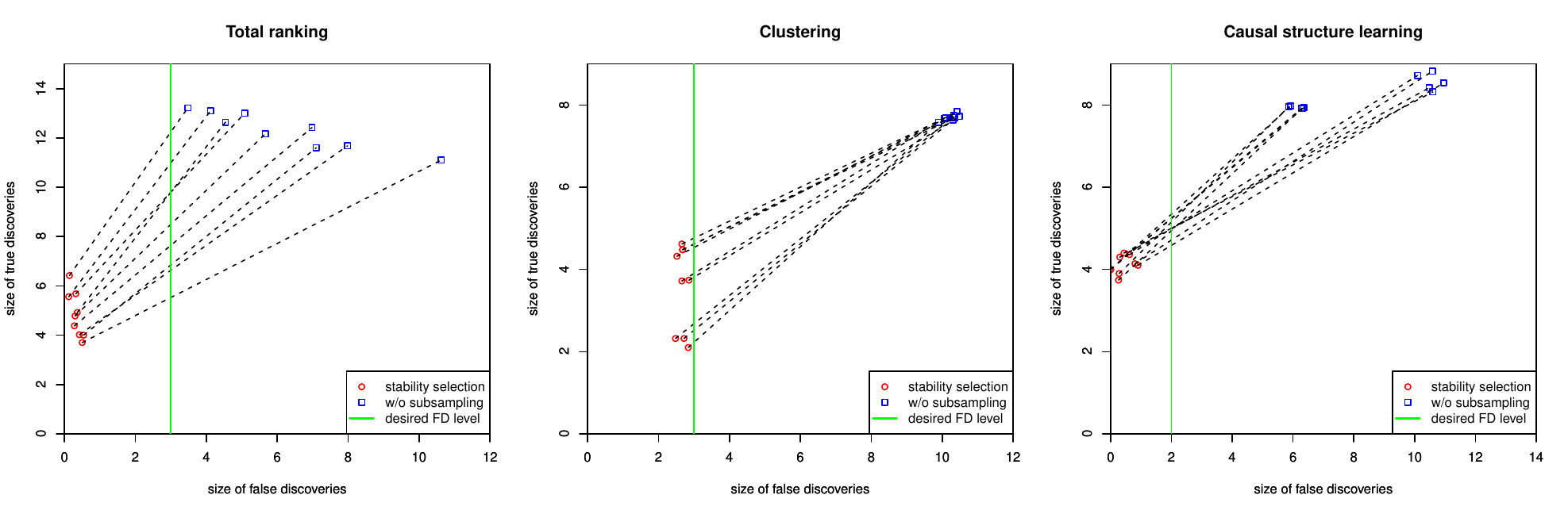}
\caption{Comparing the performance of Algorithm~\ref{alg:poset_stability_discrete} with $\Psi = \Psi_\text{stable}$ versus a non-subsampling approach for total ranking, clustering, and causal structure learning.  Each problem setting corresponds to a pair of dots and a connecting line. The comparison is in terms of the amount of false and true discoveries.}
\label{fig:synthetic}
\end{figure*}

Finally, for causal structure learning, we restrict our search during the model aggregation phase of Algorithm~\ref{alg:poset_stability_discrete} to paths that yield CPDAG models in which each connected component in the skeleton has diameter at most two; such a restriction facilitates a simple characterization of covering pairs.  This restriction is not imposed on the output of the base procedure. Moreover, the true model can be an arbitrary CPDAG.

\subsection{Synthetic data}
We describe experiments with synthetic data using Algorithm~\ref{alg:poset_stability_discrete} with $\Psi = \Psi_{\text{stable}}$.

\par{\textbf{Total ranking}}: We consider a total ranking problem with $p=30$ items.  We observe $n$ i.i.d. games between players $i,j$ with the outcome modeled as $y_{ij\ell} \sim \texttt{Bernoulli}(w^\star_i/(w^\star_i+w^\star_j))$ for $\ell = 1,\dots,n$, where $w^\star \in \R^p_{++}$ is a feature vector and $n \in \{200,250,300\}$.  We fix $w^\star$ by first defining $\tilde{w} \in \R^p_{++}$ as $\tilde{w}_i = \tau^{i-1}, ~ i=1,\dots,p$ for $\tau \in \{0.97,0.98,0.99\}$, and then setting $w^\star$ equal to a permutation of $\tilde{w}$ in which we swap the entries $1,3$, the entries $8,10$, the entries $15,17$, the entries $20,22$, and the entries $25,27$.  Smaller values of $\tau$ correspond to better-distinguished items, and hence to easier problem instances.  The base procedure is tuned such that the expected false discovery in \eqref{eqn:bound_refined} is at most three.

\par{\textbf{Clustering}}: We consider a clustering problem with $p=20$ variables.  The true partition consists of $12$ clusters with five variables in one cluster, another five variables in a second cluster, and the remaining variables in singleton clusters.  The $p$ variables are independent two-dimensional Gaussians.  Each variable in cluster $i$ has mean $(\mu_i,0)$ and covariance $\tfrac{1}{4}I$; each $\mu_i = i/d$ for $d \in \{3,3.5,4\}$.  Smaller values of $d$ correspond to better-separated clusters, and hence to easier problem instances.  We are provided $n$ i.i.d. observations of these variables for $n \in \{40,65,90\}$.  The base procedure is tuned such that the expected false discovery in \eqref{eqn:bound_refined} is at most three.

\par{\textbf{Causal structure learning}}: We consider a causal structure learning problem over $p=10$ variables.  The true DAG is generated by considering a random total ordering of the variables, drawing directed edges from higher nodes in the ordering to lower nodes independently with probability $v \in \{0.13,0.18\}$, and defining a linear structural causal model in which each variable is a linear combination of its parents plus independent Gaussian noise with mean zero and variance $\tfrac{1}{4}$.  The coefficients in the linear combination are drawn uniformly at random from the interval $[0.5,0.7]$.  Larger values of $v$ lead to denser DAGs, and hence to harder problem instances.  We obtain $n$ i.i.d. observations from these models for $n \in \{1000, 1200, 1400, 1600, 1800\}$.  The base procedure is tuned such that the expected false discovery in \eqref{eqn:bound_refined} is at most two.

For the preceding three problem classes, we compare the performance of our stability-based methodology versus that of a non-subsampled approach in which the base procedure (with suitable regularization) is applied to the entire dataset.  For total ranking, the non-subsampled procedure simply extracts the ranking implied by the maximum-likelihood estimator associated to the Bradley-Terry model.  For clustering, the non-subsampled approach employs $k$-means where the number of clusters is chosen to maximize the average silhoutte score \cite{Rousseeuw1987SilhouettesAG}.  For causal structure learning, the non-subsampled approach applies Greedy Equivalence Search with a regularization parameter chosen based on holdout validation ($70\%$ of the data is used for training and the remaining $30\%$ for validation).  Figure~\ref{fig:synthetic} presents the results of our experiments averaged over $50$ trials, and as the plots demonstrate, our stability-based methods yield models with smaller false discovery than the corresponding non-subsampled approaches.  This reduction in false discovery comes at the expense of a loss in power, which is especially significant for some of the harder problem settings.  However, in all cases our stability-based method provides the desired level of control on expected false discovery.

\subsection{Real data} We describe next experiments with real data.

\par{\textbf{Partial ranking of tennis players}}: We consider the task of partially ranking six professional tennis players -- Berdych, Djokovic, Federer, Murray, Nadal, and Wawrinka -- based on historical head-to-head matches of these players up to the end of 2022.  We apply Algorithm~\ref{alg:poset_stability_discrete} with $\Psi = \Psi_{\text{stable}}$ and with the base procedure tuned such that the expected false discovery in \eqref{eqn:bound_refined} is at most three.  The output of our procedure is a rank-nine model given by the partial ranking \{Djokovic, Nadal\} $>$ \{Berdych, Murray, Wawrinka\} and \{Federer\} $>$ \{Berdych, Wawrinka\}.

\par{\textbf{Total ranking of educational systems}}: We consider the task of totally ordering $p = 15$ OECD countries in reading comprehension based on test results from the Programme for International Student Assessment (PISA).  We take the null ranking as the ordering of the countries based on performance in $2015$ (see the first row in Table~\ref{table_pisa}), and we wish to update this model based on 2018 test scores (data obtained from \cite{KevinWang}), with the number of test scores ranging from $696$ to $3414$. We apply Algorithm~\ref{alg:poset_stability_discrete} with $\Psi = \Psi_{\text{test}}$ and we obtain p-values by modeling the average test score of each country as a Gaussian.  We set $\alpha = 0.05/\tfrac{p(p-1)}{2}$ (here $\tfrac{p(p-1)}{2}$ is the cardinality of a set of minimal covering pairs), which yields the guarantee from Theorem~\ref{thm:testing_general} that the estimated model has zero false discovery with probability at least $0.95$.  The output of our procedure is the rank-nine model given by the total ranking in the second row in Table~\ref{table_pisa}.

\begin{table}[h!]
\scalebox{0.76}{
    \begin{tabular}{|c|c|c|c|c|c|c|c|c|c|c|c|c|c|c|c|}
    \hline
    & 1&2&3&4&5&6&7&8&9&10&11&12&13&14&15\\
    \hline
    2015 base ranking & CAN&FIN&IRL&EST&KOR&JPN&NOR&NZL&DEU&POL&SvN&NLD&AUS&SWE&DNK\\\hline
   testing approach & FIN&IRL&EST&CAN&KOR&JPN&NOR&NZL&POL&DEU&AUS&SWE&SvN&DNK&NLD\\
    \hline
\end{tabular}}
\caption{Ranking of nations according to PISA reading comprehension scores; the first column is the 2015 ranking of $15$ OECD countries which serves as the base ranking for our analysis: based on test results in 2018, we update this ranking using Algorithm~\ref{alg:poset_stability_discrete} based on $\Psi = \Psi_\text{test}$ with the result shown in the second column.}
\label{table_pisa}
\end{table}

\par{\textbf{Learning causal structure among proteins}}: We aim to learn causal relations underlying $p = 11$ phosphoproteins and phospholipids from a mass spectroscopy dataset containing $854$ measurements of abundance levels in an observational setting \cite{sachs}.  We apply Algorithm~\ref{alg:poset_stability_discrete} with $\Psi = \Psi_\mathrm{stable}$ and with the base procedure tuned such that the expected false discovery in \eqref{eqn:bound_refined} is at most two.  Figure~\ref{fig:sachs} presents the rank-six CPDAG model obtained from our algorithm and compares to the estimates obtained from the literature \cite{nicolai_pnas,sachs,IGSP}.  Our CPDAG estimate has fewer edges than those in \cite{nicolai_pnas,sachs,IGSP}, which do not explicitly provide control on false discovery.  

\begin{figure}[!htbp]
\begin{minipage}{0.45\linewidth}
\usetikzlibrary{graphs}
\begin{tikzpicture}
    \tikzstyle{every node}=[font=\footnotesize]
    \graph[clockwise, radius=1.2cm, n=11]
    {
       p38/p38, ERK/ERK, MEK/MEK, JNK/JNK, PIP3/PIP3, PKA/PKA, PKC/PKC, PIP2/PIP2, PLcG/PLcG,AKT/AKT,RAF/RAF;
    };
    \draw[-] (RAF)--(MEK);
    \draw[-] (PIP2)--(PIP3);
    \draw[-] (AKT)--(ERK);
    \draw[-] (AKT)--(PKA);
     \draw[->] (p38)--(PKC);
     \draw[->] (JNK)--(PKC);
\end{tikzpicture}
\end{minipage}
\hspace{-0.2in}
\begin{minipage}{0.45\linewidth}
\scalebox{0.8}{
\begin{tabular}{cccccccc}\hline
      Edge & \cite{sachs}a & \cite{sachs}b & \cite{mooij} & \cite{Eaton} & \cite{nicolai_pnas}a & \cite{nicolai_pnas}b & \cite{IGSP}  \\ \hline
       MEK -- RAF & $\leftarrow$ &$\leftarrow$ & $\rightarrow$ & $\rightarrow$ & & -- & -- \\
       AKT -- ERK &  & $\leftarrow$&$\rightarrow$ & $\leftarrow$ &-- & -- & -- \\
       PIP3 -- PIP2  &$\rightarrow$ & $\rightarrow$& $\rightarrow$ &$\leftarrow$&$\rightarrow$&$\rightarrow$&$\rightarrow$ \\ 
       AKT -- PKA  &  $\leftarrow$&$\leftarrow$ & $\leftarrow$ & $\leftarrow$& &$\leftarrow$&$\rightarrow$ \\
       P38 $\rightarrow$ PKC & $\leftarrow$ & $\leftarrow$ & $\leftarrow$ & $\leftarrow$ & & -- & \\
       JNK $\rightarrow$ PKC & $\leftarrow$ & $\leftarrow$ & $\leftarrow$ & $\leftarrow$ & $\leftarrow$& -- & \\
        \hline
      \end{tabular}}
\end{minipage}
\caption{\textbf{left}: CPDAG obtained by Algorithm~\ref{alg:poset_stability_discrete} with $\Psi = \Psi_\text{stable}$; \textbf{right}: comparing the edges obtained by our algorithm (shown in the leftmost column) with different causal discovery methods (with indicated reference).  The consensus network according to \cite{sachs} is denoted here by ``\cite{sachs}a'' and their reconstructed network by ``\cite{sachs}b''; The authors in \cite{nicolai_pnas} apply two methods, and the results are presented by ``\cite{nicolai_pnas}a'' and ``\cite{nicolai_pnas}b''. Here, ``$-$'' means that the edge direction is not identified.}
\label{fig:sachs}
\end{figure}
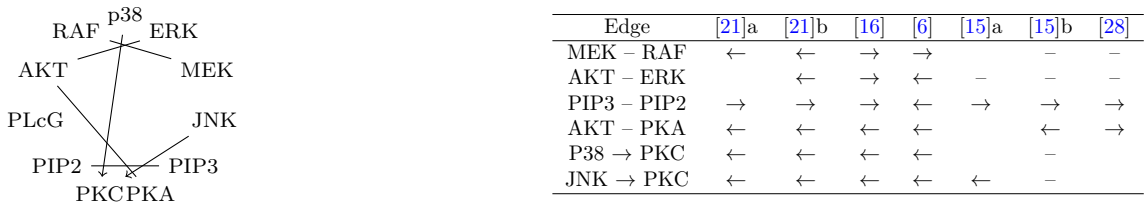

\section{Proofs}
\label{sec:proofs}
For notational ease, for a covering pair $(u,v)$ and element $z$ in the poset $\Lp$, we define $f(u,v;z) \triangleq \rho(v,z)-\rho(u,z)$. Recall that $\mathcal{T}_\text{null} \triangleq \{(u,v) \text{ covering pair in } \Lp ~|~ \rho(v,x^\star)=\rho(u,x^\star)\}$. Our analysis relies on the following lemmas with the proofs presented in Appendix~\ref{sec:proof_of_main_paper_lemmas}.
\begin{lemma}Fix a discrete model poset $\Lp$ with integer-valued similarity valuation $\rho$.  For any model $x \in \Lp$ with $(x_0,\dots,x_k)$ being any path from the least element $x_0 = x_{\text{least}}$ to $x_k = x$, we have that $\FD(x,x^\star) \leq \sum_{i=1}^k \mathbb{I}[(x_{i-1},x_i) \in \mathcal{T}_{\text{null}}]$.  As a result, we have that $\FD(x,x^\star) > 0$ implies the existence of some $i$ for which $(x_{i-1},x_i) \in \mathcal{T}_{\text{null}}$.
\label{lemma:discrete_integer}
\end{lemma}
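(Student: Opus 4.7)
The plan is to argue term-by-term on the telescoping decomposition (4), showing that each summand $1-[\rho(x_i,x^\star)-\rho(x_{i-1},x^\star)]$ is dominated by the indicator $\mathbb{I}[(x_{i-1},x_i)\in\mathcal{T}_\mathrm{null}]$, and then sum.

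To set up, I would observe that since $(x_0,\dots,x_k)$ is a path from the least element with $\rk(x_{\mathrm{least}})=0$ and each covering step increments the rank by one, we have $\rk(x)=k$. Moreover, by the third axiom of a similarity valuation, $\rho(x_0,x^\star)=\rho(x_{\mathrm{least}},x^\star)=0$ because $x_{\mathrm{least}}\preceq x^\star$. Therefore the telescoping identity (4) from the paper gives
\begin{equation*}
\FD(x,x^\star) \;=\; k-\rho(x_k,x^\star) \;=\; \sum_{i=1}^{k}\bigl(1-[\rho(x_i,x^\star)-\rho(x_{i-1},x^\star)]\bigr).
\end{equation*}

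Next I would analyze each summand by splitting on whether the covering pair $(x_{i-1},x_i)$ lies in $\mathcal{T}_\mathrm{null}$. If it does, then by definition $\rho(x_i,x^\star)-\rho(x_{i-1},x^\star)=0$, so the summand equals $1$, which coincides with the indicator. If it does not, then $\rho(x_i,x^\star)>\rho(x_{i-1},x^\star)$ (monotonicity from the second axiom gives $\geq$, and being outside $\mathcal{T}_\mathrm{null}$ upgrades this to $>$); here is where the integer-valued hypothesis is essential, since it forces the gap to be at least $1$, and hence the summand is $\leq 0 = \mathbb{I}[(x_{i-1},x_i)\in\mathcal{T}_\mathrm{null}]$. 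In both cases the inequality $1-[\rho(x_i,x^\star)-\rho(x_{i-1},x^\star)]\leq \mathbb{I}[(x_{i-1},x_i)\in\mathcal{T}_\mathrm{null}]$ holds.

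Summing from $i=1$ to $k$ yields the stated bound $\FD(x,x^\star)\leq \sum_{i=1}^{k}\mathbb{I}[(x_{i-1},x_i)\in\mathcal{T}_\mathrm{null}]$. The corollary in the second sentence is immediate: since the right-hand side is a nonnegative integer sum of indicators, $\FD(x,x^\star)>0$ forces at least one indicator to equal $1$, so some covering pair along the path must lie in $\mathcal{T}_\mathrm{null}$.

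The only mildly subtle step is the one that uses the integer-valued assumption on $\rho$; without it, the strict inequality from monotonicity could give a gap arbitrarily close to $0$ rather than the gap of at least $1$ needed to absorb the ``$+1$'' in each summand. Everything else is a direct application of the similarity-valuation axioms and the identity \eqref{eqn:telescoping_sum} already established in Section~\ref{sec:greedy_approaches}.
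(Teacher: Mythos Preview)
Your proof is correct and follows essentially the same approach as the paper: invoke the telescoping identity \eqref{eqn:telescoping_sum}, use monotonicity of $\rho$ in its first argument (second axiom) together with the integer-valued assumption to bound each summand by the corresponding indicator, and sum. The paper's own argument is more terse but cites exactly the same ingredients; your write-up simply spells out the case split and the role of integrality more explicitly.
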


\begin{lemma} For any covering pairs $(u,v)$ and $(x,y)$ with $v \preceq x$, we cannot have that $f(u,v; z) = f(x,y; z)$ for all $z \in \Lp$.
\label{lemma:two_covering_pairs}
\end{lemma}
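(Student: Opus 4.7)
The plan is to exhibit a single witness $z \in \Lp$ at which $f(u,v;z)$ and $f(x,y;z)$ disagree. The natural candidate, given the hypothesis $v \preceq x$, is $z = x$, because evaluating $\rho$ at an endpoint of one of the covering pairs activates the third axiom of a similarity valuation (namely $\rho(a,b) = \rk(a)$ iff $a \preceq b$), and it is this axiom that converts chain relationships into concrete rank arithmetic.

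First I would compute $f(u,v;x)$. Since $u \preceq v \preceq x$, both $u$ and $v$ lie below $x$, so the third axiom of Definition~\ref{defn:rho_prop} collapses $\rho(u,x)$ and $\rho(v,x)$ to $\rk(u)$ and $\rk(v)$ respectively. Because $(u,v)$ is a covering pair in the graded poset $\Lp$, we have $\rk(v) = \rk(u)+1$, hence $f(u,v;x) = 1$. Next I would compute $f(x,y;x)$: reflexivity gives $\rho(x,x) = \rk(x)$, and combining the symmetry of $\rho$ with $x \preceq y$ (from the covering assumption on $(x,y)$) yields $\rho(y,x) = \rho(x,y) = \rk(x)$, so $f(x,y;x) = 0$. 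The two values disagree, which is exactly the claim.

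There is no serious obstacle to this argument — once one commits to the witness $z = x$, the computation is a mechanical application of the axioms together with the grading property of $\Lp$. The only piece of insight required is recognizing why $z = x$ is the right choice: it is the place along the chain $v \preceq x \preceq y$ where the pair $(u,v)$ sits entirely weakly below $z$ while the pair $(x,y)$ straddles $z$, forcing the two increments $f(u,v;x)$ and $f(x,y;x)$ to take the opposite values $1$ and $0$. Any alternative choice (e.g.\ $z=y$ gives both increments equal to $1$) fails to separate them, which is why evaluation at $x$ is essential.
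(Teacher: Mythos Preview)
Your proof is correct and follows essentially the same approach as the paper's: exhibit a single witness along the chain $u\preceq v\preceq x\preceq y$ and reduce both values of $f$ to rank arithmetic via the third axiom of a similarity valuation. The only difference is that the paper takes $z=v$ rather than $z=x$ (both give $f(u,v;z)=1$ and $f(x,y;z)=0$), so your closing claim that evaluation at $x$ is ``essential'' is slightly off---$z=v$ works just as well.
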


\subsection{Proof of Theorem~\ref{thm:main_stability}}
\label{proof_stability}
For notational convenience, we let $\hat{x}^{(\ell)}_\mathrm{base} =  \hat{x}_\mathrm{base}(\mathcal{D}^{(\ell)})$ where $\{\mathcal{D}^{(\ell)}\}_{\ell=1}^B$ are the subsamples of $\mathcal{D}$.  Let $\hat{x}_{\text{stable}}$ be the output of Algorithm~\ref{alg:poset_stability_discrete} with $\rk(\hat{x}_{\text{stable}}) = \hat{k}$, and let $(x_0,\dots,x_{\hat{k}})$ be the associated path from the least element $x_0 = x_{\text{least}}$ to $x_{\hat{k}} = \hat{x}_{\text{stable}}$; we have that $\frac{1}{B}\sum_{\ell=1}^B{f(x_{i-1},x_i;\hat{x}_\mathrm{base}^{(\ell)})}/{c_\Lp(x_{i-1},x_i)} \geq(1-\alpha)$ for each $i = 1,\dots,\hat{k}$.  Let $\mathcal{C} \triangleq \{(x_{i-1},x_i) ~|~ i=1,\dots,\hat{k}\}$.  From Lemma~\ref{lemma:discrete_integer}, we also have that $\FD(\hat{x}_\text{stable},x^\star) \leq \left|\mathcal{C} \cap \mathcal{T}_{\text{null}} \right|$.  Combining these observations, we conclude that $\FD(\hat{x}_\text{stable},x^\star) \leq \sum_{(u,v) \in \mathcal{C} \cap \mathcal{T}_{\text{null}}} \mathbb{I}\left[\frac{1}{B}\sum_{\ell=1}^B \frac{f(u,v;\hat{x}_\mathrm{base}^{(\ell)})}{c_\Lp(u,v)} \geq 1-\alpha\right]$.  Next, we observe that for each covering pair in $\mathcal{C}$ there exists a covering pair in the minimal set $\mathcal{S}$ with the values of $f$ and $c_{\Lp}$ remaining the same; moreover, distinct covering pairs in $\mathcal{C}$ map to distinct covering pairs in $\mathcal{S}$ from Lemma~\ref{lemma:two_covering_pairs}.  Thus, we conclude that $\FD(\hat{x}_\text{stable},x^\star) \leq \sum_{{(u,v)\in\mathcal{S}\cap \mathcal{T}_\text{null}}} \mathbb{I}\left[\frac{1}{B}\sum_{\ell=1}^B \frac{f(u,v;\hat{x}_\mathrm{base}^{(\ell)})}{c_\Lp(u,v)}\geq 1-\alpha\right]$.  We then have the following sequence of steps:

\begin{equation}
\begin{aligned}
   \FD(\hat{x}_\text{stable},x^\star)  &\leq \sum_{\substack{(u,v)\in\\\mathcal{S}\cap \mathcal{T}_\text{null}}} \mathbb{I}\left[\frac{1}{B/2}\sum_{\ell=1}^{B/2}\sum_{i\in\{0,1\}} \hspace{-0.05in}\frac{f(u,v;\hat{x}_\mathrm{base}^{(2\ell-i)})}{c_\Lp(u,v)} \geq 2-2\alpha\right]\\
    &\leq  \sum_{\substack{(u,v)\in\\\mathcal{S}\cap \mathcal{T}_\text{null}}} \mathbb{I}\left[\frac{1}{B/2}\sum_{\ell=1}^{B/2}\prod_{i\in\{0,1\}} \hspace{-0.07in}\frac{f(u,v;\hat{x}_\mathrm{base}^{(2\ell-i)})}{c_\Lp(u,v)} \geq 1-2\alpha\right].
    \end{aligned}
    \label{eq:1}
\end{equation}
The second inequality follows from $ab \geq a+b-1$ for $a,b \in [0,1]$, where we set $a =  f(u,v;\hat{x}_\mathrm{base}^{(2\ell-1)})/{c_\Lp(u,v)}$ and  $b = {f(u,v;\hat{x}_\mathrm{base}^{(2\ell)})}/{c_\Lp(u,v)}$, and note that $f(u,v;z)/c_\Lp(u,v) \in [0,1]$ for any $z \in \Lp$.  Taking expectations on both sides of the preceding inequality, we finally seek a bound on $\mathbb{P}\left[\frac{1}{B/2}\sum_{\ell=1}^{B/2}\prod_{i\in\{0,1\}} \frac{f(u,v;\hat{x}_\mathrm{base}^{(2\ell-i)})}{c_\Lp(u,v)} \geq 1-2\alpha\right]$.  We have that:
\begin{equation}
\begin{aligned}
    \mathbb{P}\left[\frac{1}{B/2}\sum_{\ell=1}^{B/2}\prod_{i\in\{0,1\}} \frac{f(u,v;\hat{x}_\mathrm{base}^{(2\ell-i)})}{c_\Lp(u,v)} \geq 1-2\alpha\right]
    &\leq \frac{\mathbb{E}\left[\frac{1}{B/2}\sum_{\ell=1}^{B/2}\prod_{i\in\{0,1\}} \frac{f(u,v;\hat{x}_\mathrm{base}^{(2\ell-i)})}{c_\Lp(u,v)}\right]}{1-2\alpha}\\&= \frac{\mathbb{E}\left[f(u,v;\hatx_\mathrm{sub})\right]^2}{c_\Lp(u,v)^2(1-2\alpha)}.
    \end{aligned}
        \label{eq:2}
\end{equation}
Here $\hatx_\mathrm{sub}$ represents the estimator corresponding to the base procedure $\hat{x}_\mathrm{base}$ applied to a subsample of $\mathcal{D}$ of size $|\mathcal{D}|/2$.  The inequality follows from Markov's inequality, and the equality follows by noting that complementary bags are independent and identically distributed. Combining \eqref{eq:1} and \eqref{eq:2}, we obtain the desired result.

\subsection{Proof of Theorem~\ref{prop:refined_stability}}
\label{sec:proof_stability_refined}
\label{sec:proof_after_assumptions_stability}
We have from Theorem~\ref{thm:main_stability} that:
\begin{equation*}
\E[\FD(\hat{x}_\text{stable},x^\star)]\leq \sum_{k=1}^{\rk(\Lp)}\sum_{(u,v)\in\setS_k \cap \mathcal{T}_\text{null}} \frac{\mathbb{E}[f(u,v;\hatx_\mathrm{sub})]^2}{(1-2\alpha)c_\Lp(u,v)^2}.
\end{equation*}
Our goal is to bound $\E[f(u,v;\hat{x}_\mathrm{sub})]/c_\Lp(u,v)$ for $(u,v)\in \mathcal{S}_k \cap \mathcal{T}_\text{null}$. Note that each $q_k$ may be decomposed as
\begin{equation*}
\begin{aligned}
    q_k &= \sum_{\substack{(u,v)\in \\\mathcal{S}_k \cap \mathcal{T}_\text{null}}} \frac{\E[f(u,v;\hat{x}_\mathrm{sub})]}{c_\Lp(u,v)} +  \sum_{\substack{(u,v)\in \\\mathcal{S}_k \setminus \mathcal{T}_\text{null}}} \frac{\E[f(u,v;\hat{x}_\mathrm{sub})]}{c_\Lp(u,v)}. 
\end{aligned}
\end{equation*}
Appealing to Assumption 1, we have that 
\begin{equation*}
\begin{aligned}
    q_k &\geq \left(1+\frac{|\setS_k \setminus \mathcal{T}_\text{null}|}{|\setS_k \cap \mathcal{T}_\text{null}|}\right)\sum_{(u,v)\in \mathcal{S}_k \cap \mathcal{T}_\text{null}} \frac{\E[f(u,v;\hat{x}_\mathrm{sub})]}{c_\Lp(u,v)}. 
\end{aligned}
\end{equation*}
Rearranging the terms, we obtain that
\begin{equation*}
   \sum_{(u,v)\in \mathcal{S}_k \cap \mathcal{T}_\text{null}} \frac{\E[f(u,v;\hat{x}_\mathrm{sub})]}{c_\Lp(u,v)}  \leq \frac{q_k}{|\mathcal{S}_k|} |\mathcal{S}_k \cap \mathcal{T}_\text{null}|.
\end{equation*}
Appealing to Assumption 2, we have for each $(u,v)\in \mathcal{S}_k \cap \mathcal{T}_\text{null}$ that $\frac{\E[f(u,v;\hat{x}_\mathrm{sub})]}{c_{\Lp}(u,v)} \leq \frac{q_k}{|\mathcal{S}_k|}$. Plugging this bound into the conclusion of Theorem~\ref{thm:main_stability} yields the desired result.

\subsection{Proof of Theorem~\ref{thm:testing_general}}
\label{proof_thm_testing_general}
Let $\hat{x}_{\text{test}}$ be the output of Algorithm~\ref{alg:poset_stability_discrete} with $\rk(\hat{x}_{\text{test}}) = \hat{k}$, and let $(x_0,\dots,x_{\hat{k}})$ be the associated path from the least element $x_0 = x_{\text{least}}$ to $x_{\hat{k}} = \hat{x}_{\text{test}}$; we have that $\Psi_{\text{test}}(x_{i-1},x_i) \leq \alpha$ for each $i = 1,\dots,\hat{k}$.  Let $\mathcal{C} \triangleq \{(x_{i-1},x_i) ~|~ i=1,\dots,\hat{k}\}$.  From Lemma~\ref{lemma:discrete_integer}, we have that $\FD(\hat{x}_{\text{test}}, x^\star) > 0$ implies the existence of a covering pair $(u,v) \in \mathcal{C} \cap \mathcal{T}_{\text{null}}$ for which $\Psi_{\text{test}}(u,v) \leq \alpha$.  For each covering pair in $\mathcal{C}$, there exists a covering pair in $\mathcal{S}$ with the same value of $\Psi_{\text{test}}$; thus, there exists $(u,v) \in \mathcal{S} \cap \mathcal{T}_{\text{null}}$ such that $\Psi_{\text{test}}(u,v) \leq \alpha$.  Consequently:
\begin{equation}
\begin{aligned}
\mathbb{P}(\FD(\hat{x}_{\text{test}}, x^\star) > 0) &\leq \mathbb{P}\left(\exists (u,v) \in \setS\cap\mathcal{T}_\text{null}  \text{ s.t. } \Psi_{\text{test}}(u,v) \leq \alpha  \right) \\ &\leq \sum_{(u,v) \in \setS\cap\mathcal{T}_\text{null}} \mathbb{P}(\Psi_{\text{test}}(u,v) \leq \alpha) \leq \alpha |\mathcal{S}|. 
\end{aligned}
\label{eqn:test}
\end{equation}
Here the second inequality follows from the union bound and the final inequality follows from the fact that the random variable $\Psi_\text{test}(u,v)$ is a valid p-value under the null hypothesis $\rho(v,x^\star)=\rho(u,x^\star)$.

\subsection{Proof of Proposition~\ref{prop:join}}
\label{proof:prop_join}
For each $\hat{x}^{(j)}, ~ j=1,\dots,m$, we are given that there is a path from $x_{\text{least}}$ to $\hat{x}^{(j)}$ such that $\Psi_{\text{test}}$ is bounded by $\alpha$ for each covering pair in the path; let $\mathcal{C}^{(j)}$ be the set of these covering pairs.  As described in Section~\ref{sec:testing} in the discussion preceding Proposition~\ref{prop:join}, $
\FD(\hat{x}_\text{join},x^\star) > 0$ implies that $\FD(\hat{x}^{(j)},x^\star) > 0$ for some $j = 1,\dots,m$, which in turn implies from Lemma~\ref{lemma:discrete_integer} the existence of a covering pair $(u,v) \in \mathcal{C}^{(j)} \cap \mathcal{T}_{\text{null}}$ for some $j =1,\dots,m$. Following the same logic as in the proof of Theorem~\ref{thm:testing_general}, we conclude that $
\FD(\hat{x}_\text{join},x^\star) > 0$ implies the existence of $(u,v) \in \mathcal{S} \cap \mathcal{T}_{\text{null}}$ such that $\Psi_{\text{test}}(u,v) \leq \alpha$.  Using the same reasoning as in \eqref{eqn:test}, we have the desired conclusion.

\section{Discussion} \label{sec:discussion} We present a general framework to endow a collection of models with poset structure.  This framework yields a systematic approach for quantifying model complexity and false positive error in an array of complex model selection tasks in which models are not characterized by Boolean logical structure (such as in variable selection).  Moreover, we develop methodology for controlling false positive error in general model selection problems over posets, and we describe experimental results that demonstrate the utility of our framework.

We finally discuss some future research questions that arise from our work.  On the mathematical front, a basic open question is to characterize fundamental tradeoffs between false positive and false negative errors that are achievable by any procedure in model selection over a general poset; this would generalize the Neyman-Pearson lemma on optimal procedures for testing between two hypotheses.  On the computational and methodological front, it is of interest to develop new methods to control false positive error as well as false discovery rates, including in settings involving continuous model posets.

\section*{Acknowledgements}
We thank Marina Meila and Lior Pachter for insightful conversations. AT received funding from the Royalty Research Fund at the University of Washington.
PB received funding from the European Research Council under the European Union’s Horizon 2020 research and innovation program (grant agreement No. 786461).  VC was supported in part by the Air Force Office of Scientific Research grants FA9550-22-1-0225 and FA9550-23-1-0204 and by the National Science Foundation grant DMS 2113724.
\if 0
\begin{figure}[!htbp]
\begin{minipage}{0.45\linewidth}
\usetikzlibrary{graphs}
\begin{tikzpicture}
    \tikzstyle{every node}=[font=\footnotesize]
    \graph[clockwise, radius=1.2cm, n=11]
    {
       p38/p38, ERK/ERK, MEK/MEK, JNK/JNK, PIP3/PIP3, PKA/PKA, PKC/PKC, PIP2/PIP2, PLcG/PLcG,AKT/AKT,RAF/RAF;
    };
    \draw[-] (RAF)--(MEK);
    \draw[-] (PIP2)--(PIP3);
    \draw[-] (AKT)--(ERK);
    \draw[-] (AKT)--(PKA);
     \draw[->] (p38)--(PKC);
     \draw[->] (JNK)--(PKC);
\end{tikzpicture}
\end{minipage}
\hspace{-0.2in}
\begin{minipage}{0.45\linewidth}
\scalebox{0.6}{
\begin{tabular}{cccccccc}\hline
      Edge & \cite{sachs}a & \cite{sachs}b & \cite{mooij} & \cite{Eaton} & \cite{nicolai_pnas}a & \cite{nicolai_pnas}b & \cite{IGSP}  \\ \hline
       MEK -- RAF & $\leftarrow$ &$\leftarrow$ & $\rightarrow$ & $\rightarrow$ & & -- & -- \\
       AKT -- ERK &  & $\leftarrow$&$\rightarrow$ & $\leftarrow$ &-- & -- & -- \\
       PIP3 -- PIP2  &$\rightarrow$ & $\rightarrow$& $\rightarrow$ &$\leftarrow$&$\rightarrow$&$\rightarrow$&$\rightarrow$ \\ 
       AKT -- PKA  &  $\leftarrow$&$\leftarrow$ & $\leftarrow$ & $\leftarrow$& &$\leftarrow$&$\rightarrow$ \\
       P38 $\rightarrow$ PKC & $\leftarrow$ & $\leftarrow$ & $\leftarrow$ & $\leftarrow$ & & -- & \\
       JNK $\rightarrow$ PKC & $\leftarrow$ & $\leftarrow$ & $\leftarrow$ & $\leftarrow$ & $\leftarrow$& -- & \\
        \hline
      \end{tabular}}
\end{minipage}
\caption{\textbf{left}: CPDAG obtained by Algorithm~\ref{alg:poset_stability_discrete} with $\Psi = \Psi_\text{stable}$; \textbf{right}: comparing the edges obtained by our algorithm (shown in the leftmost column) with different causal discovery methods (with indicated reference).  The consensus network according to \cite{sachs} is denoted here by ``\cite{sachs}a" and their reconstructed network by ``\cite{sachs}b"; The authors in \cite{nicolai_pnas} apply two methods, and the results are presented by "\cite{nicolai_pnas}a" and "\cite{nicolai_pnas}b" \Peter{I added " " symbols twice, as above.}. Here, ``$-$" means that the edge direction is not identified.}
\label{fig:sachs}
\end{figure}
\fi

\if 0
\begin{table}
\scalebox{0.5}{
    \begin{tabular}{|c|c|c|c|c|c|c|c|c|c|c|c|c|c|c|c|}
    \hline
    & 1&2&3&4&5&6&7&8&9&10&11&12&13&14&15\\
    \hline
    2015 base ranking & CAN&FIN&IRL&EST&KOR&JPN&NOR&NZL&DEU&POL&SvN&NLD&AUS&SWE&DNK\\\hline
   testing approach & FIN&IRL&EST&CAN&KOR&JPN&NOR&NZL&POL&DEU&AUS&SWE&SvN&DNK&NLD\\
    \hline
\end{tabular}}
\caption{Ranking of nations according to PISA reading comprehension scores; the first column is the 2015 ranking of $15$ OECD countries which serves as the base ranking for our analysis: based on test results in 2018, we update this ranking using Algorithm~\ref{alg:poset_stability_discrete} based on $\Psi = \Psi_\text{test}$ with the result shown in the second column.}
\label{table_pisa}
\end{table}
\fi

\bibliography{main}
\bibliographystyle{abbrv}

\newpage
\appendix
\section*{\centering Appendix}

\section{Meet Semi-lattice and Join Semi-lattice Properties and Posets in Examples~\ref*{ex:variable-selection}-\ref*{ex:blind-source-separation}}
\label{sec:meet_semi}
The Boolean poset (Example~\ref*{ex:variable-selection}), partition poset (Examples~\ref*{ex:clustering}-\ref*{ex:multisample-testing}), integer poset (Example~\ref*{ex:multiple-changepoint}), permutation poset (Example~\ref*{ex:complete-ranking}), and subspace poset (Example~\ref*{ex:subspace-estimation}) are all known in the literature to be lattices (and consequently meet-semi and join semi-lattices); see \cite{stanley_2011}. 

We next show that for Examples~\ref*{ex:partial-ranking} and \ref*{ex:blind-source-separation} associated with partial ranking and blind-source separation, the corresponding posets are also meet semi-lattices. Consider the partial ranking setting in Example~\ref*{ex:partial-ranking}. Let $\mathcal{R}_1$ and $\mathcal{R}_2$ be two relations that are irreflexive, asymmetric, and transitive. Recalling that the partial ordering is based on inclusion, it is clear that the relations $\mathcal{R} = \{(a,b): (a,b) \in \mathcal{R}_1, (a,b) \in \mathcal{R}_2\}$ is the unique largest rank element in the partial ranking poset such that $\mathcal{R} \preceq \mathcal{R}_1$ and $\mathcal{R} \preceq \mathcal{R}_2$. Furthermore, for any $\tilde{\mathcal{R}}$ with  $\tilde{\mathcal{R}} \preceq \mathcal{R}_1$ and $\tilde{\mathcal{R}} \preceq \mathcal{R}_2$, we clearly have that $\tilde{\mathcal{R}} \preceq \mathcal{R}$. Consider the blind-source separation setting in Example~\ref*{ex:blind-source-separation}. Let $x_1$ and $x_2$ be two sets of linearly independent subsets of unit norm vectors. Recalling that the partial ordering in the associated poset is based on inclusion, it is clear that the set $y = x_1 \cap x_2$ is the unique largest rank element in the partial ranking poset such that $y \preceq x_1$ and $y \preceq x_2$. Furthermore, for every $z$ with $z \preceq x_1$ and $z \preceq x_2$, we have that $z \preceq y$.

We show that the poset corresponding to causal structure learning setting (Example~\ref*{ex:causal-learning}) is not meet semi-lattice or join semi-lattice. As a counterexample, consider the CPDAGs $\mathcal{C}_i$ for $i = 1,2,3,4$ shown in Figure~\ref{fig:CPDAG_ex}. Notice that $\mathcal{C}_3 \preceq \mathcal{C}_1$, $\mathcal{C}_3 \preceq \mathcal{C}_2$, $\mathcal{C}_4 \preceq \mathcal{C}_1$, and $\mathcal{C}_4 \preceq \mathcal{C}_2$. Notice also that $\mathcal{C}_3$ and $\mathcal{C}_4$ are both CPDAGs with the largest rank that are smaller (in a partial order sense) than $\mathcal{C}_1$ and $\mathcal{C}_2$. We thus can conclude that the poset is not meet semi-lattice. Similarly, $\mathcal{C}_1$ and $\mathcal{C}_2$ are both CPDAGs with the smallest rank that are larger (in a partial order sense) than $\mathcal{C}_3$ and $\mathcal{C}_4$. We thus can conclude that the poset is not join semi-lattice.

We next show that the poset for Example~\ref*{ex:partial-ranking} is not join semi-lattice with a simple counterexample. Consider as an example elements $x_1 = \{(1,2)\}$ and $x_2 = \{(2,1)\}$. Note that there does not exist an element $z$ such that $x_1 \preceq z$ and $x_2 \preceq z$. Thus, the poset is not join semi-lattice.

Finally, we show that the poset corresponding to blind-source separation (Example~\ref*{ex:blind-source-separation}) is not join semi-lattice. Consider a collection of $p+1$ rank-1 elements in this poset, each element consisting of a single $p$ dimensional vector. Then, evidently, there cannot exist an element $z$ consisting of a set of vectors that contains all of the vectors in the rank-1 elements, while satisfying the linear independence condition.  

\begin{figure}[h!]
\centering
\subfloat[$\mathcal{C}_1$]{
\begin{tikzpicture}[main/.style = {draw, circle}] 
\node[main] (1) {$1$}; 
\node[main] (2) [below left of=1]  {$2$}; 
\node[main] (3) [below right of=1] {$3$}; 
\draw[->] (2) -- (1);
\draw[->] (3) -- (1);
\end{tikzpicture}} 
\hspace{0.2in}
\subfloat[$\mathcal{C}_2$]{\begin{tikzpicture}[main/.style = {draw, circle}] 
\node[main] (1) {$1$}; 
\node[main] (2) [below left of=1]  {$2$}; 
\node[main] (3) [below right of=1] {$3$}; 
\draw (2) -- (1);
\draw (3) -- (1);
\end{tikzpicture}} 
\hspace{0.2in}
\subfloat[$\mathcal{C}_3$]{\begin{tikzpicture}[main/.style = {draw, circle}] 
\node[main] (1) {$1$}; 
\node[main] (2) [below left of=1]  {$2$}; 
\node[main] (3) [below right of=1] {$3$}; 
\draw (2) -- (1);
\end{tikzpicture}} 
\hspace{0.2in}
\subfloat[$\mathcal{C}_4$]{\begin{tikzpicture}[main/.style = {draw, circle}] 
\node[main] (1) {$1$}; 
\node[main] (2) [below left of=1]  {$2$}; 
\node[main] (3) [below right of=1] {$3$}; 
\draw (3) -- (1);
\end{tikzpicture}}
\caption{Four CPDAGs. Here, CPDAGs $\mathcal{C}_3$ and $\mathcal{C}_4$ are both largest complexity models that are smaller (in partial order sense) than $\mathcal{C}_1$ and $\mathcal{C}_2$. Similarly, CPDAGs $\mathcal{C}_1$ and $\mathcal{C}_2$ are the smallest complexity models that are larger (in a partial order sense) than $\mathcal{C}_3$ and $\mathcal{C}_4$.}
\label{fig:CPDAG_ex}
\end{figure}
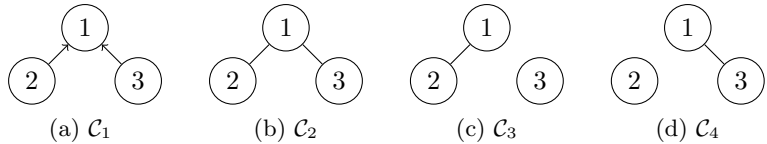

\section{{Gradedness of the CPDAG Poset}}
\label{sec:graded_cpdag}
To show gradedness of the CPDAG poset, we rely on the following classical result in \cite{Chickering2002OptimalSI}. 
\begin{proposition}[Theorem 4 of \cite{Chickering2002OptimalSI}]The conditional dependencies of the DAG $\cg_1$ are contained in the conditional dependencies of the DAG $\cg_2$ if and only if there is a sequence of allowed edge reversals $R$ (these are edge reversals that do not create or remove v-structures so do not change the conditional dependency relationships) and edge deletions $D$ that once applied to $\cg_2$ yield $\cg_1$.
\label{prop:Chickering}
\end{proposition}
We are now ready to prove that the CPDAG poset is graded. As described in Example \ref{ex:causal-learning}, we use the rank function that computes the number of edges in the input CPDAG. The first property of gradedness is that if $\cc_1 \preceq \cc_2$, then $\mathrm{rank}(\cc_1) \leq \mathrm{rank}(\cc_2)$. To prove this, suppose an edge exists between a pair of nodes $(i,j)$ in $\cc_1$ that are not connected in $\cc_2$. The presence of this edge in $\cc_1$ implies that in the model $\cc_1$, the random variables corresponding to $(i,j)$ are conditionally dependent, conditioned on any other collection of nodes in the model $\cc_1$. However, the absence of the edge in $\cc_2$ implies that in the model $\cc_2$, the random variables corresponding to nodes $i,j$ are conditionally independent, conditioned on some collection of nodes in the model $\cc_1$. This leads to a contradiction as otherwise not every conditional dependency encoded by $\cc_1$ would be encoded by $\cc_2$. Thus, every pair of nodes connected in $\cc_1$ must be connected in $\cc_2$, allowing us to conclude that $\mathrm{rank}(\cc_1) \leq \mathrm{rank}(\cc_2)$. The second property of gradedness is that if the CPDAGs $(\cc_1,\cc_2)$ are covering pairs, then $\mathrm{rank}(\cc_1)+1 = \mathrm{rank}(\cc_2)$. Suppose as a point of contradiction that $\cc_2$ has two or more edges than $\cc_1$. Our goal is to show that we can identify a CPDAG $\cc$ in between $\cc_2$ and $\cc_1$.  To that end, take $\cg_2$ and $\cg_1$ as two DAGs in these CPDAGs. From Proposition~\ref{prop:Chickering}, there exists a sequence of edge reversals and edge deletions applied to $\cg_1$ to obtain $\cg_1$, where importantly, there must be at least two edge deletions. Take $\cg$ to be the DAG in the sequence only after one edge deletion. By appealing to Porposition~\ref{prop:Chickering}, all the conditional dependencies of $\cg$ are contained in $\cg_2$. Similarly, since there is a sequence of allowed edge reversals and deletions that when applied to $\cg$ yield $\cg_1$, all the conditional dependencies of $\cg_1$ are contained in $\cg$. Let $\cc$ be the CPDAG obtained by completing $\cg$. We have shown that $\cc \neq \cc_1,\cc_2$ and $\cc_1 \preceq \cc \preceq \cc_2$. This leads to a contradiction since $(\cc_1,\cc_2)$ are covering pairs.

\section{Proof that $\mathrm{Eq.}$ (1) is a Similarity Valuation Function}
\label{sec:meet_rho}
Recall that 
\begin{equation}\rho_\text{meet}(x,y) = \max_{z \preceq x, z \preceq y} \rk(z).
\label{eqn:rho_meet_1}
\end{equation}
By definition, $\rho_\text{meet}(\cdot,\cdot)$ is a symmetric function. We will now show that it satisfies the three properties in Definition~\ref*{defn:rho_prop} for any pair of elements $x,y \in \Lp$. For the first property, we can conclude $\rho_\text{meet}(x,y) \geq 0$ since by definition, the rank function returns a non-negative integer for all the elements in the poset. Again, because of the property of the rank function in a graded poset, a feasible $z$ (satisfying the constraints $z\preceq x$, $z \preceq y$) will necessarily have $\rk(z) \leq \min\{\rk(x),\rk(y)\}$. For the second property, consider any $w \in \Lp$ with $x \preceq w$. Note that:
\begin{equation}
\rho_\text{meet}(w,y) = \max_{z \preceq w, z \preceq y} \rk(z).
\label{eqn:rho_meet_2}
\end{equation}
Then, any feasible $z$ in \eqref{eqn:rho_meet_1} is also feasible in \eqref{eqn:rho_meet_2} by the transitive property of posets. Therefore, $\rho_\text{meet}(x,y) \leq \rho_\text{meet}(w,y)$.  For the third property, first note that if $x \preceq y$, then $z = x$ is feasible in \eqref{eqn:rho_meet_1} and thus $\rho_\text{meet}(x,y) \geq \rk(x)$. Since also $\rho_\text{meet}(x,y) \leq \rk(x)$ by the second property of similarity valuations, we have that $\rho_\text{meet}(x,y) = \rk(x)$. Now suppose that $\rho_\text{meet}(x,y) = \rk(x)$. By \eqref{eqn:rho_meet_1}, we conclude that there exists a feasible $z$ ($z\preceq x$, $z\preceq y$) such that $\rk(z) = \rk(x)$. By the property of the rank function, we have that if $\rk(z) = \rk(x)$ and $z \preceq x$, then $z = x$. Since we have additionally that $z \preceq y$, we conclude that $x \preceq y$. 

 \section{Specializing Bound $\mathrm{Eq.}$ (8) for Different Problem Settings}
\label{sec:setS}

\subsection{Partial Ranking}
Let $S = \{a_1,a_2,\dots,a_p\}$ be the set of $p$ elements. We use the similarity valuation $\rho:= \rho_\text{meet}$ in $\mathrm{Eq.}$ (1) of the main paper. 
\subsubsection{Characterizing $\setS$ for Partial Ranking}
We construct a set $\setS$ satisfying the properties in Definition~\ref*{defn:set_S} of the main paper. Specifically, we let:
$$\mathcal{S} = \{(a_i,a_j): i \neq j\},$$
with $|\setS_1| = p(p-1)$ and $\setS_k = \emptyset$ for every $k \geq 2$. 

We will show that set $\setS$ as constructed above satisfies Definition~\ref*{defn:set_S}. First, consider any covering pair $({u'},{v'}) \notin \setS$. Here, ${u'}$ and ${v'}$ are relations and ${v'} = u' \cup (a_i,a_j)$ for some $i \neq j$. Then, for any $z \in \Lp$, it is easy to see that
$$\rho({v'},z)-\rho({u'},z) = \mathbb{I}[(a_i,a_j) \in z] = \rho(v,z)-\rho(u,z),$$ 
where $v = \{(a_i,a_j)\}$ and $u= \emptyset$. Clearly, $\rk(v) \leq \rk({v'})$.

To show the second property, consider covering pairs $(\{(a_i,a_j)\},\emptyset) \in \setS$ and $(\{(a_k,a_l)\},\emptyset) \in \setS$. By construction of the set $\setS$, $(a_i,a_j) \neq (a_k,a_l)$. Let $z = \{(a_i,a_j)\}$. Then, it is straightforward to see that $\rho(\{(a_i,a_j)\},z) - \rho(\emptyset,z) = 1$ but $\rho(\{(a_k,a_l)\},z) - \rho(\emptyset,z) = 0$.

\subsubsection{Characterizing $c_\Lp(x,y)$ for Covering Pair $(x,y)$} Since for any $z$, $\rho(y,z)-\rho(x,z) = \mathbb{I}((a_i,a_j) \in z)$ for some $(a_i,a_j)$. Thus, $c_\Lp(x,y) = 1$.

\subsubsection{Refined False Discovery Bound for Partial Ranking}
Let $\hat{x}_\text{stable}$ be output of Algorithm~\ref*{alg:poset_stability_discrete} with $\Psi = \Psi_\text{stable}$. Then: 
$$\mathbb{E}[\FD(\hat{x}_\text{stable},x^\star)] \leq \frac{q_1^2}{(1-2\alpha)p(p-1)},$$
where 
$$q_1 = \sum_{i\neq j} \mathbb{I}[(a_i,a_j) \in \hat{x}_\text{sub}].$$
Here, $\hat{x}_\text{sub}$ is the estimated partial ranking from supplying $n/2$ samples to the base estimator. We can use the following data-driven approximation for $q_1$:$q_1 \approx \frac{1}{B}\sum_{\ell=1}^B\sum_{i \neq j} \mathbb{I}[(a_i,a_j) \in \hat{x}_\text{base}(\mathcal{D}^{(\ell)})]$ with $\hat{x}_\text{base}(\mathcal{D}^{(\ell)}), l = 1,2,\dots,B$ representing the estimates from subsampling. 

\subsection{Total Ranking}
\label{sec:total_ranking_setS}
Let $S = \{a_1,a_2,\dots,a_p\}$ be the set of $p$ elements. Let $\pi_\text{null}(a_i) = i$ for every $i = 1,2,\dots,p$. We use the similarity valuation $\rho:= \rho_\text{total-ranking}$ in $\mathrm{Eq.}$ (2) of the main paper. As each element in the poset corresponds to a function $\pi : S \to S$, we will use this functional notation throughout.  

\subsubsection{Characterizing $\setS$ for Total Ranking}
We construct a set $\setS$ satisfying the properties in Definition~\ref*{defn:set_S} of the main paper. Initialize $\setS = \emptyset$. Then, for every relation $(a_i,a_j)$ with $i < j$, we augment $\setS$ as follows:
$$\setS = \setS \cup (\pi_1,\pi_2),$$
where $\pi_1,\pi_2$ are covering pairs. Here, $\pi_2$ is any rank $j-i$ element in the poset with the relation $(a_i,a_j)$ in its corresponding inversion set. Furthermore, we let $\pi_1$ be a rank $j-i-1$ element that is covered by $\pi_2$ and does not contain $(a_i,a_j)$ in its inversion set. Recalling that $\setS_k = \{(\pi_1,\pi_2) \in \setS, \rk(\pi_2) = k\}$, we have that for every $k = 1,2,\dots,p-1$
$$|\setS_k| = p-k.$$

We will show that set $\setS$ as constructed above satisfies Definition~\ref*{defn:set_S}. First, consider any covering pair $(\tilde{\pi}_1,\tilde{\pi}_2) \notin \setS$. Then by definition, the corresponding inversion sets are nested, i.e. $\text{inv}(\tilde{\pi}_2;\pi_\text{null}) \supseteq \text{inv}(\tilde{\pi}_1;\pi_\text{null})$ with the difference being a single relation. We will denote this relation by $(a_i,a_j)$ with $j > i$. Consider the covering pair $(\pi_1,\pi_2) \in \setS$ where $(a_i,a_j)$ is in the inversion set of $\pi_2$ but not in the inversion set of $\pi_1$. Then, for any $\pi$, we have that
$$\rho(\pi_2,\pi)-\rho(\pi_1,\pi) = \mathbb{I}((a_i,a_j) \in \text{inv}(\pi;\pi_\text{null})) = \rho(\tilde{\pi}_2,\pi)-\rho(\tilde{\pi}_1,\pi).$$
Furthermore, it is straightforward to check that $\rk(\tilde{\pi}_2) \geq j-i = \rk(\pi_2)$. We have thus shown that $\setS$ satisfies the first property in Definition~\ref*{defn:set_S}.

To show the second property, consider covering pairs $(\pi_1,\pi_2) \in \setS$ where the difference between the two inversion sets is the relation $(a_i,a_j)$. Let $(\pi_3,\pi_4) \in \setS$ where the difference between the two inversion sets is the relation $(a_k,a_l)$. By construction of the set $\setS$, $(a_i,a_j) \neq (a_k,a_l)$. Let $\pi$ be a permutation with $(a_i,a_j)$ in its inversion set. Then, as desired, 
$$\rho(\pi_2,\pi)-\rho(\pi_1,\pi) = \mathbb{I}((a_i,a_j) \in \text{inv}(\pi;\pi_\text{null})) \neq  \rho({\pi}_4,\pi)-\rho({\pi}_3,\pi).$$

\subsubsection{Characterizing $c_\Lp(\pi_1,\pi_2)$ for Covering Pair $(\pi_1,\pi_2)$} Since for any $\pi$, $\rho(\pi_2,\pi)-\rho(\pi_1,\pi) = \mathbb{I}((a_i,a_j) \in \text{inv}(\pi;\pi_\text{null}))$ for some pair of elements $(a_i,a_j)$, then $c_\Lp(\pi_1,\pi_2) = 1$.

\subsubsection{Refined False Discovery Bound for Total Ranking}
Let $\hat{\pi}_\text{stable}$ be output of Algorithm~\ref*{alg:poset_stability_discrete} with $\Psi = \Psi_\text{stable}$. Then: 
$$\mathbb{E}[\FD(\hat{\pi}_\text{stable},\pi^\star)] \leq \sum_{k = 1}^{p-1} \frac{q_k^2}{(1-2\alpha)(p-k)},$$
where 
$$q_k = \sum_{(\pi_1,\pi_2)\in\setS_k} \mathbb{E}[\rho(\pi_2,\hat{\pi}_\text{sub})-\rho(\pi_1,\hat{\pi}_\text{sub})] = \sum_{(i,j), j - i = k} \left[\mathbb{I}[(a_i,a_j) \in \text{inv}(\hat{\pi}_\text{sub};\pi_\text{null})\right].$$ 
Here, $\hat{\pi}_\text{sub}$ represents ranking from supplying $n/2$ samples to the base estimator. We can use the following data-driven approximation for $q_k$:$q_k \approx \frac{1}{B}\sum_{(i,j), j - i = k} \sum_{\ell=1}^B\left[\mathbb{I}[(a_i,a_j) \in \text{inv}(\hat{\pi}_\text{base}(\mathcal{D}^{(\ell)});\pi_\text{null})]\right]$, where $\hat{\pi}_\text{base}(\mathcal{D}^{(\ell)})$ represents the total ranking obtained by supplying the base estimator on dataset $\mathcal{D}^{(\ell)}$.
\subsection{Clustering}
We have a collection of $p$ items $\{a_1,a_2,\dots,a_p\}$ that we wish to cluster. We let $x_0 = \{\{a_1\},\{a_2\},\dots,\{a_p\}\}$ be the least element. As described in the main paper, will use the similarity valuation $\rho:= \rho_\text{meet}$ defined in $\mathrm{Eq.}$ (1) of the main paper. Since the clustering poset is meet semi-lattice,  $\rho$ computes the rank of the meet of two elements; in this setting, the
meet $x \wedge z$ of $x = \{G_1,\dots,G_q\}$ and $z = \{\tilde{G}_1,\dots,\tilde{G}_{s}\}$ is
$$x \wedge z = \{G_i \cap \tilde{G}_j: G_i \cap \tilde{G}_j \neq \emptyset\}.$$
Subsequently, $\rho(x,z) = \rk(x \wedge y)$ is $p - \# \text{ groups in }x \wedge z$, which can be equivalently expressed as:
$$\rho(x,z) = \sum_{i,j:|G_i \cap \tilde{G}_j| \neq \emptyset} |G_i \cap \tilde{G}_j|-1.$$
For sets $G_1,G_2 \subseteq \{1,2,\dots,p\}$ with $G_1 \cap G_2 = \emptyset$, we define:
$$\mathcal{R}_{G_1,G_2} :=\{\{a_1\},\{a_2\},\dots,\{a_p\}\} \setminus \{\{a_i\}:a_i \in G_1 \cup G_2\}.$$ 


\subsubsection{Characterizing $\setS$ for Clustering}
We construct a set $\setS$ satisfying the properties in Definition~\ref*{defn:set_S}. Initialize $\setS = \emptyset$. Then, for every $k = 1,2,\dots,p-1$ and pairs of groups of variables $G_1 \subseteq \{a_1,\dots,a_p\}$ and $G_2 \subseteq \{a_1,\dots,a_p\}$ with $|G_1|+|G_2| = k+1$ and $G_1 \cap G_2 = \emptyset$, we generate covering pairs $(x,y)$ with $y = \{G_1 \cup G_2,\mathcal{R}_{G_1,G_2}\}$ and $x = \{G_1,G_2,\mathcal{R}_{G_1,G_2}\}$, and let
$$S = S \cup (x,y).$$
Recalling that $\setS_k = \{(x,y) \in \setS, \rk(y) = k\}$, it is straightforward to check that for every $k = 1,2,\dots,p-1$
$$|\setS_k| = \begin{pmatrix}p \\k+1\end{pmatrix}\sum_{\ell=1}^k \begin{pmatrix}k+1 \\ l\end{pmatrix}.$$

Here, the terms $\big(\begin{smallmatrix}p \\ k+1 \end{smallmatrix}\big)$ counts the number of possible items in $G_1 \cup G_2$ and the term $\sum_{\ell=1}^{k+1}\big(\begin{smallmatrix}k+1 \\ l \end{smallmatrix}\big)$ counts the number of possible configurations of the group $G_2$. We will show that the constructed set $\setS$ satisfies Definition~\ref*{defn:set_S} of the main paper. Our analysis is based on the following lemma.

\setcounter{theorem}{9}
\begin{lemma} 
Consider the covering pairs $(x,y)$ with $x = \{G_1,G_2,\dots,G_q\}$ and $y = \{G_1 \cup G_2,G_3,\dots,G_q\}$ where $G_i \subseteq \{1,2,\dots,px\}$ and $G_i \cap G_j = \emptyset$ for every $i \neq j$. Let $(\tilde{x},\tilde{y})$ be covering pairs with $\tilde{y} = \{G_1 \cup G_2,\mathcal{R}_{G_1,G_2}\}$ and $\tilde{x} = \{G_1, G_2,\mathcal{R}_{G_1,G_2}\}$. Then, for every $z \in \Lp$, $\rho(y,z) - \rho(x,z) = \rho(\tilde{y},z)-\rho(\tilde{x},z)$. 
\label{lemma:clustering_red}
\end{lemma}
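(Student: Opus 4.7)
The guiding observation is that for any covering pair $(x,y)$ in the partition poset where $y$ is obtained from $x$ by merging two blocks $G_1, G_2$ of $x$, the quantity $\rho(y,z)-\rho(x,z)$ depends only on $G_1$, $G_2$, and $z$, not on the remaining blocks of $x$. Once this locality is established, the claim is immediate: both $(x,y)$ and $(\tilde{x},\tilde{y})$ in the lemma arise from merging the \emph{same} two blocks $G_1, G_2$, so the increments must coincide.

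First, I would record the formula for $\rho_{\text{meet}}$ on the partition poset. For $x = \{G_1,\dots,G_q\}$ and $z = \{\tilde{G}_1,\dots,\tilde{G}_s\}$, the meet (coarsest common refinement) has as its blocks the nonempty intersections $G_i \cap \tilde{G}_j$, so
\[
\rho(x,z) \;=\; p \;-\; \bigl|\{(i,j): G_i \cap \tilde{G}_j \neq \emptyset\}\bigr|.
\]
Passing from $x$ to $y$ replaces the two blocks $G_1, G_2$ by their union $G_1 \cup G_2$, while the contributions of $G_3,\dots,G_q$ to the above count are unchanged.

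Second, I would apply inclusion-exclusion to compute the change. For each $j$, we have $(G_1 \cup G_2) \cap \tilde{G}_j \neq \emptyset$ iff $G_1 \cap \tilde{G}_j \neq \emptyset$ or $G_2 \cap \tilde{G}_j \neq \emptyset$, hence
\[
\bigl|\{j: (G_1 \cup G_2) \cap \tilde{G}_j \neq \emptyset\}\bigr| \;=\; \bigl|\{j: G_1 \cap \tilde{G}_j \neq \emptyset\}\bigr| + \bigl|\{j: G_2 \cap \tilde{G}_j \neq \emptyset\}\bigr| - N,
\]
where $N := |\{j : G_1 \cap \tilde{G}_j \neq \emptyset \text{ and } G_2 \cap \tilde{G}_j \neq \emptyset\}|$. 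Subtracting the expressions for $\rho(x,z)$ and $\rho(y,z)$ yields $\rho(y,z) - \rho(x,z) = N$, an expression involving only $G_1, G_2$, and $z$.

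Third, since $\tilde{y}$ is obtained from $\tilde{x}$ by merging the very same blocks $G_1$ and $G_2$, the identical calculation produces $\rho(\tilde{y},z) - \rho(\tilde{x},z) = N$, establishing the lemma. The ``main obstacle'' is really only bookkeeping: the essential point is recognizing that the meet-rank change is \emph{local} to the two merged blocks via inclusion-exclusion, so the singletons comprising $\mathcal{R}_{G_1,G_2}$ play the same role as the arbitrary blocks $G_3,\dots,G_q$ for the purpose of this difference.
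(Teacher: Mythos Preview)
Your proof is correct and follows essentially the same approach as the paper: both arguments write out the explicit formula for $\rho_{\text{meet}}$ in terms of the nonempty block intersections $G_i \cap \tilde{G}_j$, and observe that the contributions from blocks other than $G_1,G_2$ cancel in the difference. The only cosmetic difference is that you rewrite $\rho(x,z)=p-\#\{(i,j):G_i\cap\tilde{G}_j\neq\emptyset\}$ and use inclusion--exclusion to identify the increment as $N=|\{j:G_1\cap\tilde{G}_j\neq\emptyset,\;G_2\cap\tilde{G}_j\neq\emptyset\}|$, whereas the paper simply subtracts the four expanded sums directly; your explicit value $N$ is in fact what the paper derives separately in the subsequent lemma on $c_{\Lp}(u,v)$.
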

\begin{proof}[Proof of Lemma~\ref{lemma:clustering_red}]
Let $z = \{\tilde{G}_1,\dots,\tilde{G}_s\}$ with $\tilde{G}_i \subseteq \{a_1,a_2,\dots,a_p\}$ and $\tilde{G}_i \cap \tilde{G}_j = \emptyset$ for every $i \neq j$. Then:
$$\rho(y,z) = \sum_{j: (G_1 \cup G_2) \cap \tilde{G}_j \neq \emptyset}|(G_1 \cup G_2) \cap \tilde{G}_j|-1 + \sum_{i \geq 3, j: G_i \cap \tilde{G}_j \neq \emptyset}  |G_i \cap \tilde{G}_j|-1,$$
and
$$\rho(x,z) = \sum_{j: G_1 \cap \tilde{G}_j \neq \emptyset}|G_1 \cap \tilde{G}_j|-1 + \sum_{j: G_2 \cap \tilde{G}_j \neq \emptyset}|G_2 \cap \tilde{G}_j|-1+\sum_{i \geq 3, j: G_i \cap \tilde{G}_j \neq \emptyset}  |G_i \cap \tilde{G}_j|-1.$$
Since $\mathcal{R}_{G_1,G_2}$ consists of groups of size one, we have that:
$$\rho(\tilde{y},z) =  \sum_{j: (G_1 \cup G_2) \cap \tilde{G}_j \neq \emptyset}|(G_1 \cup G_2) \cap \tilde{G}_j|-1,$$
and 
$$\rho(\tilde{x},z) =   \sum_{j: G_1 \cap \tilde{G}_j \neq \emptyset}|G_1 \cap \tilde{G}_j|-1 + \sum_{j: G_2 \cap \tilde{G}_j \neq \emptyset}|G_2 \cap \tilde{G}_j|-1.$$
We thus can see that $\rho(y,z) - \rho(x,z) = \rho(\tilde{y},z) - \rho(\tilde{x},z)$.
\end{proof}

\noindent \underline{Showing $\setS$ satisfies Definition~\ref*{defn:set_S}} With Lemma~\ref{lemma:clustering_red} at hand, we show that out constructed $\setS$ satisfies Definition~\ref*{defn:set_S} of the main paper. We start with the first property. Consider any $(u',v') \subseteq \Lp$. Without loss of generality, we take $v' = \{G_1\cup G_2,G_3,\dots,G_q\}$ and $u' = \{G_1,G_2,\dots,G_q\}$. We let $v = \{G_1 \cup G_2,\mathcal{R}_{G_1,G_2}\}$ and $u = \{G_1,G_2,\mathcal{R}_{G_1,G_2}\}$. Then, according to Lemma~\ref{lemma:clustering_red}, we have that $\rho(v',z)-\rho(u',z) = \rho(v,z)-\rho(u,z)$. Furthermore, since $\rk(x) = p - \# \text{ groups in } x$, we have that $\rk(v) \leq \rk(v')$. Thus, the first property of $\setS$ is satisfied. We demonstrate the second property. Consider any $(u,v) \in \setS$ and $(u',v') \in \setS$ that are different. Let $u = \{G_1,G_2,\mathcal{R}_{G_1,G_2}\}$ and $v = \{G_1 \cup G_2,\mathcal{R}_{G_1,G_2}\}$. Additionally, let $u' = \{G_1',G_2',\mathcal{R}_{G_1',G_2'}\}$ and $v' = \{G'_1 \cup G'_2,\mathcal{R}_{G'_1,G'_2}\}$. Since the covering pairs $(u,v)$ and $(u',v')$ are different, there must exist two items $a_i,a_j$ such that either $(a_i,a_j)$ are grouped together in $v$ but are not together in $u$ or $(a_i,a_j)$ are grouped together in $v'$ but are not together in $u'$. 
Let $z = \{\{a_i,a_j\},\mathcal{R}_{\{a_i\},\{a_j\}}\}$. Since $\rho(v,z)-\rho(u,z) = \mathbb{I}[(a_i,a_j) \text{ grouped together in }v \text{ but not in }u]$ and $\rho(v',z)-\rho(u',z) = \mathbb{I}[(a_i,a_j) \text{ grouped together in }v' \text{ but not in }u']$, we have that $\rho(v,z)-\rho(u,z) \neq \rho(v',z)-\rho(u',z)$. 
\subsubsection{Characterizing $c_\Lp(u,v)$ for Covering Pair $(u,v)$} 
\begin{lemma}Let $v = \{G_1\cup{G}_2,\mathcal{R}_{G_1,G_2}\}$ and $u =\{G_1,{G}_2,\mathcal{R}_{G_1,G_2}\}$ be a covering pair $(u,v)\in\setS$. Then, $c_\Lp(u,v) = \min\{|G_1|,|G_2|\}$. 
\label{lemma:c_cluster}
\end{lemma}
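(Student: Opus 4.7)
The plan is to compute $\rho(v,z)-\rho(u,z)$ explicitly for an arbitrary partition $z=\{\tilde G_1,\dots,\tilde G_s\}$, show that this quantity is bounded above by $\min\{|G_1|,|G_2|\}$, and then exhibit a partition $z^\star$ attaining the bound.

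First, I would use the explicit formula $\rho(x,z)=\sum_{i,j:G_i\cap\tilde G_j\ne\emptyset}(|G_i\cap\tilde G_j|-1)$ derived earlier in the section. The singletons from $\mathcal{R}_{G_1,G_2}$ contribute zero to both $\rho(v,z)$ and $\rho(u,z)$ (each intersection with $\tilde G_j$ has size at most one), so the difference only sees the blocks $G_1\cup G_2$ (in $v$) and $G_1,G_2$ (in $u$). Setting $a_j=|G_1\cap\tilde G_j|$ and $b_j=|G_2\cap\tilde G_j|$, routine algebra using $\sum_j a_j=|G_1|$, $\sum_j b_j=|G_2|$, and $|(G_1\cup G_2)\cap\tilde G_j|=a_j+b_j$ reduces the difference to
\begin{equation*}
\rho(v,z)-\rho(u,z)\;=\;|\{j:a_j>0\}|+|\{j:b_j>0\}|-|\{j:a_j+b_j>0\}|\;=\;|A\cap B|,
\end{equation*}
where $A=\{j:a_j>0\}$ and $B=\{j:b_j>0\}$. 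In words, $\rho(v,z)-\rho(u,z)$ equals the number of blocks of $z$ that contain at least one element from $G_1$ and at least one from $G_2$.

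From here the proof is short. Since $|A|\le|G_1|$ and $|B|\le|G_2|$, we have $|A\cap B|\le\min\{|G_1|,|G_2|\}$, giving the upper bound on $c_\Lp(u,v)$. For the matching lower bound, assume without loss of generality $|G_1|\le|G_2|$ and take $z^\star$ to be the partition that pairs each element of $G_1$ with a distinct element of $G_2$ into blocks of size two, and leaves every remaining element of $\{a_1,\dots,a_p\}$ in its own singleton block. Then exactly $|G_1|$ blocks of $z^\star$ meet both $G_1$ and $G_2$, so $\rho(v,z^\star)-\rho(u,z^\star)=|G_1|=\min\{|G_1|,|G_2|\}$, which closes the argument.

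I expect no real obstacles here: the entire content is a careful bookkeeping of the meet formula, with the one nontrivial observation being that the difference of ranks collapses to $|A\cap B|$. The construction of the extremal $z^\star$ is then immediate, and it is clearly a valid element of the partition poset.
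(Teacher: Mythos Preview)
Your proposal is correct and follows essentially the same approach as the paper: both compute $\rho(v,z)-\rho(u,z)$ explicitly, reduce it to the number of blocks of $z$ meeting both $G_1$ and $G_2$ (your $|A\cap B|$ is the paper's $|I_1\cap I_2|$), bound this by $\min\{|G_1|,|G_2|\}$, and then exhibit a pairing partition achieving the bound. Your bookkeeping via $a_j,b_j$ and inclusion--exclusion is slightly tidier than the paper's direct manipulation of the three sums, but the content is identical.
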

\begin{proof}[Proof of Lemma~\ref{lemma:c_cluster}]
Let $z = \{\tilde{G}_1,\dots,\tilde{G}_q\}$. Then, from proof of Lemma~\ref{lemma:clustering_red}, we have that:
$$\rho(v,z)-\rho(u,z) = \left[\sum_{j: (G_1 \cup G_2) \cap \tilde{G}_j \neq \emptyset}|(G_1 \cup G_2) \cap \tilde{G}_j|-1\right] - \left[\sum_{j: G_1 \cap \tilde{G}_j \neq \emptyset}|G_1 \cap \tilde{G}_j|-1\right]- \left[\sum_{j: G_2 \cap \tilde{G}_j \neq \emptyset}|G_2 \cap \tilde{G}_j|-1\right].$$
Let $I_1 := \{j:\tilde{G}_j \cap G_1 \neq \emptyset\}$ and  $I_2 := \{j:\tilde{G}_j \cap G_2 \neq \emptyset\}$. Then, 
$$\rho(v,z)-\rho(u,z) = \left[\sum_{j \in I_1 \cup I_2}|(G_1 \cup G_2) \cap \tilde{G}_j|-1\right]  - \left[\sum_{j\in{I}_1}|G_1 \cap \tilde{G}_j|-1\right]- \left[\sum_{j \in I_2}|G_2 \cap \tilde{G}_j|-1\right].$$
Simple manipulations yield:
$$\rho(v,z)-\rho(u,z) = \left[\sum_{j \in I_1 \cap I_2}|(G_1 \cup G_2) \cap \tilde{G}_j|-1\right]  - \left[\sum_{j\in{I}_1 \cap I_2}|G_1 \cap \tilde{G}_j|-1\right]- \left[\sum_{j \in I_1 \cap I_2}|G_2 \cap \tilde{G}_j|-1\right].$$

Clearly, if $I_1 \cap I_2 = \emptyset$, then $\rho(v,z)-\rho(u,z) = 0$. Suppose  $I_1 \cap I_2 \neq \emptyset$. Then, 
$$\rho(v,z)-\rho(u,z) = |I_1 \cap I_2|+\left[\sum_{j \in I_1 \cap I_2}|(G_1 \cup G_2) \cap \tilde{G}_j|-|G_1 \cap \tilde{G}_j|- |G_2 \cap \tilde{G}_j|\right] = |I_1 \cap I_2|.$$
Notice that $|I_1 \cap I_2| \leq \min\{|G_1|,|G_2|\}$. Then, the upper bound can be achieved by for example setting $z = \{N,\{\{a_1\},\{a_2\},\dots,\{a_p\}\setminus{N}\}$ with $N = \{(a_i,a_j): a_i \in G_1, a_j \in G_2\}$.  
\end{proof}
\subsubsection{Refined False Discovery Bound for Clustering} Let $\hat{x}_\text{stable}$ be output of Algorithm~\ref*{alg:poset_stability_discrete} with $\Psi = \Psi_\text{stable}$. Then: 
$$\E[\FD(\hat{x}_\text{stable},x^\star)] \leq \sum_{k = 1}^{p-1}\frac{q_k^2}{(1-2\alpha)\begin{pmatrix}p\\ k+1\end{pmatrix}\sum_{\ell=1}^k\begin{pmatrix}k+1\\l\end{pmatrix}},$$
where,
\begin{eqnarray*}
    \begin{aligned}
q_k &= \sum_{(u,v)\in\setS_k} \frac{\mathbb{E}[\rho(v,\hat{x}_\text{sub})-\rho(u,\hat{x}_\text{sub})]}{c(u,v)}\\ &= \sum_{\substack{G_1 \subseteq \{a_1,\dots,a_p\}, G_2 \subseteq \{a_1,\dots,a_p\}\\ G_1\cap G_2 = \emptyset; |G_1|+|G_2| = k+1}} \frac{\mathbb{E}[\# \text{ groups }\hat{G}_j \text{ in }\hat{x}_\text{sub} \text{ satisfying } \hat{G}_j \cap G_1 \neq \emptyset \text{ and }\hat{G}_j \cap G_2 \neq \emptyset]}{\min\{|G_1|,|G_2|\}}.
\end{aligned}
\end{eqnarray*}
Here, $\hat{x}_\text{sub}$ represents clustering from supplying $n/2$ samples to the base estimator. We will use the following data-driven approximation to estimate $q_k$
$$q_k \approx \frac{1}{B}\sum_{\substack{G_1 \subseteq \{a_1,\dots,a_p\}, G_2 \subseteq \{a_1,\dots,a_p\}\\ G_1\cap G_2 = \emptyset; |G_1|+|G_2| = k+1}} \sum_{\ell=1}^B\frac{\# \text{ groups }\hat{G}_j \text{ in }\hat{x}_\text{base}(\mathcal{D}^{(\ell)}) \text{ satisfying } \hat{G}_j \cap G_1 \neq \emptyset \text{ and }\hat{G}_j \cap G_2 \neq \emptyset]}{\min\{|G_1|,|G_2|\}},$$
with $\hat{x}_\text{base}(\mathcal{D}^{(\ell)})$ represents the partition obtained from supplying $\mathcal{D}^{(\ell)}$ to the base estimator.

\subsection{Causal Structure Learning}
Throughout, we consider covering pairs $(\mathcal{C}_u,\mathcal{C}_v)$ where each connected component in 
the skeletons of $\mathcal{C}_u,\mathcal{C}_v$ have a diameter at most two. We denote this set by $\mathcal{T}$. Note that for any covering pair $(\mathcal{C}_u,\mathcal{C}_v) \in \mathcal{T}$, $\mathcal{C}_v$ is a polytree (no cycles after removing direction from all edges). Throughout, we will use the similarity valuation $\rho:= \rho_\text{meet}$. Our analysis in this section will build on the following proposition.

\begin{proposition}Let $\mathcal{C}_u$ and $\mathcal{C}_v$ be two CPDAGs that are polytrees with $\cc_u \preceq \cc_v$. Then, the following statements hold:
\begin{enumerate}
    \item[(a)] for any pairs of nodes $\mathcal{E}$, the set of DAGs that result from removing edges among pairs $\mathcal{E}$ in any DAG $\cg_v$ form a Markov equivalence class.
    \item[(b)] for every DAG $\mathcal{G}_v \in \mathcal{C}_v$, there exists a DAG $\mathcal{G}_u \in \mathcal{C}_u$ such that $\mathcal{G}_u$ is a directed subgraph of $\mathcal{G}_v$. 
\end{enumerate}
\label{lemma:causal}
\end{proposition}
The proof of this proposition relies on the following lemmas. 

{
\begin{lemma}Let $(\mathcal{C}_u,\mathcal{C}_v)$ be a covering pair CPDAGs. Then, there must exist a $\cg_u \in \cc_u$ and a DAG $\cg_v \in \cc_v$ such that $\cg_u$ is a subgraph of $\cg_v$.
\label{lemma:causal_1}
\end{lemma}
\begin{proof} Let $\tilde{\cg}_u,\tilde{\cg}_v$ be a pair of DAGs contained in $\cc_u$ and $\cc_v$, respectively. The conditional dependencies of $\tilde{\cg}_u$ are thus contained in $\tilde{\cg}_v$. Moreover, since $(\mathcal{C}_u,\mathcal{C}_v)$ are covering pairs, there exists only a single pair of nodes that are connected in $\tilde{\cg}_v$ and are not connected in $\tilde{\cg}_u$. By Proposition~\ref{prop:Chickering}, there exists a sequence of allowed edge reversals $R$ (that are not changing conditional dependencies) and edge deletions $D$ that apply to go from $\tilde{\cg}_v$ to $\tilde{\cg}_v$, e.g. $\tilde{\cg}_v + RRRRR+D+RRRR \to \tilde{\cg}_u$. Let $\cg_v$ be the DAG after the operation $\tilde{\cg}_v + RRRRR$. Let $\cg_u$ be the DAG after the deletion step applied to $\cg_v$. Notice that $\tilde{\cg}_v$ is in the same Markov equivalence class as $\cg_v$, and $\tilde{\cg}_u$ is in the same Markov equivalence class as $\cg_u$. 
\end{proof}

\begin{lemma}Let $\cc_u,\cc_v$ be a pair of CPDAGs with $\cc_u \preceq \cc_v$. Then, the skeleton of $\cc_u$ must be a subgraph of the skeleton of $\cc_v$ and every v-structure $i\rightarrow{k}\leftarrow{j}$ in $\cc_v$ must either also be present in $\cc_u$ or at least one of the edges between the pair of nodes $(i,k)$ or $(j,k)$ is missing. 
\label{lemma:causal_2}
\end{lemma}
\begin{proof} Let $X \in \mathbb{R}^p$ be the collection of random variables. To prove the first part, suppose as a point of contradiction that the pair of nodes $(i,j)$ are connected in $\cc_u$ and not in $\cc_v$. Then, in the model $\cc_u$, $X_i \not\perp X_j|X_C$ for all $C \subseteq \{1,2,\dots,p\} \setminus\{i,j\}$. On the other hand, in the model $\cc_v$, $X_i \perp X_j|X_C$ for some $C\subseteq \{1,2,\dots,p\} \setminus\{i,j\}$. This leads to a contradiction as the conditional dependencies of $\cc_u$ are not contained in $\cc_v$. To prove the second part, we note from the first part that the skeleton of $\cc_u$ must be contained in that of $\cc_v$. As a point of contradiction, suppose the pairs of nodes $(i,k)$ and $(j,k)$ are connected in $\cc_u$ without forming a v-structure, and are connected as $i \rightarrow k \leftarrow j$ in $\cc_v$ with $i$ not connected to $j$. Then, in the model $\cc_v$, $X_i \perp X_j | X_C$ for some subset $C = \{1,2,\dots,p\}\setminus \{i,j,k\}$. On the other hand, in the model $\cc_u$, $X_i \not\perp X_j|X_C$ for all all subsets $C = \{1,2,\dots,p\}\setminus \{i,j,k\}$. This again leads to a contradiction as the conditional dependencies of $\cc_u$ are not contained in $\cc_v$.
\end{proof}

\begin{lemma}Let $\cc_u,\cc_v$ be a pair of CPDAGs that are polytrees with $\cc_u \preceq \cc_v$. Then, every v-structure in $\cc_u$ must also be a v-structure in $\cc_v$.
\label{lemma:causal_3}
\end{lemma}
\begin{proof}
Suppose $i \rightarrow k \leftarrow j$ in $\cc_u$. Then, by Lemma~\ref{lemma:causal_2}, pairs of nodes $(i,j)$ and $(j,k)$ must be connected in $\cc_v$. Furthermore, since $\cc_v$ is a polytree, $(i,j)$ cannot be connected. Suppose as a point of contradiction that the pairs $(i,j)$ and $(j,k)$ do not form a v-structure in $\cc_v$. Then, in the model $\cc_u$, $X_i \not\perp X_j | X_C$ for all subsets $C$ that contain $k$, whereas in the model $\cc_2$, $X_i \perp X_j | X_C$ for some subset $C$ containing $k$. This leads to a contradiction as the conditional dependencies of $\cc_u$ are not contained in $\cc_v$.
\end{proof}
}

\begin{proof}[Proof of Proposition~\ref{lemma:causal}]
We first prove part (a).  By the polytree assumption, it follows that for any DAG $\cg_v$ in the CPDAG $\mathcal{C}_v$, removing the edges among pairs in $\mathcal{E}$ does not create any v-structures, and removes the same (potentially empty) v-structures. That means that the collection of DAGs obtained by taking any DAG in $\cc_v$ and removing the edges between the pairs of nodes $\mathcal{E}$ will have the same skeleton and same v-structures, and are thus in the same Markov equivalence class. 

We next prove part (b). Let $\mathcal{E} = \{(i,j)\}$ be collection of nodes that connected in $\mathcal{C}_v$ but not in $\mathcal{C}_u$. Since $\mathcal{C}_v$ is a polytree, by Lemma~\ref{lemma:causal_3}, any v-structure in $\mathcal{C}_u$ is also present in $\mathcal{C}_v$. Furthermore, by Lemma~\ref{lemma:causal_2}, every v-structure $i \rightarrow k \leftarrow j$  in $\mathcal{C}_v$ is either present in $\mathcal{C}_v$ or at least one of the edges between pair of nodes $(i,k)$ and $(j,k)$ is missing. 
Consider any DAG $\mathcal{G}_v$ in $\mathcal{C}_v$. Let $\mathcal{G}_u$ be the DAG that is obtained by deleting the edges in $\mathcal{G}_v$ among pairs of nodes $\mathcal{E}$. By construction, $\mathcal{G}_u$ has the same skeleton as any DAG in $\mathcal{C}_u$. Since $\mathcal{C}_v$ is a polytree, every structure present in $\mathcal{G}_u$ or in $\mathcal{C}_u$ are also in $\mathcal{C}_v$. Suppose as a point of contradiction that there exists a v-structure $i \rightarrow k \leftarrow j$ that is in $\mathcal{G}_u$ and not in $\mathcal{C}_u$. Since this v-structure must also be present in $\mathcal{G}_v$, and from the previous statement, then one of the edges between pairs of nodes $(i,k)$ or $(j,k)$ must be missing in $\mathcal{C}_u$ which contradicts $\mathcal{C}_u$ and $\mathcal{G}_u$ having same skeleton. A similar argument allows us to conclude that this leads to a contradiction. Thus, $\mathcal{G}_u$ and $\mathcal{C}_u$ have same skeleton and v-structures. We conclude that $\mathcal{G}_u$ must be a DAG in $\mathcal{C}_u$.  
\end{proof}

\subsubsection{Characterizing $\setS$ for Causal Structure Learning}
We construct the set $\setS$ as follows. Initialize $\setS = \emptyset$. For every reference node, and $k = 1,\dots,p-1$, let $\mathcal{C}_y$ be a CPDAG generated with $k$ edges, where every edge is between the reference node and another node; no other edges can be added without violating the condition that the largest undirected path has size less than or equal to two. A consequence of Proposition~\ref{lemma:causal} is that there are $k$ CPDAGs $\mathcal{C}_{x_1},\dots,\mathcal{C}_{x_k}$ that form a covering pair with $\cc_y$. We then let $$\setS= \setS \cup (\mathcal{C}_{x_i},\mathcal{C}_{y}),$$ for every $i = 1,2,\dots,k$. Recall that $\setS_k := \{(\cc_x,\cc_y) \in \setS, \rk(\cc_y) = k\}$. Then, {$|\mathcal{S}_1| = \begin{pmatrix}p\\2\end{pmatrix}$} and for $k > 1$
{$$|\setS_k| = pk\begin{pmatrix}p-1 \\k\end{pmatrix}\sum_{i \in \{0,2\dots,k\}}\begin{pmatrix}k \\i\end{pmatrix}.$$}
The result above follows from noting that for every reference node and $k$ other nodes, there are $\sum_{i \in \{0,2\dots,k\}}\big(\begin{smallmatrix}k \\i\end{smallmatrix}\big)$ possible CPDAGs that are polytrees can formed by connecting the $k$ nodes to the reference node; the factor $p\big(\begin{smallmatrix}p-1 \\k\end{smallmatrix}\big)$ comes from $p$ total possible reference nodes and $\big(\begin{smallmatrix}p-1 \\k\end{smallmatrix}\big)$ possible set of $k$ nodes to connect to the reference node.\\

\noindent We will show that the constructed set $\setS$ satisfies Definition~\ref*{defn:set_S} of the main paper. Our analysis is based on the following lemma.
\begin{lemma}Let $\mathcal{C}_{\tilde{y}}$ be a CPDAG that is a polytree and contains $m$ disconnected subgraphs (both directed and undirected). Let $\mathcal{C}_{\tilde{y}_{i}}$ be each disconnected subgraph for $i = 1,2,\dots,m$. Then, for any CPDAG $\mathcal{C}_z$,
$$\rho(\mathcal{C}_{\tilde{y}},\mathcal{C}_z) = \sum_{i = 1}^m\rho(\mathcal{C}_{\tilde{y}_{i}},\mathcal{C}_z).$$
\label{lemma:causal_2}
\end{lemma}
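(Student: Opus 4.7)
The plan is to prove the identity by establishing both inequalities $\rho(\cc_{\tilde{y}}, \cc_z) \leq \sum_i \rho(\cc_{\tilde{y}_i}, \cc_z)$ and $\rho(\cc_{\tilde{y}}, \cc_z) \geq \sum_i \rho(\cc_{\tilde{y}_i}, \cc_z)$, throughout exploiting the conditional-independence-inclusion characterization of $\preceq$ given in Example~\ref{ex:causal-learning}: $\cc^{(1)} \preceq \cc^{(2)}$ iff every conditional independence encoded by $\cc^{(2)}$ is encoded by $\cc^{(1)}$. I will also use that disconnectedness of the skeleton of a CPDAG is shared by every DAG in its Markov equivalence class, so the vertex partition $V_1,\dots,V_m$ induced by the components of $\cc_{\tilde{y}}$ is a well-defined partition of the node set on which both sides of the identity are computed.

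For the $\leq$ direction, let $\cc_w$ attain the maximum in the definition of $\rho(\cc_{\tilde{y}}, \cc_z)$. Since $\cc_w \preceq \cc_{\tilde{y}}$, the DAG-subgraph characterization forces the skeleton of $\cc_w$ to be contained in that of $\cc_{\tilde{y}}$, so every edge of $\cc_w$ has both endpoints inside some $V_i$. Define $\cc_{w,i}$ by retaining only those edges of $\cc_w$ whose endpoints lie in $V_i$; since v-structures and acyclicity are local and the resulting components sit on disjoint vertex sets, $\cc_{w,i}$ is a valid CPDAG and $\rk(\cc_w) = \sum_i \rk(\cc_{w,i})$. Restricting a DAG witness of $\cc_w \preceq \cc_{\tilde{y}}$ to $V_i$ gives a DAG witness of $\cc_{w,i} \preceq \cc_{\tilde{y}_i}$. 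Next, $\cc_{w,i} \preceq \cc_z$ follows from the conditional-independence characterization: any conditional independence of $\cc_z$ is one of $\cc_w$ by assumption, and a short d-separation argument shows that such relations transfer to $\cc_{w,i}$ because paths in $\cc_{w,i}$ between two nodes are either absent (if the endpoints lie in different components of $\cc_w$) or coincide with paths inside $V_i$ in $\cc_w$, with conditioning set vertices outside $V_i$ being isolated in $\cc_{w,i}$. Summing gives $\rho(\cc_{\tilde{y}}, \cc_z) = \rk(\cc_w) = \sum_i \rk(\cc_{w,i}) \leq \sum_i \rho(\cc_{\tilde{y}_i}, \cc_z)$.

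For the reverse inequality, for each $i$ I will let $\cc_{w,i}$ attain $\rho(\cc_{\tilde{y}_i}, \cc_z)$, noting that its edges may be assumed to lie in $V_i$ because $\cc_{\tilde{y}_i}$ does. Form $\cc_w$ as the graph-theoretic union of the $\cc_{w,i}$; since the $V_i$ are pairwise disjoint, $\cc_w$ is again a valid CPDAG with $\rk(\cc_w) = \sum_i \rk(\cc_{w,i})$. The relation $\cc_w \preceq \cc_{\tilde{y}}$ then follows by concatenating DAG-subgraph witnesses of $\cc_{w,i} \preceq \cc_{\tilde{y}_i}$ across disjoint vertex sets. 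The main obstacle is verifying $\cc_w \preceq \cc_z$, because the DAG-subgraph characterization would force one to combine DAGs drawn from different members of the single Markov equivalence class $\cc_z$ into one DAG, which is not obviously possible; I will sidestep this by switching to the conditional-independence characterization, so that for any conditional independence $a \perp b \mid S$ of $\cc_z$ I only need to check: if $a,b$ lie in a common $V_i$, then the relation holds in $\cc_{w,i}$ by assumption and all paths in $\cc_w$ between $a$ and $b$ stay inside $V_i$ (elements of $S$ outside $V_i$ being isolated in those components of $\cc_w$ that matter), while if $a,b$ lie in distinct $V_i$ they are trivially d-separated in $\cc_w$. This yields $\cc_w \preceq \cc_z$ and $\rho(\cc_{\tilde{y}}, \cc_z) \geq \rk(\cc_w) = \sum_i \rho(\cc_{\tilde{y}_i}, \cc_z)$, completing the argument.
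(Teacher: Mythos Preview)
Your proof follows the same two-inequality architecture as the paper's: split the maximizer of $\rho(\cc_{\tilde y},\cc_z)$ into its components to obtain $\leq$, and glue together the per-component maximizers to obtain $\geq$. Where you diverge is in how you justify the relations $\cc_{w,i}\preceq\cc_z$ (for $\leq$) and $\cc_w\preceq\cc_z$ (for $\geq$). The paper simply asserts these ``since these graphs are disjoint'' and works implicitly with the DAG-subgraph definition of $\preceq$; you instead switch to the conditional-independence characterization and give a d-separation argument. This is a genuine improvement in rigor: as you correctly observe, the DAG-subgraph route in the $\geq$ direction would require assembling DAG witnesses for the separate relations $\cc_{w,i}\preceq\cc_z$ into a single DAG $\cg_z\in\cc_z$, and it is not obvious that the individual witnesses can be made to agree on one member of the equivalence class $\cc_z$. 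Your CI-based argument sidesteps this entirely and is the cleaner way to close the gap. Both proofs are somewhat informal about why restricting or taking unions of CPDAGs along connected components again yields valid CPDAGs, but this is a minor point that holds because skeleton, v-structures, and Meek orientations are all determined within connected components.
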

\begin{proof} 
    We will first show that $\rho(\mathcal{C}_{\tilde{y}},\mathcal{C}_z) \leq \sum_{i = 1}^m\rho(\mathcal{C}_{\tilde{y}_i},\mathcal{C}_z)$. Let $\mathcal{C}_{\tilde{x}} \in \argmax_{\mathcal{C}_x\preceq \mathcal{C}_{\tilde{y}}, \cc_x \preceq \mathcal{C}_z} \rk(\mathcal{C}_x)$. By Proposition~\ref{lemma:causal}, if $\mathcal{C}_x \preceq \mathcal{C}_{\tilde{y}}$, there is a DAG $\cg_x$ in $\mathcal{C}_x$ and a DAG $\cg_{\tilde{y}}$ in $\mathcal{C}_{\tilde{y}}$ such that $\cg_x$ is a subgraph of $\cg_{\tilde{y}}$. Since $\cg_{\tilde{y}}$ has disconnected components, so must $\cg_x$.  We let $\mathcal{C}_{\tilde{x}_{i}}$ be the subgraphs of $\mathcal{C}_{\tilde{x}}$ where every subgraph $\mathcal{C}_{\tilde{x}_{i}}$ only contains edges among nodes that are connected (to other nodes) in the graph $\mathcal{C}_{\tilde{y}_{i}}$. By construction,  $\cc_{\tilde{x}_i} \preceq \cc_{\tilde{y}_i}$, $\rk(\mathcal{C}_{\tilde{x}}) = \sum_{i = 1}^{m}\rk(\mathcal{C}_{\tilde{x}_{i}})$, and $\mathcal{C}_{\tilde{x}_{i}}\preceq \mathcal{C}_z$. Thus, $\rk(\cc_{\tilde{x}_i}) \leq \rho(\cc_{\tilde{y}_i},\cc_z)$. Then, we can conclude that 
    $$\sum_{i=1}^m \rho(\mathcal{C}_{\tilde{y}_{i}},\mathcal{C}_z) \geq \sum_{i=1}^m \rk(\mathcal{C}_{\tilde{x}_{i}}) = \rk(\cc_{\tilde{x}})= \rho(\mathcal{C}_{\tilde{y}},\mathcal{C}_z).$$ 

    Now we will show that $\rho(\mathcal{C}_{\tilde{y}},\mathcal{C}_z) \geq \sum_{i = 1}^m\rho(\mathcal{C}_{\tilde{y}_{i}},\mathcal{C}_z)$. Let $\mathcal{C}_{\tilde{x}_{i}} \in \argmax_{\mathcal{C}_x \preceq \mathcal{C}_{\tilde{y}_{i}},\cc_x \preceq \mathcal{C}_z}\rk(\mathcal{C}_x)$. Now form a CPDAG $\mathcal{C}_{\bar{y}}$ by combining all the disjoint graphs $\mathcal{C}_{\tilde{x}_{i}}$ for every $i = 1,2,\dots,m$ into one graph. Since these graphs are disjoint (i.e. nodes that are connected in each graph are distinct), we have that $\mathcal{C}_{\bar{y}} \preceq \mathcal{C}_{\tilde{y}}$ and $\mathcal{C}_{\bar{y}} \preceq \mathcal{C}_z$ and that $\rk(\cc_{\bar{y}}) = \sum_{i=1}^m \rk(\cc_{\tilde{x}_i})$. So we conclude that $$\rho(\mathcal{C}_{\tilde{y}},\mathcal{C}_z) \geq \rk(\mathcal{C}_{\bar{y}}) =  \sum_{i=1}^m \rk(\cc_{\tilde{x}_i}) = \sum_{i = 1}^m\rho(\mathcal{C}_{\tilde{y}_{i}},\mathcal{C}_z).$$    
\end{proof}
\underline{Showing $\setS$ satisfies Definition~\ref*{defn:set_S}}
For the first property, consider covering pairs $(\mathcal{C}_{u'},\mathcal{C}_{v'}) \in T$. Let $(i,j)$ be the pair of nodes that are connected in $\cc_{v'}$ and are not connected in $\cc_{u'}$. Since every undirected path in $\cc_{v'}$ has size at most $2$, then $\cc_{v'}$ decouples into two disconnected CPDAGs $\cc_{v}$ and $\cc_{1}$, where $\cc_{v}$ only involves nodes adjacent to $(i,j)$. Similarly, $\cc_{u'}$ decouples into two disconnected CPDAGs $\cc_{u}$ and $\cc_{2}$, where $\cc_{2} = \cc_{1}$ and $\cc_{u}$ is covered by $\cc_{v}$. From Lemma~\ref{lemma:causal_2}, we have that for any CPDAG $\cc_z$
$$ \rho(\cc_{v'},\cc_z)-\rho(\cc_{u'},\cc_z) = \rho(\cc_v,\cc_z)-\rho(\cc_u,\cc_z).$$
Notice that $(\cc_u,\cc_v) \in \setS$. Furthermore, since the number of edges (directed and undirected) in $\cc_{v'}$ is larger than $\cc_{v}$, we have that $\rk(\cc_v) \leq \rk(\cc_{v'})$.  

We next show the second property in Definition~\ref*{defn:set_S}. Let $(\cc_u,\cc_v) \in \setS$ and $(\cc_{u'},\cc_{v'}) \in \setS$. Our objective is to show that $\rho(\cc_v,\cc_z)-\rho(\cc_u,\cc_z) = \rho(\cc_{v'},\cc_{z})-\rho(\cc_{u'},\cc_{z})$ for all $\cc_z \Leftrightarrow \cc_u = \cc_{u'}$ and $\cc_v = \cc_{v'}$. The direction $\leftarrow$ trivially holds, and hence we focus on the direction $\rightarrow$. We consider multiple scenarios; throughout the extra edge that is present in $\cc_{v}$ and not in $\cc_{u}$ is between the pair of nodes $(i,j)$, and the extra edge that is present in $\cc_{v'}$ and not in $\cc_{u'}$ is between the pair of nodes $(k,l)$.
\begin{itemize}
  \item[(1)] Suppose that the nodes $(k,l)$ are not connected in $\cc_v$. Letting $\cc_z$ be a CPDAG with only an edge between nodes $(k,l)$, we find that $\rho(\cc_v,\cc_z)-\rho(\cc_u,\cc_z) = 0$ and $\rho(\cc_{v'},\cc_{z})-\rho(\cc_{u'},\cc_{z}) = 1$. So this scenario cannot occur.
    \item[(2)] Suppose there is an edge between pairs $(s,t)$ in $\cc_{u'}$ that is missing in $\cc_{v}$ (and as a result in $\cc_{u}$). Construct CPDAG $\cc_z$ with two edges, one between the pair $(i,j)$ and another between the pair $(s,t)$ with the property that $\cc_z \not\preceq \cc_{v'}$; this construction is possible since $(\cc_{u'},\cc_{v'})\in \setS$, meaning that if there is an edge between pair of nodes $(i,j)$ in $\cc_{v'}$, this edge is incident to the edge between the pair of nodes $(s,t)$. Then, it is evident that $\rho(\cc_v,\cc_z)-\rho(\cc_u,\cc_z) = 1$ but $\rho(\cc_{v'},\cc_z)-\rho(\cc_{u'},\cc_z) = 0$. So this scenario cannot occur.   
    \item[(3)] Suppose there is an edge between pairs $(s,t)$ in $\cc_{u'}$ that is missing in $\cc_{u}$ but is not missing in $\cc_{v}$. Let $\cc_z$ be a CPDAG only containing an edge between $(s,t)$. Then it follows that $\rho(\cc_v,\cc_z)-\rho(\cc_u,\cc_z) = 1$ but $\rho(\cc_{v'},\cc_z)-\rho(\cc_{u'},\cc_z) = 0$. So this scenario cannot occur.
    \end{itemize}
    From the impossibilities of scenarios 1-2, and noting that a similar argument can be made by swapping $\cc_{u'}$ with $\cc_{u}$, and $\cc_{v'}$ with $\cc_v$, we conclude that $\cc_v,\cc_{v'}$ have edges between the same pairs of nodes. Combining this result with the impossibility of scenario 3, we conclude that $\cc_{u},\cc_{u'}$ have edges between the same pairs of nodes. We then continue with the final scenario.
    \begin{itemize}
    \item[(4)] Suppose that $\cc_{v}$ and $\cc_{v'}$ are not identical CPDAGs. Since both $\cc_{v}$ and $\cc_{v'}$ have maximum undirected path length less than or equal to two, they both must have the same reference node $i$ (where the other nodes are connected to). Furthermore, since $\cc_{v}$ and $\cc_{v'}$ have the same skeleton and are different, they must have strictly more than one edge, and they must have different v-structures. As a first sub-case, suppose $\cc_{v'}$ have a v-structure $s \rightarrow i \leftarrow t$ that is not present in $\cc_v$, so that $s \leftarrow i$ or $s - i$ in $\cc_v$. Then, let $\cc_z$ be a CPDAG containing two edges between the pairs $(i,j)$ and $(i,s)$ with $\cc_z \preceq\cc_{v}$. By construction, $\rho(\cc_v,\cc_z) -\rho(\cc_u,\cc_z)=1$ but $\rho(\cc_{v'},\cc_z)-\rho(\cc_{u'},\cc_z) = 0$. Swapping $\cc_{u'}$ with $\cc_{u}$, and $\cc_{v'}$ with $\cc_v$, and following similar arguments, we arrive again at a contradiction if $\cc_{v}$ has a v-structure that is not present in $\cc_{v'}$. 
\end{itemize}
From the impossibility of scenario 4, we conclude that $\cc_{v}$ and $\cc_{v'}$ have the same skeleton and v-structure and consequently $\cc_{v}=\cc_{v'}$. We thus have that $\cc_{u} \preceq \cc_v$ and $\cc_{u'} \preceq \cc_v$. Furthermore, since $\cc_{u'}$ and $\cc_{u}$ have the same skeleton, both are missing an edge between pair of nodes $(i,j)$ that is connected in $\cc_v$. Appealing to part a of Proposition~\ref{lemma:causal}, we conclude that $\cc_{u} = \cc_{u'}$. 

\subsubsection{Characterizing $c_\Lp(\cc_u,\cc_v)$ for Covering Pairs $(\cc_u,\cc_v)$} We have the following lemma.
\begin{lemma} Let $(\cc_u,\cc_v)$ be CPDAGs that are polytrees and form a covering pair. Then, $c_\Lp(\cc_u,\cc_v) = 1$.
\end{lemma}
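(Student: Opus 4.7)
The plan is to establish both the lower bound $c_{\mathcal{L}}(\mathcal{C}_u,\mathcal{C}_v) \geq 1$ and the matching upper bound $c_{\mathcal{L}}(\mathcal{C}_u,\mathcal{C}_v) \leq 1$ separately. Since $(\mathcal{C}_u,\mathcal{C}_v)$ is a covering pair with both CPDAGs being polytrees, $\mathcal{C}_v$ differs from $\mathcal{C}_u$ by exactly one edge; denote the endpoints of this additional edge by $(i,j)$.

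For the lower bound, I would simply plug in $\mathcal{C}_z = \mathcal{C}_v$. Using the third property of similarity valuations from Definition~\ref{defn:rho_prop}, $\rho(\mathcal{C}_v,\mathcal{C}_v) = \rk(\mathcal{C}_v)$ and (since $\mathcal{C}_u \preceq \mathcal{C}_v$) $\rho(\mathcal{C}_u,\mathcal{C}_v) = \rk(\mathcal{C}_u) = \rk(\mathcal{C}_v)-1$, giving difference exactly one.

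For the upper bound, the plan is, for an arbitrary $\mathcal{C}_z$, to take a maximizer $\mathcal{C}_{\tilde{x}}$ realizing $\rho(\mathcal{C}_v,\mathcal{C}_z)$, i.e., a CPDAG with $\mathcal{C}_{\tilde{x}} \preceq \mathcal{C}_v$, $\mathcal{C}_{\tilde{x}} \preceq \mathcal{C}_z$, and rank equal to $\rho(\mathcal{C}_v,\mathcal{C}_z)$. By unpacking the definition of $\preceq$ from Example~\ref{ex:causal-learning}, I would fix DAGs $\mathcal{G}_{\tilde{x}}, \mathcal{G}'_{\tilde{x}} \in \mathcal{C}_{\tilde{x}}$ and $\mathcal{G}_v \in \mathcal{C}_v, \mathcal{G}_z \in \mathcal{C}_z$ such that $\mathcal{G}_{\tilde{x}}$ is a directed subgraph of $\mathcal{G}_v$ and $\mathcal{G}'_{\tilde{x}}$ is a directed subgraph of $\mathcal{G}_z$. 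Since $\mathcal{C}_{\tilde{x}} \preceq \mathcal{C}_v$ and $\mathcal{C}_v$ is a polytree, $\mathcal{C}_{\tilde{x}}$ is itself a polytree, so Lemma~\ref{lemma:causal}(a) is in force for $\mathcal{C}_{\tilde{x}}$. I then split into two cases. If $\mathcal{G}_{\tilde{x}}$ does not contain the edge $(i,j)$, then $\mathcal{G}_{\tilde{x}}$ is already a subgraph of $\mathcal{G}_v \setminus (i,j)$, which lies in $\mathcal{C}_u$ by Lemma~\ref{lemma:causal}(a) applied to $\mathcal{C}_v$; hence $\mathcal{C}_{\tilde{x}} \preceq \mathcal{C}_u$, and $\rho(\mathcal{C}_u,\mathcal{C}_z) \geq \rk(\mathcal{C}_{\tilde{x}}) = \rho(\mathcal{C}_v,\mathcal{C}_z)$.

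If instead $(i,j)$ appears in $\mathcal{G}_{\tilde{x}}$, the main obstacle is ensuring that the ``edge-deleted'' object is simultaneously dominated by $\mathcal{C}_u$ and by $\mathcal{C}_z$ through a single Markov equivalence class. Here is where Lemma~\ref{lemma:causal}(a) plays the crucial role: removing the edge $(i,j)$ from every DAG in $\mathcal{C}_{\tilde{x}}$ yields DAGs that belong to a common Markov equivalence class $\mathcal{C}_{\tilde{x}''}$ with $\rk(\mathcal{C}_{\tilde{x}''}) = \rk(\mathcal{C}_{\tilde{x}})-1$. Then $\mathcal{G}_{\tilde{x}} \setminus (i,j) \in \mathcal{C}_{\tilde{x}''}$ is a subgraph of $\mathcal{G}_v \setminus (i,j) \in \mathcal{C}_u$, and $\mathcal{G}'_{\tilde{x}} \setminus (i,j) \in \mathcal{C}_{\tilde{x}''}$ is a subgraph of $\mathcal{G}_z$ (removing an edge from a subgraph is still a subgraph). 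Thus $\mathcal{C}_{\tilde{x}''} \preceq \mathcal{C}_u$ and $\mathcal{C}_{\tilde{x}''} \preceq \mathcal{C}_z$, giving $\rho(\mathcal{C}_u,\mathcal{C}_z) \geq \rk(\mathcal{C}_{\tilde{x}}) - 1 = \rho(\mathcal{C}_v,\mathcal{C}_z)-1$. Combining both cases and maximizing over $\mathcal{C}_z$ yields $c_{\mathcal{L}}(\mathcal{C}_u,\mathcal{C}_v) \leq 1$, completing the proof. The conceptual hurdle, as noted, is the need to juggle two different DAG representatives of $\mathcal{C}_{\tilde{x}}$ simultaneously; the polytree hypothesis and Lemma~\ref{lemma:causal}(a) guarantee that deleting a common edge keeps them in the same equivalence class.
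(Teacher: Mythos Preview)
Your proof is correct and follows essentially the same approach as the paper's: take a maximizer $\mathcal{C}_{\tilde{x}}$ of $\rho(\mathcal{C}_v,\mathcal{C}_z)$, delete the edge $(i,j)$, and use the polytree structure via Lemma~\ref{lemma:causal}(a) to ensure the resulting object is a well-defined CPDAG dominated by both $\mathcal{C}_u$ and $\mathcal{C}_z$. The only cosmetic differences are that you treat the lower bound explicitly and split into two cases on whether $(i,j)$ is present in $\mathcal{C}_{\tilde{x}}$, whereas the paper handles both cases at once by removing ``an edge that may be present'' and invokes Lemma~\ref{lemma:causal}(b) (rather than part~(a)) to place $\mathcal{G}_v \setminus (i,j)$ in $\mathcal{C}_u$.
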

\begin{proof}
Let the pair of nodes $(i,j)$ be connected in $\cc_v$ and not connected in $\cc_u$. Consider any CPDAG $\cc_z$. Let $\cc_{\tilde{y}}\in \argmax_{\cc_y \preceq \cc_v, \cc_y\preceq \cc_z} \rk(\cc_y)$. Since the CPDAG $\cc_v$ is a polytree, so is the CPDAG $\cc_{\tilde{y}}$. Let $\cg_v$ be any DAG in $\cc_v$. Then, by part b of Proposition~\ref{lemma:causal}, there exists DAGs $\cg_{\tilde{y}}^{(1)} \in \cc_{\tilde{y}}$ and $\cg_{u} \in \cc_u$ such that $\cg_{\tilde{y}}^{(1)}$ and $\cg_{u}$ are both subgraphs of $\cg_v$. Suppose we remove an edge that may be present between the pair of nodes $(i,j)$ in $\cg_{\tilde{y}}$ and denote the resulting subgraph by $\cg_x$. By construction, $\cg_x$ is also a subgraph of $\cg_{u}$. Let $\mathcal{C}_{x}$ be the CPDAG by completing the DAG $\cg_{u}$. Then, $\mathcal{C}_x \preceq \mathcal{C}_u$. Since $\cg_x$ is a subgraph of $\cc_{\tilde{y}}$, $\mathcal{C}_x \preceq \mathcal{C}_{\tilde{y}}$, and by transitivity, $\mathcal{C}_x \preceq \mathcal{C}_z$. Consequently, $\rk(\mathcal{C}_x) \geq \rk(\cc_{\tilde{y}})$, and thus $\rho(\cc_v,\cc_z)-\rho(\cc_u,\cc_z) \leq 1$.
\end{proof}

\subsubsection{Refined False Discovery Bound for Causal Structure Learning} Let $\hat{\cc}_\text{stable}$ be output of Algorithm~\ref*{alg:poset_stability_discrete} with $\Psi = \Psi_\text{stable}$. Let $\mathcal{C}^\star$ be the population CPDAG. Then: 
$$\E[\FD(\hat{\cc}_\text{stable},\cc^\star)] \leq \frac{q_1^2}{(1-2\alpha) \begin{pmatrix}p \\2\end{pmatrix}}+ \sum_{k = 2}^{p-1}\frac{q_k^2}{(1-2\alpha) p\begin{pmatrix}p-1 \\k-1\end{pmatrix}\sum_{i \in \{0,2\dots,k\}}\begin{pmatrix}k \\i\end{pmatrix}},$$
where,
$$q_k = \sum_{(\cc_u,\cc_v)\in\setS_k} \mathbb{E}[\rho(\cc_v,\hat{\cc}_{\text{sub}})-\rho(\cc_u,\hat{\cc}_\text{sub})].$$
Here, $\hat{\cc}_\text{sub}$ represents the CPDAG from supplying $n/2$ samples to the base estimator. We will use the following data-driven approximation to estimate $q_k$
$$q_k \approx \frac{1}{B}\sum_{\ell=1}^B\sum_{(\cc_u,\cc_v)\in\setS_k} {\mathbb{E}[\rho(\cc_v,\hat{\cc}_\text{base}(\mathcal{D}^{(\ell)})-\rho(\cc_u,\hat{\cc}_\text{base}(\mathcal{D}^{(\ell)}))]},$$
with $\hat{\cc}_\text{base}(\mathcal{D}^{(\ell)})$ represents the CPDAGs obtained from supplying dataset $\mathcal{D}^{(\ell)}$ to base estimator $\hat{\cc}_\text{base}$.

\section{Assumptions 1 and 2 of the Main Paper for the Total Ranking Problem in Example~\ref*{ex:complete-ranking}}
\label{sec:assumptions_total_ranking}
Let $S = \{a_1,a_2,\dots,a_p\}$ be the set of $p$ elements. Let $\pi_\text{null}(a_i) = i$ for every $i = 1,2,\dots,p$. We use the similarity valuation $\rho:= \rho_\text{total-ranking}$ in $\mathrm{Eq.}$ (2) of the main paper. As each element in the poset corresponds to a function $\pi : S \to S$, we will use this functional notation throughout. For a covering pair $(\pi_1,\pi_2)$, there exists a single pair of elements $(a_i,a_j) \in \text{inv}(\pi_2;\pi_\text{null})\setminus \text{inv}(\pi_1;\pi_\text{null})$ with $j>i$. Then, from the definition of $\rho$, for any permutation $\pi$, we have that $$\rho(\pi_2,\pi)-\rho(\pi_1,\pi) = \mathbb{I}[(a_i,a_j) \in \text{inv}(\pi;\pi_\text{null})] = \mathbb{I}[\pi(a_j) < \pi(a_i)].$$    

Let $\hat{\pi}_\text{sub}$ be the estimated ranking from applying a base procedure on a subsample of the data. Consider a fixed integer $k$ with $1\leq k\leq p-1$. Define the sets $S_1$ and $S_2$:
\begin{eqnarray*}
\begin{aligned}
S_1 &= \{(a_i,a_j) \in \text{inv}(\pi^\star;\pi_{\text{null}}):j -i = k\},\\
S_2 &= \{(a_i,a_j) \not\in \text{inv}(\pi^\star;\pi_{\text{null}}):j -i = k\}.\\
\end{aligned}
\end{eqnarray*}
The set $S_1$ corresponds to non-null pairs (as described in the main paper) and the set $S_2$ corresponds to null pairs. 

Then, appealing to the definition of $\setS$ and the constant $c_\mathcal{L}(\cdot,\cdot)$ in the total ranking case (see Section~\ref{sec:total_ranking_setS}), Assumption 1 of the main paper reduces to the following inequality being satisfied
\begin{equation}
\frac{\sum_{(a_i,a_j)\in S_1} \mathbb{P}(\hat{\pi}_\text{sub}(a_j)<\hat{\pi}_\text{sub}(a_i))}{\sum_{(a_i,a_j)\in S_2} \mathbb{P}(\hat{\pi}_\text{sub}(a_j)<\hat{\pi}_\text{sub}(a_i))} \geq \frac{|S_1|}{|S_2|}.
\label{eqn:assumpt_1_total_ranking_new}
\end{equation} 
Consider an estimator $\hat{\pi}_\text{sub} = \hat{\pi}_\text{random}$ that randomly selects a total ranking in the space of permutations. Then, for every $i$ and $j$, $\mathbb{P}(\hat{\pi}_\text{sub}(a_j)<\hat{\pi}_\text{sub}(a_i)) = \frac{1}{2}$. Thus, in this case, Assumption 1 in \eqref{eqn:assumpt_1_total_ranking_new} is satisfied with equality. 

It is also straightforward to check that Assumption 2 of the main paper is reduced to
$$\mathbb{P}(\hat{\pi}_\text{sub}(a_j) < \hat{\pi}_\text{sub}(a_i)) \text{ being the same for every }(a_j,a_i) \in S_2.$$

\section{Analysis in the Continuous Examples~\ref*{ex:subspace-estimation} and \ref*{ex:blind-source-separation}}
\label{sec:proof_general_stability}
For notational ease, we let $\hat{x}_\text{base}^{(\ell)} = \hat{x}_\text{base}(\mathcal{D}^{(\ell)})$. Notice that for any $l = 1,2,\dots,B$:
\begin{equation*}
\begin{aligned}
    \mathrm{FD}(\hat{x}_\text{stable},x^\star) &= \rk(\hat{x}_\text{stable}) -\rho(\hat{x}_\text{stable},x^\star)\\
    &= \left[\rk(\hat{x}_\text{stable})-\rho(\hat{x}_\text{stable},\hatx_\text{base}^{(\ell)})\right] + \left[\rk(\hatx_\text{base}^{(\ell)})-\rho(\hatx_\text{base}^{(\ell)},x^\star)\right] + \kappa(\hat{x}_\text{stable},x^\star,\hat{x}_\text{base}^{(\ell)}),
\end{aligned}
\end{equation*}
where $$\kappa(\hat{x}_\text{stable},x^\star,\hat{x}_\text{base}^{(\ell)}) := \rho(\hat{x}^{(\ell)}_\text{base},x^\star)-\rk(\hat{x}_\text{base}^{(\ell)}) + \rho(\hat{x}_\text{stable},\hat{x}_\text{base}^{(\ell)})-\rho(\hat{x}_\text{stable},x^\star).$$

Since the choice of $l$ was arbitrary, we note that:
\begin{equation*}
    \begin{aligned}
        \mathrm{FD}(\hat{x}_\text{stable},x^\star) &= \frac{2}{B}\sum_{\ell=1}^{B/2}\min_{t \in \{0,1\}}\Bigg\{ \left[\rk(\hat{x}_\text{stable})-\rho(\hat{x}_\text{stable},\hatx_\text{base}^{(2\ell-t)})\right] + \left[\rk(\hatx_\text{base}^{(2\ell-t)})-\rho(\hatx_\text{base}^{(2\ell-t)},x^\star)\right] + \kappa(\hat{x}_\text{stable},x^\star,\hat{x}_\text{base}^{(2\ell-t)})\Bigg\}\\
        &\leq \frac{2}{B}\sum_{\ell=1}^{B/2}\min_{t \in \{0,1\}}\Bigg\{ \left[\rk(\hatx_\text{base}^{(2\ell-t)})-\rho(\hatx_\text{base}^{(2\ell-t)},x^\star)\right]\Bigg\}+\frac{2}{B}\sum_{\ell=1}^B \left[\rk(\hat{x}_\text{stable})-\rho(\hat{x}_\text{stable},\hatx_\text{base}^{(\ell)})\right]\\ &+ \frac{2}{B}\sum_{\ell=1}^B\kappa(\hat{x}_\text{stable},x^\star,\hat{x}_\text{base}^{(\ell)})\\
        &\leq \frac{2}{B}\sum_{\ell=1}^{B/2}\prod_{t\in\{0,1\}} \sqrt{\rk(\hatx_\text{base}^{(2\ell-t)})-\rho(\hatx_\text{base}^{(2\ell-t)},x^\star)}+ 2\alpha\rk(\hat{x}_\text{stable}) + \frac{2}{B}\sum_{\ell=1}^B\kappa(\hat{x}_\text{stable},x^\star,\hat{x}_\text{base}^{(\ell)}).
    \end{aligned}
\end{equation*}
Here, the second inequality follows from $\min\{a+b,c+d\} \leq \min\{a,c\}+b+d$ for $a,b,c,d\geq 0$. The third inequality follows from $\min\{a,b\} \leq \sqrt{ab}$ for $a,b \geq 0$ and 
\begin{equation}
    \begin{aligned}
        \frac{1}{B}\sum_{\ell=1}^B \rk(\hat{x}_\text{stable})-\rho(\hat{x}_\text{stable},\hat{x}_\text{base}^{(\ell)}) =  \sum_{k=1}^{\rk(\hat{x}_\text{stable})}\frac{1}{B}\sum_{\ell=1}^B 1-[\rho(x_k,\hatx_\text{base}^{(\ell)})-\rho(x_{k-1},\hatx_\text{base}^{(\ell)})] \leq \alpha\rk(\hat{x}_\text{stable}),
    \end{aligned}
    \label{eq:bound_rk_bags}
\end{equation}
where $(x_0,x_1,\dots,x_{\hat{k}})$ is a sequence specifying a path from the least element $x_0$ to $x_{\hat{k}} = \hat{x}_\text{stable}$ with  $\rk(\hat{x}_\text{stable}) = \hat{k}$. Thus, $\frac{1}{B}\sum_{\ell=1}^B \rho(\hat{x}_\text{stable},\hat{x}_\text{base}^{(\ell)}) \geq (1-\alpha)\rk(\hatx_\text{stable})$. As $\rho(\hat{x}_\text{stable},\hat{x}_\text{base}^{(\ell)}) \leq \rk(\hat{x}_\text{base}^{(\ell)})$, we can then conclude that $\mathbb{E}[\rk(\hat{x}_\text{stable})] \leq \frac{\mathbb{E}[\rk(\hat{x}_\text{sub})]}{1-\alpha}$. Taking expectations and using the fact that the data across complementary bags is IID, we obtain:
\begin{equation*}
\begin{aligned}
\mathrm{FD}(\hat{x}_\text{stable},x^\star) \leq \mathbb{E}[\sqrt{\FD(\hat{x}_\text{sub},x^\star)}]^2 + \frac{2\alpha}{1-\alpha}\mathbb{E}[\rk(\hat{x}_\text{sub})] + \frac{2}{B}\sum_{\ell=1}^B\mathbb{E}[\kappa(\hat{x}_\text{stable},x^\star,\hat{x}_\text{base}^{(\ell)})].
\end{aligned}
\end{equation*}
It remains to bound $\frac{2}{B}\sum_{\ell=1}^B\mathbb{E}[\kappa(\hat{x}_\text{stable},x^\star,\hat{x}_\text{base}^{(\ell)})]$ for subspace selection and blind-source separation.\\[0.2in]

\noindent\textbf{Subspace-selection}: We will use the similarity valuation $\rho := \rho_\text{\text{subspace}}$ in $\mathrm{Eq.}$ (3). Note that:
    \begin{equation}
        \begin{aligned}
            \rk(x)-\rho(x,y) = \tr\left(\Proj_x\Proj_{y^\perp}\right) &= \tr\left(\Proj_x\Proj_z\Proj_{y^\perp}\Proj_z\right)+\tr\left(\Proj_x\Proj_{z^\perp}\Proj_{y^\perp}\Proj_{z^\perp}\right)\\&+\tr\left(\Proj_x\Proj_{z^\perp}\Proj_{y^\perp}\Proj_z\right)+\tr\left(\Proj_x\Proj_z\Proj_{y^\perp}\Proj_{z^\perp}\right)\\
            &\leq \tr\left(\Proj_{y^\perp}\Proj_z\right)+\tr\left(\Proj_x\Proj_{z^\perp}\right) + \tr\left(\left[\Proj_x,\Proj_{z^\perp}\right]\left[\Proj_z,\Proj_{y^\perp}\right] \right) \\
            &= \rk(z)-\rho(y,z) + \rk(x)-\rho(x,z) + \tr\left(\left[\Proj_x,\Proj_{z^\perp}\right]\left[\Proj_z,\Proj_{y^\perp}\right] \right).
        \end{aligned}
        \label{eqn:bound_1}
    \end{equation}
Here, for matrices $A,B \in \mathbb{R}^{p \times p}$, $[A,B] = AB - BA$ represents the commutator.  
Furthermore, note that:
\begin{equation}
\begin{aligned}
\tr\left(\left[\Proj_x,\Proj_{z^\perp}\right]\left[\Proj_z,\Proj_{y^\perp}\right] \right)&\leq \|\left[\Proj_x,\Proj_{z^\perp}\right]\|_\star\|\left[\Proj_z,\Proj_{y^\perp}\right]\|_2\\
            &\leq 2\sqrt{\rk(x)}\sqrt{\rk(x)-\rho(x,z)} \|\left[\Proj_z,\Proj_{y}\right]\|_2.
\end{aligned}
\label{eqn:bound_2}
\end{equation}
Combining the bounds \eqref{eqn:bound_1} and \eqref{eqn:bound_2}, we find that: 
\begin{equation*}
\begin{aligned}
\rk(x)-\rho(x,y) &\leq \rk(z)-\rho(y,z) + \rk(x)-\rho(x,z) + 2\sqrt{\rk(x)}\sqrt{\rk(x)-\rho(x,z)} \|\left[\Proj_z,\Proj_{y}\right]\|_2 \\
&\leq  \rk(z)-\rho(y,z) + \rk(x)-\rho(x,z) + \sqrt{\rk(x)}\sqrt{\rk(x)-\rho(x,z)}.
\label{eqn:bound_3}
\end{aligned}
\end{equation*}
Here, the second inequality follows from the fact that for projection matrices $A$ and $B$, $\|[A,B]\|_2 \leq 1/2$. From this inequality, we conclude that in the subspace selection setting, 
\begin{equation*}
\begin{aligned}
\frac{1}{B}\sum_{\ell=1}^B \kappa(\hat{x}_\text{stable},x^\star,\hat{x}_\text{base}^{(\ell)}) &\leq \sqrt{\rk(\hat{x}_\text{stable})}\frac{1}{B}\sum_{l = 1}^{B}\sqrt{\rk(\hat{x}_\text{stable})-\rho(\hat{x}_\text{stable},\hat{x}_\text{base}^{(\ell)})} \\
&\leq \sqrt{\rk(\hat{x}_\text{stable})}\sqrt{\frac{1}{B}\sum_{l = 1}^{B}\rk(\hat{x}_\text{stable})-\rho(\hat{x}_\text{stable},\hat{x}_\text{base}^{(\ell)})} \\
&\leq \sqrt{\alpha}\rk(\hat{x}_\text{stable}).
\end{aligned}
\end{equation*}
Here, the second equality follows from Cauchy-Schwartz and the last inequality follows from the bound \eqref{eq:bound_rk_bags}. Recalling that $\mathbb{E}[\rk(\hat{x}_\text{stable})] \leq \frac{\mathbb{E}[\rk(\hat{x}_\text{sub})]}{1-\alpha}$, we obtain the final bound:
\begin{equation*}
\begin{aligned}
\mathrm{FD}(\hat{x}_\text{stable},x^\star) \leq \mathbb{E}[\sqrt{\FD(\hat{x}_\text{sub},x^\star)}]^2 + \frac{2\alpha+\sqrt{\alpha}}{1-\alpha}\mathbb{E}[\rk(\hat{x}_\text{sub})].
\end{aligned}
\end{equation*}

\noindent\textbf{Blind-source separation} We will use the similarity valuation $\rho := \rho_\text{\text{source-separation}}$ in $\mathrm{Eq.}$ (4). For simplicity of notation, associated with any element $z\in\Lp$, we consider a block-diagonal $p^2 \times p^2$ projection matrix where each $p\times{p}$ block is a projection matrix of the subspace spanned by a vector in $z$. We denote this projection matrix $\Proj_z$. Then, $\rho(x,y) = \max_{\Pi \in \mathbb{S}^{p^2}_{\text{block}}} \tr\left(\Proj_x\Pi\Proj_{y}\Pi^T\right)$ where $\mathbb{S}^{p^2}_{\text{block}}$ is the space of $p^2 \times p^2$ permutation matrices that are block-diagonal where each block is of size $p\times{p}$. \\

Note that:
\begin{equation*}
    \begin{aligned}
        \rk(x)-\rho(x,y) &= \min_{\Pi\in\mathbb{S}^{p^2}_\text{block}} \tr\left(\Proj_x\Pi\Proj_{y^\perp}\Pi^T\right)\\
&\leq\min_{\tilde{\Pi}\in\mathbb{S}^{p^2}_\text{block}}\min_{\Pi\in\mathbb{S}^{p^2}_\text{block}} \tr\left(\Pi\Proj_{y^\perp}\Pi^T\tilde{\Pi}\Proj_{z}\tilde{\Pi}^T\right)+\tr\left(\Proj_x\tilde{\Pi}\Proj_{z^\perp}\tilde{\Pi}^T\right)\\       
        &~~~~~~~~~~~~~~~~~~~~~+2\sqrt{\rk(x)}\sqrt{\tr\left(\Proj_x\tilde{\Pi}\Proj_{z^\perp}\tilde{\Pi}^T\right)}\|[\tilde{\Pi}\Proj_{z}\tilde{\Pi}^T,\Pi\Proj_{y}\Pi^T]\|_2 \\
        &\leq \min_{\tilde{\Pi}\in\mathbb{S}^{p^2}_\text{block}}\tr\left(\Proj_x\tilde{\Pi}\Proj_{z^\perp}\tilde{\Pi}^T\right)+2\sqrt{\rk(x)}\sqrt{\tr\left(\Proj_x\tilde{\Pi}\Proj_{z^\perp}\tilde{\Pi}^T\right)}\max_{\bar{\Pi},\bar{\tilde{\Pi}}\in\mathbb{S}^{p^2}_\text{block}}\|[\bar{\tilde{\Pi}}\Proj_{z}\bar{\tilde{\Pi}}^T,\bar{\Pi}\Proj_{y}\bar{\Pi}^T]\|_2 \\
&~~~~~+\max_{\tilde{\Pi}\in\mathbb{S}^p_\text{block}}\min_{\Pi\in\mathbb{S}^p_\text{block}}\tr\left(\Pi(\id-\Proj_{y})\Pi^T\tilde{\Pi}\Proj_{z}\tilde{\Pi}^T\right)\\ 
\\
        &= [\rk(x)-\rho(x,z] + [\rk(z)-\rho(z,y)]\\&+2\sqrt{\rk(x)}\sqrt{\rk(x)-\rho(x,z)} \max_{\bar{\Pi},\bar{\tilde{\Pi}}\in\mathbb{S}^{p^2}_\text{block}}\|[\bar{\tilde{\Pi}}\Proj_{z}\bar{\tilde{\Pi}}^T,\bar{\Pi}\Proj_{y}\bar{\Pi}^T]\|_2.
    \end{aligned}
\end{equation*}
Here, the first inequality follows from a similar analysis as arriving to \eqref{eqn:bound_1} in subspace selection. The second inequality follows from the fact that $\min_{a,b} f(a)+g(b) \leq \min_a f(a) + \max_b f(b)$. Note that projection matrices $A,B$, $[A,B] \leq \frac{1}{2}$. Then, following the same exact reasoning as the subspace case, we have that in the blind-source separation setting $\frac{1}{B}\sum_{\ell=1}^B \kappa(\hat{x}_\text{stable},x^\star,\hat{x}_\text{base}^{(\ell)}) \leq \sqrt{\alpha}\rk(\hat{x}_\text{sub})$. The result follows subsequently.

\section{Proof of Lemmas~\ref*{lemma:discrete_integer}-\ref*{lemma:two_covering_pairs}}
\label{sec:proof_of_main_paper_lemmas}
\begin{proof}[Proof of Lemma~\ref*{lemma:discrete_integer}]
Recall the telescoping sum decomposition $\mathrm{Eq}.$ (5) that $\FD(x_k,x^\star) = \sum_{i=1}^k 1-[f(x_{i-1},x_i;x^\star)]$. From the first property of similarity valuation that it yields non-negative values, second property of similarity valuation that $\rho(x,y)\leq \rho(z,y)$ for $x \preceq z$, and that the $\rho$ is an integer-valued similarity valuation, we have that $\FD(x,x^\star) \leq \sum_{i=1}^k \mathbb{I}[(x_{i-1},x_i) \in \mathcal{T}_{\text{null}}]$.
\end{proof}
\begin{proof}[Proof of Lemma~\ref*{lemma:two_covering_pairs}]
 For any covering pairs $(x,y)$ and $(u,v)$ with $v \preceq x$, we cannot have that $f(x,y; z) = f(u,v; z)$ for all $z \in \Lp$. Suppose as a point of contradiction that for every $z \in \Lp$, $f(x,y;z)=f(u,v;z)$. Let $z = v$. Then, by the third property of a similarity valuation (see Definition~\ref*{defn:rho_prop}), $\rho(u,z) = \rk(u)$ and $\rho(v,z)= \rk(v)$; thus, for this choice of $z$, $f(u,v;z)=1$. On the other hand, again by the third property of a similarity valuation and for the choice of $z=v$, since $u \preceq v \preceq x \preceq y$,  $\rho(x,z)=\rho(y,z)=\rk(v)$ and thus $f(x,y;z) = 0$.  
 \end{proof}

\end{document}